\let\oldvec\vec
\let\vec\oldvec
\newsavebox{\tempfig}
\begin{document}
%
\title{Ride the Lightning: The Game Theory of Payment Channels}
%
\titlerunning{Ride the Lightning}
%
\author{Zeta Avarikioti \and
Lioba Heimbach \and Yuyi Wang \and Roger Wattenhofer}
\authorrunning{Z. Avarikioti et al.}
%
\institute{ETH Z{\"u}rich \\
\email{\{zetavar,hlioba,yuwang,wattenhofer\}@ethz.ch}
 }
\maketitle              
\begin{abstract}
Payment channels were introduced to solve various eminent cryptocurrency scalability issues. Multiple payment channels build a network on top of a blockchain, the so-called layer 2. 
In this work, we analyze payment networks through the lens of network creation games.
We identify betweenness and closeness centrality as central concepts regarding payment networks. 
We study the topologies that emerge when players act selfishly and determine the parameter space in which they constitute a Nash equilibrium. Moreover, we determine the social optima depending on the correlation of betweenness and closeness centrality. When possible, we bound the price of anarchy. We also briefly discuss the price of stability.


\keywords{blockchain\and payment channels\and layer 2\and creation game\and network design\and Nash equilibrium\and price of anarchy\and price of stability}
\end{abstract}
\section{Introduction} 

\subsection{Motivation}
Bitcoin~\cite{nakamoto2008bitcoin} and other cryptocurrencies~\cite{wood2014ethereum,hopwood2016zcash,vansaberhagen2013cryptonote} are electrifying the world. Thanks to a distributed data structure known as the blockchain, cryptocurrencies can execute financial transactions without a trusted central authority. However, every computer participating in a blockchain must exchange, store and verify each and every transaction, and as such the transaction throughput of blockchains is embarrassingly low. The Bitcoin blockchain for instance does not process more than seven transactions per second.

With seven transactions per second, Bitcoin cannot rival established payment systems such as Visa, WeChatPay, or PayPal. Consequently, various research groups have proposed a blockchain paradigm shift
-- \textit{payment channels} \cite{spilman2013channels,DW2015channels, poon2015lightning}. All payment channels follow the same basic principle: Instead of sending every transaction to the blockchain, transactions are only exchanged between the involved parties. If Alice and Bob expect to exchange multiple payments, they can establish a payment channel. The channel is set up with a blockchain funding transaction. Once the channel is available, Alice and Bob exchange all payments directly, by sending each other digitally signed payment messages. If Bob tries to cheat Alice, Alice can show the signed payment messages as a proof to the blockchain, using the original funding transaction as security.

Instead of establishing a payment channel to every other person and company in the world, thanks to a technique called Hash Time Locked Contracts (HTLCs)~\cite{bitcoinwiki:htlcs, DW2015channels,poon2015lightning}, payments can also be sent atomically through a path of payment channels. 
More precisely, each payment channel is now an edge in a \textit{payment network}, and payments will be routed along a path of payment channels in the payment network. Such a payment network is called the layer 2 of the blockchain, the blockchain itself being the layer 1.

The payment channels/networks have many significant advantages over vanilla blockchains: 
With payment channels, the transaction throughput becomes unlimited, as each transaction is only seen by the nodes on the path between sender and receiver of a payment. This is  like sending a packet in the internet instead of sending every packet to a central server. Solving the throughput problem will also drastically decrease transaction fees. In addition, payments will be instantaneous, as one does not have to wait multiple minutes before the blockchain verifies a transaction. Payment networks also allow for more privacy as transactions are only seen by the parties involved. On the negative side, to set up a channel, the channel owner(s) must lock some capital. However, whenever a payment channel routes a transaction on behalf of other parties, the channel owner(s) can collect a transaction fee.

Payment networks are currently a hot topic in blockchain research. In practice, the first payment networks have been deployed, and are being actively used. Prominent examples are Bitcoin's Lightning network~\cite{poon2015lightning,deckereltoo} with more than 30,000 active channels, or Ethereum's Raiden network~\cite{raiden2017}.

As Bitcoin's Lightning network is growing quickly, we need to understand these newly forming payment networks.    
Which channels will be created, and what will the network topology eventually look like? 
Network creation games~\cite{fabrikant2003on} are a perfect tool to understand these questions, since they capture the degradation of the network's efficiency when participants act selfishly.  
    
    In a network creation game, the incentive of a player is to minimize her cost by choosing to whom she connects. 
    In our model, players weigh the benefits they receive from using payment channels against the channels' creation cost, and selfishly initiate connections to minimize their cost. 
    There are two types of benefits for each player: (i) the forwarding fees she receives for the transactions she routed through her channels, (ii) the reduced cost for routing her transactions through the payment network in comparison to publishing the transactions on the blockchain (blockchain fee). On the other hand, establishing a channel costs the blockchain fee. Thus, a player has to balance all these factors to decide which channels to establish to minimize her cost.
    Our goal is to gain a meaningful insight on the network structures that will emerge and evaluate their efficiency, in comparison to centralized structures designed by a central authority that previous work has shown to be almost optimal.
    
    \subsection{Our Contributions}
    In this work, we provide a game-theoretic approach to analyze the creation of blockchain payment networks.
    Specifically, we adopt betweenness centrality, a natural measure for fees a player is expected to receive by forwarding others' transactions on a path of payment channels. On the other hand, we employ closeness centrality as an intuitive proxy for the transaction fees encountered when executing transactions through other players in the network. We reflect the cost of payment channel creation by associating a price with link creation.
    Therefore, we also generalize previous work on network creation games as our model combines both betweenness and closeness centralities.
    
    Under this model, we study the topologies that emerge when players act selfishly. A specific network structure is considered a Nash equilibrium when no player can decrease her cost by unilaterally changing her connections. We examine various such structures and determine the parameter space in which they constitute a Nash equilibrium. Moreover, we determine the social optima depending on the correlation of betweenness and closeness centrality. When possible, we bound the price of anarchy, the ratio of the social costs of the worst Nash equilibrium and the social optimum~\cite{koutsoupias1999worst}, to obtain insight into the effects of lack of coordination in payment networks when players act selfishly. Furthermore, we briefly discuss the price of stability, the ratio of the social costs of the best Nash equilibrium and the social optimum~\cite{anshelevich2008price}, specifically concerning the parameter values that most accurately represent blockchain payment networks.
        
    The omitted proofs can be found in Appendix \ref{app:proofs}.
     
\subsection{Related Work}
Various payment channel protocols have been proposed in literature~\cite{poon2015lightning,DW2015channels,green2017bolt,spilman2013channels,avarikioti2019brick,Miller2017sprites,lind2019teechain,avarikioti2020cerberus}, all presenting different solutions on how to create payment channels. However, our work is independent of the channel construction specifications and thus applies to all such solutions.
        
Payment networks have been studied from an algorithmic (not game theoretic) viewpoint by Avarikioti et al.~\cite{Avarikioti2018payment,Avarikioti2018algorithmic}. In \cite{Avarikioti2018payment}, they examined the optimal graph structure and fee assignment to maximize the profit of a central authority that creates the payment network and bears the relevant costs and benefits. Furthermore, in \cite{Avarikioti2018algorithmic}, they investigated the online and offline computation of a capital-efficient payment network for a central authority. In contrast, our work studies the decentralized payment network design, where the network is created by multiple participants and not a single authority. This model reflects more accurately the currently operating payment networks, which are indeed created by thousands of users rather than a single company, following the decentralized philosophy of cryptocurrencies like Bitcoin.
        
Network creation games were originally introduced by Fabrikant et al.~\cite{fabrikant2003on}.
In their game, referred to as sum network creation game, a player unilaterally creates links to minimize the sum of distances to other players in the network (closeness centrality). Later, Albers et al.~\cite{albers2014nash} improved the upper bound for the price of anarchy and also examined a weighted network creation game. While these works solely focus on a player's closeness centrality, our model is more complex and additionally includes another metric, the players' betweenness centrality that represents the importance of a player in the network.
      
In parallel, network creation games were expanded to various settings. The idea of bilateral link creation was introduced by Corbo and Parkes~\cite{corbo2005price}. Demaine et al.~\cite{demaine2012price} devise the max game, where players try to minimize their maximum distance to any other player in the game. Intrinsic properties of peer-to-peer networks are taken into account in the network creation variation conceived by Moscibroda et al.~\cite{moscibroda2006topologies,moscibroda2011topological}. The idea of bounded budget network creation games was proposed by Ehsani et al.~\cite{ehsani2015bounded}. In bounded budget network creation games, players have a fixed budget to establish links. Nodes strive to minimize their stretch, the ratio between the distance of two nodes in a graph, and their direct distance. Moreover, Àlvarez et al.~\cite{alvarez2016celebrity} introduced the celebrity game, where players try to keep influential nodes within a fixed distance. However, the objectives in all these games give little insight to the control a player has over a network. This control is desired by players in blockchain payment networks to maximize the fees received for routing transactions, in essence their betweenness centrality.
       
A bounded budget betweenness centrality game was introduced by Bei et al.~\cite{bei2011bounded}. Given a budget to create links, players attempt to maximize their betweenness centrality. Due to their complexity, betweenness network creation games yield limited theoretical results, in comparison to those of the sum network creation game, for instance. In contrast to our work, a players closeness centrality is not taken into account in \cite{bei2011bounded}. Thus, this model is insufficient for our purpose since it does not consider how strategically connected is a player that wants to route many transactions through the payment network.
 
Buechel and Buskens \cite{buechel2013dynamics} compare betweenness and closeness centralities; however, not in a network creation game setting, as their notion of stability does not lead to Nash equilibria. We, on the other hand, study the combination of betweenness and closeness incentives in a network creation game setting.

\section{Preliminaries and Model}
 
In this section, we first introduce the essential background and assumptions for our payment network creation game, and then we introduce the necessary notation and the game-theoretic model.
    
\subsection{Payment Networks}
        
Payment channels operate on top of the blockchain (Layer 2) and allow instant off-chain transactions. Generally, a channel is set up by two parties that deposit capital in a joint account on the blockchain. The channel can then be used to make arbitrarily many transactions without committing each transaction to the blockchain. When opening a channel, the parties pay a blockchain fee and place capital in the channel. The blockchain fee is the transaction fee to the miner, paid to have the transaction included in a block and thereby published on the blockchain.  The deposited capital funds future channel transactions and is not available for other transactions on the blockchain during the channel's lifetime. 

In our model, we assume a player single-handedly initiates a channel to a subset of other players. Incoming channels are always accepted and once installed, the channels can be used to send money in both directions (from sender to receiver, and vice versa). While any player can typically choose the amount to lock in a channel, we assume that the locked capital placed in all channels is high enough to be modeled as unlimited. In particular, we assume that all players are major (large companies, financial institutions etc.) that have thus access to large amounts of temporary capital. It is natural to assume only major players to participate in the network creation game. Typically, a market is created when there is demand for a service. Thus eventually, the payment network will be dominated by service providers that will individually connect with clients and act as intermediaries for all transactions. In this work, we only consider the flow of transactions through these service providers.
Therefore, the cost of opening a channel in our model solely reflects the permanent cost, i.e.\ the blockchain fee, and is set to 1 (wlog). Furthermore, since we assume major players only, the transactions between the players can be considered uniform.
         
In addition to enabling parties connected by a channel to exchange funds off-chain, payment channels can also be used to route off-chain transactions between a sender and receiver pair not directly connected by a channel. Transactions between the sender and receiver can be routed securely through a path of channels. Since we assume that all channels are funded with unlimited capital, the channel funds cannot deplete, and so any path in the payment network between sender and receiver is viable.
        
Together, the payment channels form a payment network. In the network, players receive a payment when transactions are routed through their channels. This payment is a transaction fee, which is typically proportional to the value of the routed transaction, to compensate the intermediate node for the loss of her channel's capital capacity. However, we consider a fixed fee for all nodes, independent of the routed value, since we assume unlimited channel capital.

\subsection{Formal Model}

A payment network can be formally expressed by an unweighted undirected graph consisting of $V$ nodes, representing the set of players, and $E$ edges, representing the set of payment channel between the players.

A payment network game consists of $n$ players $V = \{0, 1, \dots , n-1\}$, denoted by $[n]$. The strategy of player $u$ expresses the channels she chooses to open and is denoted by $s_u$, and the set $S_u = 2^{[n]-\{u\} }$ defines $u$'s strategy space. 
We denote by $G[s]$ the underlying undirected graph of $G_0[s] = \left([n], \bigcup  _{u \in [n]} \{u\} \times s_u\right)$, where $s =(s_0, \dots , s_{n-1}) \in S_0 \times \dots \times S_{n-1}$ is a strategy combination. 
Note that while a channel can possibly be created by both endpoints, this will never be the case in a Nash equilibrium.

\subsubsection{Betweenness centrality.}
The fees received by a player for providing gateway services to other players' transactions  are modeled by her betweenness centrality. Betweenness centrality was first introduced as a measure of a player's importance in a social network by Freeman et al.~\cite{freeman1978centrality}. According to \cite{freeman1978centrality}, the betweenness centrality of a player $u$ in a graph $G(V,E)$ is 
$\sum \limits _{\substack{s,r \in V \\  s \neq r\neq u, m(s,r) > 0}} \dfrac{m_u (s,r)}{m (s,r)},$
where $m_u (s,r)$ is the number of shortest paths between sender $s$ and receiver $r$ that route through player $u$ and $m(s,r)$ is the total number of shortest paths between $s$ and $r$. Additionally, $s \neq r\neq u$ indicates that $s \neq r$, $s \neq u$ and $r \neq u$. Intuitively, the betweenness centrality of player $u$ is a measure of the expected number of sender and receiver pairs that would choose to route their transactions through her in a payment network. Providing an insight into the transaction fees a player is expected to receive, the betweenness centrality lends itself to reflect the motivation of a player in a payment network to maximize the payments secured through providing transaction gateway services.

However, in our model, the betweenness of player $u$ is measured as follows:
$$\text{betweenness}_u(s) = (n-1)(n-2) - \sum \limits _{\substack{s,r \in [n]: \\  s \neq r\neq u, m(s,r) > 0}} \dfrac{m_u (s,r)}{m (s,r)}.$$ 
We subtract $u$'s betweenness centrality, as defined by Freeman et al.~\cite{freeman1978centrality}, from her maximum possible betweenness centrality to ensure that the social cost is always positive - avoiding cases where price of anarchy is undefined.

\subsubsection{Closeness centrality.}
Furthermore, we model the fees encountered by a player when having her transactions routed through the network with her closeness centrality. Closeness centrality measures the sum of distances of player $u$ to all other players and is given by
        $$\text{closeness}_u(s) = \sum \limits _{r \in [n]-u} \left(d_{G[s]}(u,r) -1\right),$$
        for a player $u$, where $d_{G[s]}(u,r)$ is the distance between $u$ and $r$ in the graph $G[s]$. With the transaction fees fixed per edge in our model, the distance to a player $r$ estimates the costs encountered by player $u$ when sending a transaction to player $r$. Therefore, the sum of distances to all other players is a natural proxy for the fees $u$ faces for making transactions when assuming uniform transactions. 

Thus, the combination of betweenness and closeness centralities accurately encapsulates the incentives inherent to players in a blockchain payment network. 

\subsubsection{Cost.} The cost of player $u$ under the strategy combination $s$ is
   $\text{cost}_u(s) = \lvert s_u \rvert + b \cdot \text{betweenness}_u(s) + c \cdot  \text{closeness}_u(s),$
    where $b \geq 0$ is the betweenness weight and $c >0$ the closeness weight.
    Letting $c>0$ ensures that the graph is always connected, as a player's cost is infinite in a disconnected graph.   
    Additionally, the model assumes the same price for all nodes and embeds this into coefficients $b$ and $c$. While this does not exactly encapsulate reality, it is a reasonable assumption. Different paths offer similar services to payers. In such a setting, Bertrand competition~\cite{bertrand1883book} suggests that competition will drive the prices from different players to be within a close region of each other. 

\subsubsection{Social optimum.}  The objective of player $u$ is to minimize her cost,
    $\min_{s_u} \text{cost}_u(s)$. 
    The social cost is the sum off all players' costs,
    $\text{cost}(s) = \sum _{u \in [n]} \text{cost}_u(s)$. 
    Thus, the social optimum is  $\min  _{s} \text{cost}(s)$.

\section{Payment Network Creation Game}
    To gain an insight into the efficiency of emerging topologies when players act selfishly, we will first analyze the social optimum for our model. After studying if and when prominent graphs are Nash equilibria, we conclude by bounding the price of anarchy and  the price of stability.
    
    \subsection{Social Optimum}
        By the definition of the cost function, the social cost is
        $$\text{cost}(s) =\sum \limits _{u \in [n]} \text{cost}_u(s)= \lvert E(G) \rvert + b  \sum \limits _{u \in [n]} \text{betweenness}_u(s) + c \sum \limits _{u \in [n]} \text{closeness}_u(s),$$
        for any graph where no channel is paid by both endpoints. This constraint is met for all Nash equilibria. To lower bound the social cost, we will first simplify the social cost expression. Lemma~\ref{lem:secondlemma} shows how to express the social cost directly in terms of the number of edges and the sum of the players' closeness centrality costs, facilitating further analysis.
        
        \begin{restatable}[]{lemma}{socialcost}\label{lem:secondlemma}
            The social cost in $G$ is given by 
            $ \text{cost}(s) = \lvert E(G) \rvert  + b \cdot n\cdot (n-1)(n-2)+ (c- b) \cdot \sum \limits _{u \in [n]} \text{closeness}_u(s)$.
        \end{restatable}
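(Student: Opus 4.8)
The plan is to start from the definition of the social cost,
$$\text{cost}(s) = \lvert E(G)\rvert + b\sum_{u\in[n]}\text{betweenness}_u(s) + c\sum_{u\in[n]}\text{closeness}_u(s),$$
and reduce everything to a single identity, namely that $\sum_{u\in[n]}\text{betweenness}_u(s) = n(n-1)(n-2) - \sum_{u\in[n]}\text{closeness}_u(s)$. Substituting this identity into the cost expression and collecting the closeness terms with coefficient $c-b$ then yields the claimed formula directly. Since $c>0$ makes the cost infinite on any disconnected graph, it suffices to treat connected graphs, where $m(s,r)>0$ for every pair $s\neq r$, so the side condition $m(s,r)>0$ in the betweenness sum can be dropped without loss.

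Unfolding the definition of betweenness gives $\sum_{u}\text{betweenness}_u(s) = n(n-1)(n-2) - \sum_{u}\sum_{s,r\in[n],\,s\neq r\neq u}\tfrac{m_u(s,r)}{m(s,r)}$, so the whole task reduces to showing
$$\sum_{u\in[n]}\ \sum_{\substack{s,r\in[n]\\ s\neq r\neq u}}\frac{m_u(s,r)}{m(s,r)} \;=\; \sum_{u\in[n]}\text{closeness}_u(s).$$
First I would swap the order of summation and group by the ordered pair $(s,r)$ with $s\neq r$, rewriting the left-hand side as $\sum_{s\neq r}\tfrac{1}{m(s,r)}\sum_{u\neq s,r}m_u(s,r)$. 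The key combinatorial step is a handshaking-type observation: summing $m_u(s,r)$ over all internal candidates $u$ counts each of the $m(s,r)$ shortest $s$–$r$ paths once for every internal vertex it contains, so $\sum_{u\neq s,r}m_u(s,r) = \sum_{P}(\text{number of internal vertices of }P)$, where $P$ ranges over the shortest $s$–$r$ paths. Since every shortest $s$–$r$ path has exactly $d_{G[s]}(s,r)-1$ internal vertices, this equals $m(s,r)\cdot\bigl(d_{G[s]}(s,r)-1\bigr)$, hence $\sum_{u\neq s,r}\tfrac{m_u(s,r)}{m(s,r)} = d_{G[s]}(s,r)-1$.

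Plugging this back in, the left-hand side becomes $\sum_{s\neq r}\bigl(d_{G[s]}(s,r)-1\bigr) = \sum_{s\in[n]}\sum_{r\in[n]-s}\bigl(d_{G[s]}(s,r)-1\bigr) = \sum_{s\in[n]}\text{closeness}_s(s)$, which is exactly the desired identity. Combining, $\sum_{u}\text{betweenness}_u(s) = n(n-1)(n-2) - \sum_{u}\text{closeness}_u(s)$; substituting this into the cost expression and merging the two closeness sums gives $\text{cost}(s) = \lvert E(G)\rvert + b\cdot n\cdot(n-1)(n-2) + (c-b)\sum_{u\in[n]}\text{closeness}_u(s)$, as claimed. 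The only point requiring care is the path-counting identity — making precise that $\sum_u m_u(s,r)$ reorganizes into a sum over shortest paths weighted by their number of internal vertices — but this is a routine double-counting argument and presents no real obstacle.
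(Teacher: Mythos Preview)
Your argument is correct and lands on the same identity the paper uses, namely that the total Freeman betweenness equals the total closeness, i.e.\ $\sum_u\sum_{s\neq r\neq u}\tfrac{m_u(s,r)}{m(s,r)}=\sum_u\text{closeness}_u(s)$. The difference is only in how this identity is obtained: the paper imports it as a black box (Lemma~\ref{lem:firstlemma}, the relation $\overline{B}(G)=(n-1)(\overline{l}(G)-1)$ from~\cite{gago2007betweenness}) and then unwinds averages back into sums, whereas you prove it directly by the double-counting observation that $\sum_{u\neq s,r}m_u(s,r)=m(s,r)\bigl(d_{G[s]}(s,r)-1\bigr)$. Your route is more self-contained and arguably cleaner, since it avoids passing through averages and an external citation; the paper's route is shorter on the page because the work is delegated to the reference. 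Substantively the two proofs are the same.
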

        
    The distance of a vertex $v$ of a connected graph $G$ is $d(v) := \sum_{u\in [n]-v} d_G(v,u)$. The distance of a connected graph $G$ is $d(G) := \sum_{v\in [n]} d(v)/2$. If $G$ is not connected, then $d(v) = \infty$ for any $v$, and $d(G) = \infty$. 
        
        \begin{lemma}[Theorem 2.3~\cite{entringer1976distance}]\label{lem:thirdlemma}
            If  $G$ is a connected graph with $n$ vertices and $k$ edges then $n\cdot (n - 1)  \leq   d(G)  +  k \leq   \frac{1}{6}\cdot \left( n^3 - 5 \cdot n -6 \right)$.
        \end{lemma}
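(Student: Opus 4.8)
\emph{Proof plan for Lemma~\ref{lem:thirdlemma}.} I would establish the two bounds separately, relying throughout on the identity $2\,d(G)=\sum_{v\in[n]}\sum_{u\in[n]-v}d_G(u,v)$, which is immediate from the definition of $d(G)$; equivalently, $d(G)$ counts $d_G(u,v)$ once over each unordered pair of distinct vertices.

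\emph{Lower bound.} I would classify the $n(n-1)$ ordered pairs of distinct vertices by adjacency: exactly $2k$ of them are joined by an edge and so lie at distance $1$, while each of the remaining $n(n-1)-2k$ pairs has distance at least $2$ since $G$ is connected. Hence $2\,d(G)\ge 2k\cdot 1+(n(n-1)-2k)\cdot 2=2n(n-1)-2k$, which rearranges to $d(G)+k\ge n(n-1)$ (with equality precisely when $G$ has diameter at most $2$).

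\emph{Upper bound.} The crucial observation is a trade-off between the number of edges and the total distance: if $G'$ is obtained from a connected graph $G$ by adding one edge, then no pairwise distance increases and the distance between the two endpoints of the new edge drops from at least $2$ to $1$, so $d(G')\le d(G)-1$ while the edge count grows by $1$; hence $d(G')+k'\le d(G)+k$. Deleting non-bridge edges of $G$ down to a spanning tree $T$ therefore yields $d(G)+k\le d(T)+(n-1)$, and it remains to maximize $d(T)$ over all $n$-vertex trees. I would use the classical fact that the unique maximizer is the path $P_n$, with $d(P_n)=\binom{n+1}{3}$; for a self-contained derivation I would combine the tree identity $d(T)=\sum_{e}n_1(e)\,n_2(e)$ --- where $n_1(e)$ and $n_2(e)=n-n_1(e)$ are the sizes of the two components of $T-e$ --- with a standard branch-shifting move, namely detaching a branch at a vertex of degree at least $3$ and reattaching it at the end of a longest pendant path, which increases $\sum_e n_1(e)\,n_2(e)$. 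Putting the pieces together gives $d(G)+k\le\binom{n+1}{3}+(n-1)$, which equals $\tfrac16(n^3+5n-6)$ --- the cubic in the statement, with the sign of its linear term corrected (the case $n=2$, where $d(G)+k=2$, already rules out a negative upper bound).

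I expect the main obstacle to be the last step of the upper bound --- showing that the path maximizes the total distance among all $n$-vertex trees; both the reduction to a spanning tree and the whole lower bound are short counting arguments. A secondary point is to handle the small-$n$ base cases carefully, so that the path is confirmed to attain the cubic bound.
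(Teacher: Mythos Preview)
The paper does not give its own proof of Lemma~\ref{lem:thirdlemma}; it is quoted from Entringer--Jackson--Snyder~\cite{entringer1976distance}, and the only consequence invoked later is that the path graph attains the upper bound (used in the proof of Theorem~\ref{thm:firsttheorem}). So there is no in-paper argument to compare against beyond that citation.

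Your plan is correct on both halves. The lower-bound count is exactly the ``non-adjacent pairs are at distance $\ge 2$'' estimate the paper itself deploys in the proof of Theorem~\ref{thm:firsttheorem} for $c\ge b$. For the upper bound, the monotonicity $d(G')+k'\le d(G)+k$ under edge addition, the reduction to a spanning tree, and the extremality of $P_n$ among $n$-vertex trees form the standard route (and essentially the one in~\cite{entringer1976distance}). Your branch-shifting sketch is a valid strategy, though the claim that the shift strictly increases $\sum_e n_1(e)n_2(e)$ needs a careful choice of which branch to move; if you want to avoid that case analysis, the leaf-removal recursion $d(T)=d(T-v)+(n-1)+d_{T-v}(u)$ together with the BFS-layer bound $d_{T'}(u')\le\binom{|T'|}{2}$ gives a shorter induction.

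You are also right about the sign: the path already gives $d(P_n)+(n-1)=\binom{n+1}{3}+(n-1)=\tfrac16(n^3+5n-6)$, so the upper bound should read $\tfrac16(n^3+5n-6)$ rather than $\tfrac16(n^3-5n-6)$. This corrected value is in fact what the paper tacitly uses when it evaluates the cost of the path graph in the proof of Theorem~\ref{thm:firsttheorem}.
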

        Lemma~\ref{lem:thirdlemma} provides bounds for the distance of a graph $G$, 
        $$d(G) = \dfrac{1}{2} \sum  _{u \in [n]}  \sum  _{r \in [n]-u} d_{G}(u,r)$$ which is
        useful for finding the social optimum for our game.
        
        \begin{figure}[!]
           
            \begin{wrapfigure}{r}{0.5\textwidth} 
            \vspace{-30pt}
            \centering
            \includegraphics[width = 0.4\textwidth]{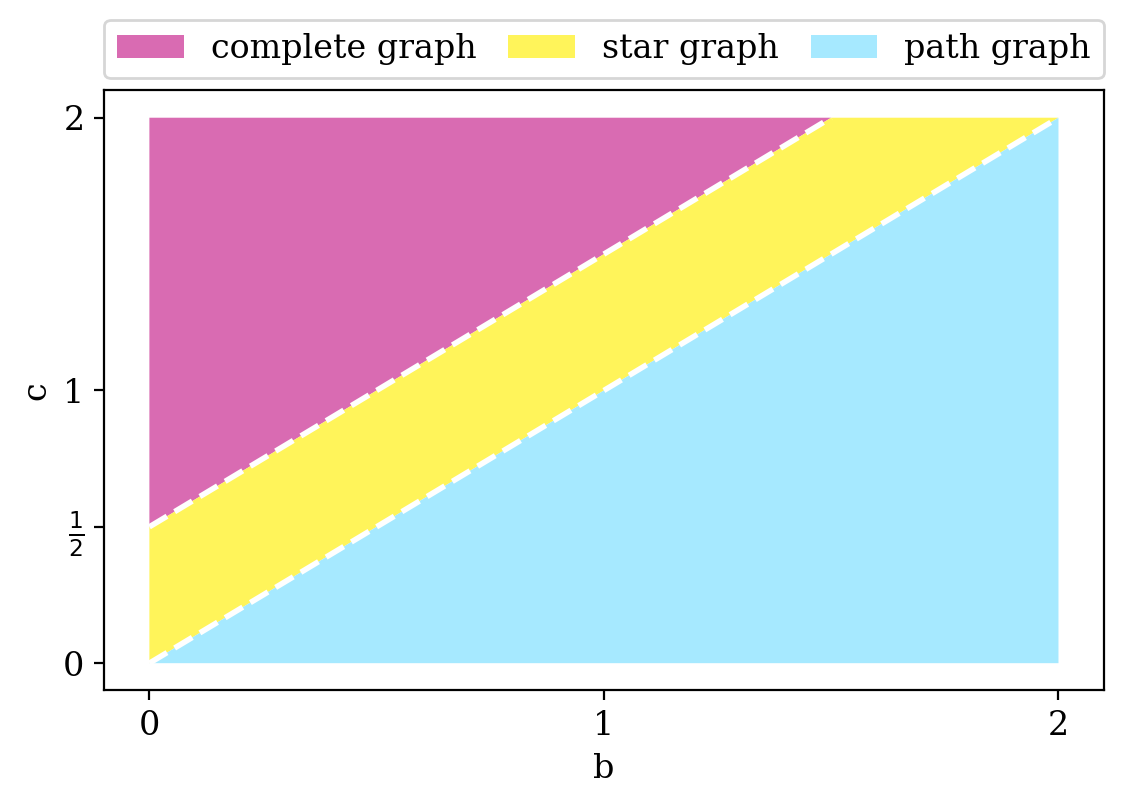}
            \caption{\small Parameter map for social optimum.}
            \label{fig:socialoptimum}
        \end{wrapfigure}
        
        \normalsize
        
        In \cite{entringer1976distance} Lemma~\ref{lem:thirdlemma} is proven and stated that the path graph achieves the upper bound; maximizes the distance term. This can be used to find the social optimum. Dependent on the weights $b$ and $c$, the social optimum for our payment network creation game is given in Theorem~\ref{thm:firsttheorem}, and illustrated in Figure~\ref{fig:socialoptimum}.
        \end{figure}
        
        \begin{restatable}{theorem}{socialopt}\label{thm:firsttheorem}
            The social optimum is a complete graph for $c > \frac{1}{2} + b$, a star graph for $b \leq c \leq \frac{1}{2} + b$ and a path graph for $c < b$. 
        \end{restatable}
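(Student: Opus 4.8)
The plan is to reduce the social cost to a function of two graph parameters and then optimise over the three natural regimes of $c-b$. First, using $d(G)=\tfrac12\sum_{v\in[n]}\sum_{r\neq v}d_G(v,r)$ together with $\text{closeness}_u(s)=\sum_{r\neq u}(d_{G[s]}(u,r)-1)$, one obtains the identity $\sum_{u\in[n]}\text{closeness}_u(s)=2\,d(G)-n(n-1)$. Substituting this into Lemma~\ref{lem:secondlemma} yields
$$\text{cost}(s)=|E(G)|+2(c-b)\,d(G)+\bigl(b\,n(n-1)(n-2)-(c-b)\,n(n-1)\bigr),$$
and since the parenthesised quantity does not depend on $G$, minimising the social cost is equivalent to minimising $\phi(G):=|E(G)|+2(c-b)\,d(G)$ over connected graphs on $n$ vertices (disconnected graphs are excluded because $c>0$ makes their cost infinite). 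Write $k:=|E(G)|$, $D:=d(G)$ and $\lambda:=2(c-b)$. The argument then rests on three elementary facts and their equality cases: (i) $k\ge n-1$, with equality iff $G$ is a tree; (ii) $D\ge\binom{n}{2}$, with equality iff $G$ is complete; (iii) $k+D\ge n(n-1)$, with equality iff $\operatorname{diam}(G)\le 2$ --- this is exactly the lower bound of Lemma~\ref{lem:thirdlemma}, and the equality case follows since each non-edge contributes at least $2$ to $D$.

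For $c>b$ both terms of $\phi$ favour sparse, compact graphs. If $\lambda\ge 1$ (that is, $c\ge\tfrac12+b$), I would write $\phi=(k+D)+(\lambda-1)D$ and bound the two pieces by (iii) and (ii), obtaining $\phi\ge n(n-1)+(\lambda-1)\binom{n}{2}$; for $\lambda>1$ equality forces $G$ to be complete, a value attained only by the complete graph. If $0\le\lambda\le 1$ (that is, $b\le c\le\tfrac12+b$), I would instead write $\phi=\lambda(k+D)+(1-\lambda)k$ and bound by (iii) and (i), obtaining $\phi\ge\lambda\,n(n-1)+(1-\lambda)(n-1)$; equality forces $G$ to be a tree of diameter at most $2$, i.e.\ a star (on the boundary lines $\lambda=0$ and $\lambda=1$ the star is merely one of several minimisers --- for $c=b$ any spanning tree works). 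This settles the first two cases.

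For $c<b$ we have $\lambda<0$, so $\phi=k+\lambda D$ is made small by taking $k$ small and $D$ large. Every connected graph satisfies $k\ge n-1$, and by Lemma~\ref{lem:thirdlemma} (and the remark that the path attains the distance maximum) every connected graph satisfies $D\le d(P_n)=\binom{n+1}{3}$; moreover, since removing a cycle edge keeps a graph connected and strictly increases $d(G)$, this distance maximum is attained only by the path. As $P_n$ simultaneously minimises $k$ and maximises $D$, we get $\phi(G)=k+\lambda D\ge (n-1)+\lambda\, d(P_n)=\phi(P_n)$ for every connected $G$, with equality only at $G=P_n$, which gives the path graph in the third regime and completes the proof.

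The reduction and the case split are routine; the part requiring the most care is the equality bookkeeping that pins down the precise optimal graph --- in particular the small fact that a tree of diameter at most $2$ is exactly a star, the uniqueness of the path as the distance-maximiser, and a clean treatment of the two boundary lines $c=b$ and $c=\tfrac12+b$, where the social optimum need not be unique but still includes the graph named in the statement.
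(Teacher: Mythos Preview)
Your proof is correct and follows essentially the same route as the paper's: both reduce via Lemma~\ref{lem:secondlemma} to minimising $|E(G)|+2(c-b)\,d(G)$, then for $c\ge b$ exploit the diameter-$2$ bound $|E|+d(G)\ge n(n-1)$ (Lemma~\ref{lem:thirdlemma}'s lower inequality) to single out the star or the complete graph according to the sign of $1-2(c-b)$, and for $c<b$ combine $|E|\ge n-1$ with the path being the distance-maximiser (Lemma~\ref{lem:thirdlemma}'s upper inequality) to obtain the path. The only cosmetic difference is that the paper handles the $c\ge b$ regime with one inequality and a sign split on the $|E|$-coefficient, whereas you write two separate convex decompositions; your treatment of the boundary lines and equality cases is slightly more explicit than the paper's.
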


        In areas most accurately describing payment networks, we expect the weights $b$ and $c$ to be smaller than the cost of channel creation and close to each other. For these cases, we observe the star graph is the social optimum.
        
    \subsection{Nash Equilibria}\label{chap:Nash} 
        To find a Nash equilibrium, one could follow a naive approach: start with a fixed graph structure and then continuously compute a player's best response in the game.     
        However, Theorem~\ref{thm:secondtheorem}  shows that it is NP-hard to calculate a player's best response. 
        \begin{restatable}[]{theorem}{np}\label{thm:secondtheorem}
            Given a strategy $s \in S_0 \times \dots  \times S_{n-1}$ and $u \in [n]$, it is NP-hard to computed the best response of $u$.
        \end{restatable}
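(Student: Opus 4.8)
The plan is to give a polynomial-time reduction from \textsc{Dominating Set}, which is NP-hard: given a graph $H=(V_H,E_H)$ with $|V_H|=N$ and an integer $k$, decide whether $H$ has a dominating set of size at most $k$. From such an instance I would construct a payment network game on $n=N+1$ players --- the vertices of $V_H$ together with one fresh player $u$ --- where the fixed strategies of the $V_H$-players are chosen so that the channels they open are exactly the edges of $H$ (orient every edge of $H$ arbitrarily and let one of its endpoints be its owner), and $u$ owns nothing. Since $u$'s strategy space is $2^{[n]-\{u\}}$, a response of $u$ is simply a set $D\subseteq V_H$ of vertices to which $u$ opens a channel, and I would set the betweenness weight to $b=0$ (permitted since $b\ge 0$) and fix any rational $c\in(\tfrac12,1)$.

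The core step is to read off $\mathrm{cost}_u$ as a function of $D$. In the resulting graph one checks $d(u,v)=1$ for $v\in D$ and $d(u,v)=1+\min_{w\in D}d_H(w,v)$ otherwise, hence $\mathrm{closeness}_u=\sum_{v\notin D}\min_{w\in D}d_H(w,v)$ and, with $b=0$, $\mathrm{cost}_u=|D|+c\sum_{v\notin D}\min_{w\in D}d_H(w,v)$. I would then prove two facts. First, an optimal $D$ must be a dominating set of $H$: if some $v\notin D$ had $\min_{w\in D}d_H(w,v)\ge 2$, then passing from $D$ to $D\cup\{v\}$ raises $|D|$ by $1$, deletes a summand of size at least $2$, and cannot increase any other summand, a net change of at most $1-2c<0$, contradicting optimality. (If $D$ fails to meet some component of $H$ its cost is infinite, so it is not optimal; thus the argument also covers disconnected $H$.) Second, once $D$ is dominating, every remaining summand equals $1$, so $\mathrm{cost}_u=cN+(1-c)|D|$, which is strictly increasing in $|D|$ because $c<1$; hence the best response of $u$ is a \emph{minimum} dominating set and its cost is exactly $cN+(1-c)\gamma(H)$, where $\gamma(H)$ is the domination number. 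Therefore $\gamma(H)\le k$ if and only if $u$'s best response has cost at most the threshold $cN+(1-c)k$, a polynomially sized rational; an efficient best-response algorithm would then decide \textsc{Dominating Set}.

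The step I expect to be the main obstacle is controlling the betweenness term, because for $b>0$ the additional value $u$ derives from lying on shortest paths between vertices of $D$ that are far apart in $H$ entangles the optimization in an inconvenient way. The clean route is to prove the theorem for $b=0$, which already suffices to rule out the naive ``iterated best response'' approach; if a statement over a broader parameter range is wanted, I would instead ``dilute'' $H$ --- for instance replace each vertex by a long private pendant path before running the reduction --- so that the dominating-set structure still dictates the closeness term while the betweenness contribution is forced down to an easily bounded lower-order quantity. A lesser point to watch is the exact window for $c$: the exchange argument needs $c>\tfrac12$, and $c<1$ is needed so that $D=V_H$ is not trivially optimal, so the sharpest clean statement is for $b=0$ and any fixed $c\in(\tfrac12,1)$.
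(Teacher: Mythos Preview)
Your proposal is correct and follows essentially the same approach as the paper: both set $b=0$ and pick $c\in(\tfrac12,1)$, add a fresh player $u$ with no incoming links to the vertices of an input graph $H$, and show that $u$'s best response is a minimum dominating set of $H$. Your write-up is in fact more detailed than the paper's sketch (which is adapted from Fabrikant et~al.), and your exchange argument that any optimal $D$ must dominate $H$ makes precise what the paper states only informally.
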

        
        Therefore, with this in mind, we analyze prominent graph topologies theoretically, to see if and when they are Nash equilibria in our game. The results are illustrated in Figure~\ref{fig:parametermaps}. However, complementary to the theoretical analysis we also run a simulation to get insights into emerging graph topologies for a small number of players.
        
        \begin{figure}[hbt!]
            \begin{subfigure}[b]{0.32\textwidth}
                \savebox{\tempfig}{\includegraphics[width = \linewidth]{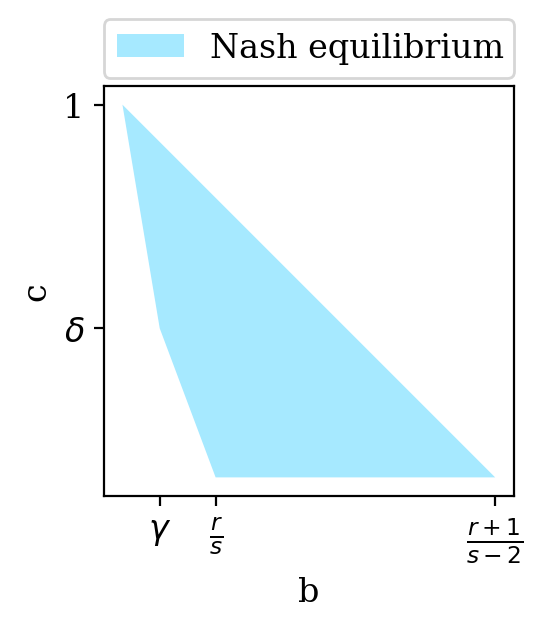}}
                \raisebox{\dimexpr\ht\tempfig-\height}{\includegraphics[width = \linewidth]{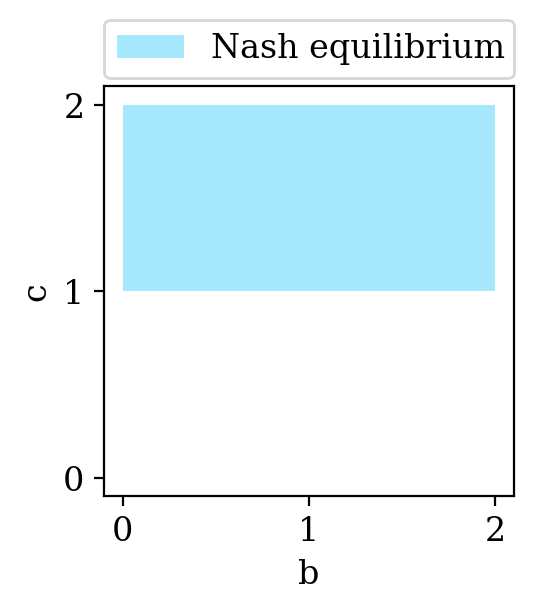}}
               
                \caption{complete graph\label{fig:completegraph}}
            \end{subfigure}
            \hfill
            \begin{subfigure}[b]{0.32\textwidth}  
                \includegraphics[width = \linewidth]{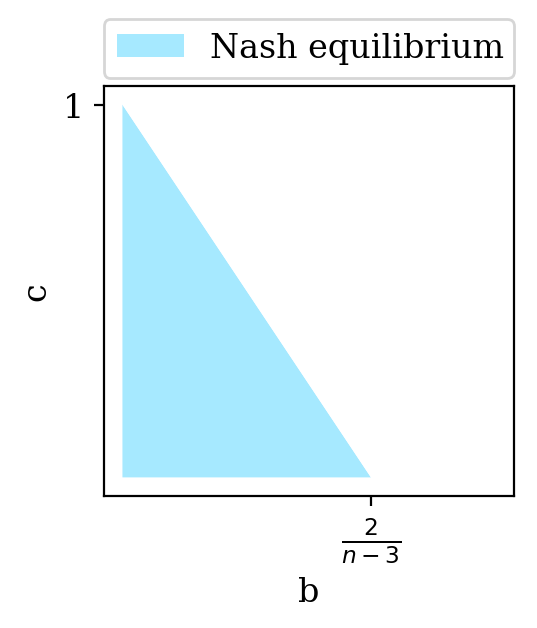}
                \caption{star graph ($n \geq 4$)\label{fig:stargraph}}
            \end{subfigure}
            \hfill
            \begin{subfigure}[b]{0.32\textwidth}
                \includegraphics[width = \linewidth]{completeBipartiteGraph.png}
                \caption{biclique \label{fig:biclique}}
            \end{subfigure}
            \caption{\small Parameter map for prominent graphs. In Figure \ref{fig:biclique}, $r$ and $s$ are the subset sizes ($3 \leq r \leq s$). With $\alpha  = \frac{s \cdot (s-1)}{r \cdot (s-2)}$ and $\beta = \frac{1}{s-r+1} \left( \frac{s\cdot (s-1)}{r} - \frac{(r-2)(r-1)}{s+1} \right)$, $(\gamma, \delta )$ is the intersection between  $1 =  \frac{s}{r} b + \frac{s+r-3}{s-1} c$ and $1 = \min \left\{\alpha , \beta \right\} \cdot  b + c $.}
            \label{fig:parametermaps}
        \end{figure} 
        
        \subsubsection*{Complete Graph.}
            For large values of $c$ the complete graph is the only Nash equilibrium as stated in Theorem~\ref{thm:thridtheorem}.
            Additionally, the complete graph is also a Nash equilibrium for $c =1$, but it is not necessarily the only one. 
            However, for small values of $c$, which are the values we expect to encounter in a payment network creation game, the complete graph is not a Nash equilibrium, as stated in Theorem~\ref{thm:fourththeorem}.

            \begin{restatable}{theorem}{cliqueone}\label{thm:thridtheorem}
                For $c>1$, the only Nash equilibrium is the complete graph. 
            \end{restatable}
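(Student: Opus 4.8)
The plan is to prove the two directions of the statement separately: that the complete graph $K_n$ is a Nash equilibrium whenever $c>1$, and that in this regime no other graph is one. For the \emph{uniqueness} direction I would argue by contradiction. Suppose $G=G[s]$ is a Nash equilibrium that is not complete. Since $c>0$ forces connectivity, there is a non-adjacent pair $u,v$ with $d_G(u,v)\ge 2$, and since in an equilibrium no edge is bought by both endpoints we have $v\notin s_u$. I would let $u$ deviate to $s_u'=s_u\cup\{v\}$ and compare the three terms of $\text{cost}_u$ on the graph $G'$ obtained from $G$ by adding the edge $\{u,v\}$. The link term grows by exactly $1$. The closeness term strictly decreases: $d(u,v)$ drops from at least $2$ to $1$ and adding an edge cannot increase any other distance out of $u$, so $\text{closeness}_u$ falls by at least $1$, i.e.\ this term contributes at most $-c$. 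The crux is to show the betweenness term does not increase, i.e.\ that $u$'s Freeman betweenness does not drop when she adds an edge incident to herself. Granting this, $\text{cost}_u(s')-\text{cost}_u(s)\le 1-c<0$, contradicting the equilibrium property.

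To establish the betweenness monotonicity I would fix an arbitrary pair $p,q$ with $p\ne q$ and $p,q\ne u$ and note that any shortest $p$--$q$ path in $G'$ that is not already a shortest path in $G$ must use the new edge $\{u,v\}$ and therefore pass through $u$. Hence, if $d_{G'}(p,q)<d_{G}(p,q)$ then every shortest $p$--$q$ path in $G'$ runs through $u$ and $m_u(p,q)/m(p,q)$ becomes $1$; and if $d_{G'}(p,q)=d_{G}(p,q)$ the shortest-path set only gains $j\ge 0$ new paths, all through $u$, so the ratio moves from $\frac{m_u}{m}$ to $\frac{m_u+j}{m+j}\ge\frac{m_u}{m}$ since $m_u\le m$. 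Summing over all pairs shows $u$'s Freeman betweenness is non-decreasing, so the model's $\text{betweenness}_u$, which is $(n-1)(n-2)$ minus the Freeman value, is non-increasing, exactly as the previous paragraph needs. The same reasoning in reverse shows that \emph{removing} edges incident to $u$ cannot decrease $\text{betweenness}_u$, which I will reuse below.

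For the \emph{existence} direction I would fix any profile with $G[s]=K_n$ and no doubly-bought edge and verify that no player $u$ profits from deviating. A deviation can only delete edges $u$ owns and/or re-buy already-existing edges; re-buying only wastes link cost, so it suffices to consider deleting a set of $t\ge 1$ edges at $u$. If this disconnects $u$ her cost becomes infinite. Otherwise the diameter stays $2$, each deleted neighbour now lies at distance exactly $2$ from $u$ (there is still a common neighbour, since otherwise $u$ would have owned all her incident edges and the deletion would have disconnected her) while every other distance out of $u$ stays $1$, so $\text{closeness}_u$ increases by exactly $t$; and since only edges at $u$ were deleted, every pair $p,q\ne u$ still has a unique shortest path of length $1$ avoiding $u$, so $\text{betweenness}_u$ is unchanged. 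The net cost change is $-t+c\,t=t(c-1)>0$, so the deviation is never profitable and $K_n$ is a Nash equilibrium; combined with the uniqueness direction this gives the theorem.

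I expect the only genuinely delicate point to be the betweenness-monotonicity claim and its reverse; everything else is bookkeeping of the link and closeness terms, together with a few degenerate cases — small $n$, or $u$ owning every edge incident to her so that deleting her edges disconnects her — that only ever make a deviation worse rather than better.
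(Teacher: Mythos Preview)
Your proof is correct and follows the same approach as the paper: show that adding an edge incident to a player never increases her betweenness cost, so whenever $c>1$ any missing edge gives one of its endpoints a profitable deviation. The paper simply asserts the betweenness-monotonicity claim in one line and does not separately verify that $K_n$ is itself an equilibrium, so your argument---which proves the monotonicity via the shortest-path counting lemma and checks the existence direction explicitly---is strictly more complete while remaining structurally identical.
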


            \begin{restatable}{theorem}{cliquetwo}\label{thm:fourththeorem}
                For $c<1$ and $n \geq 3$, the complete graph is never a  Nash equilibrium. 
            \end{restatable}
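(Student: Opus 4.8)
The plan is to exhibit, in any strategy profile whose graph is the complete graph $K_n$, a player who can strictly lower her cost by dropping a single channel, thereby ruling out the Nash property. First I would compute the three cost terms on $K_n$: every pair of distinct vertices is adjacent, so $d_{G[s]}(u,r)=1$ for all $r\neq u$ and hence $\text{closeness}_u(s)=0$; moreover no shortest path between two other vertices uses $u$ as an interior vertex, so the complementary betweenness attains its maximum, $\text{betweenness}_u(s)=(n-1)(n-2)$. Consequently $\text{cost}_u(s)=|s_u|+b(n-1)(n-2)$ for every $u$, so on $K_n$ a player's cost depends, beyond a constant, only on how many channels she pays for.

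Next I would choose the deviating player. Since all $\binom n2$ edges of $K_n$ must be bought and $\binom n2\ge n$ for $n\ge 3$, some player $u$ has $|s_u|\ge 1$; fix $v\in s_u$. If $v$ also buys the edge $uv$ (i.e.\ $u\in s_v$), then $u$ removing $v$ from $s_u$ leaves $G[s]$ unchanged and saves $1$, already a profitable deviation; so we may assume $u\notin s_v$, so that dropping $v$ deletes the edge $uv$ and yields the connected graph $K_n-uv$. In $K_n-uv$ the only pair no longer at distance $1$ is $\{u,v\}$, whose distance becomes $2$, so $\text{closeness}_u$ increases from $0$ to $1$. For betweenness, every ordered pair $(a,r)$ with $a\neq r$ and $a,r\neq u$ still has its connecting edge present (it is not the deleted edge $uv$) and so is at distance $1$ with no interior vertex; the only vertex pair realised by length-$2$ paths is $\{u,v\}$, which does not enter $u$'s betweenness sum because $u$ is an endpoint. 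Hence $\text{betweenness}_u$ is still $(n-1)(n-2)$.

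Putting this together, the deviation changes $u$'s cost by $-1 + c\cdot 1 + b\cdot 0 = c-1$, which is negative precisely because $c<1$; thus $K_n$ admits a profitable unilateral deviation and cannot be a Nash equilibrium for any $n\ge 3$. I expect the only delicate point to be the betweenness bookkeeping in $K_n-uv$ — verifying that deleting $uv$ creates no new shortest path through $u$ between some other pair — together with the trivial but necessary treatment of an edge that happens to be paid by both endpoints.
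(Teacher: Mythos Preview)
Your proposal is correct and follows essentially the same argument as the paper's proof: show that a player who pays for at least one edge can drop it, leaving her betweenness unchanged while increasing her closeness cost by exactly $c$, for a net cost change of $c-1<0$. Your version is more explicit about why such a player exists and about the betweenness bookkeeping in $K_n-uv$, but the core deviation and the $c-1$ computation are identical to the paper's.
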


           Figure~\ref{fig:completegraph} visualizes the combination of these results, i.e., when the complete graph is a Nash equilibrium in our game. 
           We observe that for some weight combinations the complete graph is both the social optimum and a Nash equilibrium. However, most payment networks are not expected to fall into this area of the parameter space.
            
        \subsubsection*{Path Graph.}\label{sec:path}
            While the path graph is the social optimum for a significant area of the parameter space, we show it can only be a Nash equilibrium for small sets of players. For $n =3$, the path graph is a Nash equilibrium for all $c\leq 1$, as it is the only possible connected graph that is not the complete graph.  
            
            \begin{restatable}[]{proposition}{patha}\label{pro:fristpathproposition}
                For $n=4$, the path graph is a Nash equilibrium if and only if $ 1 \leq b +2 \cdot c$. 
            \end{restatable}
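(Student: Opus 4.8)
The plan is to first determine, up to symmetry, the unique edge–ownership pattern under which the four-vertex path $v_0-v_1-v_2-v_3$ can be a Nash equilibrium, and then to show by a finite case check that no player gains, the whole question collapsing to one linear inequality in $b$ and $c$. The key preliminary observation is a \emph{redirection} argument: at equilibrium a leaf never pays for its own pendant edge. Indeed, suppose the configuration is an equilibrium and $v_0$ owns $\{v_0,v_1\}$; then, no channel being paid twice, the remaining two path edges are held by $v_1,v_2,v_3$, and $v_0$ can move its single edge to $\{v_0,v_2\}$, turning the graph into the star $K_{1,3}$ centered at $v_2$. This leaves $|s_{v_0}|$ and $\text{betweenness}_{v_0}$ unchanged but lowers $\text{closeness}_{v_0}$ from $3$ to $2$, so $\text{cost}_{v_0}$ strictly decreases by $c>0$ -- a contradiction. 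Hence $v_1$ owns $\{v_0,v_1\}$ and $v_2$ owns $\{v_2,v_3\}$; the middle edge $\{v_1,v_2\}$ is owned by $v_1$ or by $v_2$, and the reflection $v_0\leftrightarrow v_3$, $v_1\leftrightarrow v_2$ identifies these cases, so it suffices to treat the single candidate in which $v_1$ owns $\{v_0,v_1\}$ and $\{v_1,v_2\}$ while $v_2$ owns $\{v_2,v_3\}$.

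With this configuration fixed I would record the costs -- the leaves $v_0,v_3$ have $\text{betweenness}=6$, $\text{closeness}=3$ and own nothing, while the internal vertices have $\text{betweenness}=2$, $\text{closeness}=1$, with $|s_{v_1}|=2$ and $|s_{v_2}|=1$ -- and then run through the $2^{3}=8$ pure strategies of each of the four players. For $v_1$ and $v_2$ every edge deletion disconnects the graph, the only connectivity-preserving rewirings are a neutral rewiring to an isomorphic $P_4$ and chord additions producing a triangle-plus-pendant, and the latter raise the cost unless $c>1$. The binding deviation is by a leaf: $v_0$ adding $\{v_0,v_3\}$ turns the path into $C_4$, where $v_0$ pays one extra edge, its $\text{closeness}$ falls to $1$, and its $\text{betweenness}$ drops to $5$ (the two tied shortest paths between $v_1$ and $v_3$ each count $v_0$ with weight $\tfrac{1}{2}$); comparing $\text{cost}_{v_0}$ before and after turns ``$v_0$ does not gain'' into a comparison of the channel price $1$ with $b+2c$, which is the inequality in the statement. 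I would then verify that the remaining leaf options -- adding the chord $\{v_0,v_2\}$ (a triangle-plus-pendant, non-improving iff $c\le\tfrac{1}{2}$) or adding both chords and reaching $K_4$ minus an edge (non-improving iff $b+3c\le 2$) -- impose only conditions implied by the $C_4$ condition, and likewise for $v_3$; the converse is that when the comparison between $1$ and $b+2c$ fails the $C_4$ move is strictly profitable, so the path is not an equilibrium.

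The main obstacle is the betweenness bookkeeping along the deviations: counting shortest paths and handling the ties that arise in $C_4$, in the triangle-plus-pendant, and in $K_4$ minus an edge, and then checking that uniformly over the admissible range of $(b,c)$ the $C_4$-forming move is the binding constraint while every other deviation is dominated by it. Everything after the first step is a finite, if somewhat tedious, verification.
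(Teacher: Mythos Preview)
Your strategy is essentially the paper's: first rule out edge-ownership patterns in which a leaf pays for its pendant edge (the paper phrases this as ``such an endpoint could simply reduce her cost by $c$ through exchanging her current channel with a channel to the player currently two edges away''), then identify the leaf-to-leaf chord as the binding deviation and read off the inequality.

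There is, however, a sign error that you share with the paper. With your own numbers, $\text{cost}_{v_0}(P_4)=6b+3c$ and $\text{cost}_{v_0}(C_4)=1+5b+c$, so ``$v_0$ does not gain'' means $1+5b+c\ge 6b+3c$, i.e.\ $b+2c\le 1$, not $1\le b+2c$. Your subsidiary computations expose the inconsistency: you correctly find that adding $\{v_0,v_2\}$ is non-improving iff $c\le\tfrac12$ and that adding both chords is non-improving iff $b+3c\le 2$, and you then assert these are \emph{implied} by the $C_4$ condition. But $1\le b+2c$ implies neither (take $b=0$, $c=1$), whereas $b+2c\le 1$ implies both, since $b\ge 0$ forces $c\le\tfrac12$ and hence $b+3c=(b+2c)+c\le\tfrac32<2$. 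The same remark applies to your constraint $c\le 1$ for the internal vertices' chord additions. Finally, note that the stated direction $1\le b+2c$ is incompatible with Theorem~\ref{thm:thridtheorem}: for $c>1$ the complete graph is the \emph{only} Nash equilibrium, yet $1\le b+2c$ would then always hold. So the analysis is sound but the inequality in the proposition (and in your write-up) should be reversed to $b+2c\le 1$.
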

            
            \begin{restatable}[]{proposition}{pathb}\label{pro:secondpathproposition}
                For $n=5$, the path graph is a Nash equilibrium if and only if $ 1 \leq 2\cdot b +4\cdot c  $.
            \end{restatable}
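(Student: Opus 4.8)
The plan is to fix the labelling $P_5 = 0\text{-}1\text{-}2\text{-}3\text{-}4$ together with an explicit ownership of its four edges (each edge paid by exactly one endpoint, as holds in any candidate Nash profile) and then verify the equilibrium condition player by player. Two structural observations cut the work down. First, $P_5$ is a tree, so every edge is a bridge; hence any single-player deviation that lowers that player's edge count disconnects the graph and costs infinity, so only two kinds of finite deviation matter: adding chords (raising $|s_u|$) and re-wiring at a fixed edge count. Second, by Theorem~\ref{thm:thridtheorem} the path cannot be a Nash equilibrium once $c>1$, so throughout we may assume $c\le 1$, which also means no player is pulled toward the complete graph and the genuine competition is between $P_5$ and its sparse neighbours. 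The reflection symmetry $i\mapsto 4-i$ then reduces the players to be checked to the endpoint $0$, the off-centre node $1$, and the centre $2$.

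For each of these three players I would enumerate the short, structured list of candidate best responses suggested by the geometry of $P_5$ (the exact best response is \textsc{NP}-hard by Theorem~\ref{thm:secondtheorem}, which is precisely why one argues over this structured list rather than searching all strategies). The strongest temptation is a single chord, and among the chords available to the endpoint $0$ the one to the far endpoint $4$, closing the five-cycle, produces the largest change in distances and is the natural candidate for the binding constraint. I would compute the change in node $0$'s cost as the sum of three contributions: the unit price of the extra edge; the change in its closeness, read off from its distance vector in $P_5$ versus in the modified graph; and the change in its betweenness term $(n-1)(n-2)-\mathrm{Freeman}_0$, obtained by counting, for every ordered pair of the other four vertices, the fraction of shortest paths that pass through $0$. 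Imposing that no player can strictly decrease her cost, and simplifying the dominant constraint, should collapse to the single inequality $1\le 2b+4c$, with the $4c$ arising from the closeness change and the $2b$ from the betweenness change.

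The main obstacle is the betweenness bookkeeping together with confirming that this particular chord is the binding deviation. The betweenness term is delicate because closing the cycle creates several vertex pairs with two shortest paths, so each such pair contributes a fraction rather than a full unit to $\mathrm{Freeman}_0$, and the identical recount must be repeated for node $1$, node $2$, for the shorter chords, and for the fixed-count re-wirings. I would manage this by tabulating, in each candidate graph, the shortest-path structure of all $\binom{5}{2}$ pairs once, and then reading off every player's closeness and betweenness from that single table; this makes it routine to check that the endpoint's far chord yields the weakest (hence binding) constraint and that, under $c\le 1$, every remaining single deviation of every player is dominated by it.

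Finally, for the ``only if'' direction I would establish tightness: when $1>2b+4c$ the deviation isolated above strictly lowers the deviating player's cost, so $P_5$ fails to be a Nash equilibrium, which together with the ``if'' direction gives the stated equivalence. As a consistency check I would specialise the same tabulation to the four-vertex path and confirm it reproduces the threshold $1\le b+2c$ of Proposition~\ref{pro:fristpathproposition}; the clean doubling of coefficients from $b+2c$ to $2b+4c$ as one passes from $P_4$ to $P_5$ is a useful sanity test that the shortest-path accounting scales correctly.
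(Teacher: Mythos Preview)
Your approach matches the paper's: both identify the endpoint's chord to the far endpoint as the binding deviation and extract the threshold from its cost change. The one point the paper handles more explicitly is the choice of edge ownership. You ``fix an explicit ownership'' and then plan to show that re-wirings are dominated by the far chord; the paper instead uses the re-wirings \emph{first} to eliminate ownerships. In any $P_5$ in which an endpoint owns its incident edge, or in which an off-centre node owns the edge towards the centre, that owner has a strictly profitable re-wiring for every $b,c>0$ (saving $2c$, respectively $c$), so such profiles are never Nash regardless of parameters. Only the symmetric ownership $1\to 0,\ 2\to 1,\ 2\to 3,\ 3\to 4$ survives this filter, and it is only for this profile that the far chord is the binding constraint. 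If you happened to fix a different ownership up front, the re-wirings would not be dominated---they would be profitable unconditionally---and your plan of collapsing everything to a single inequality would fail. Once the correct ownership is in place, your symmetry reduction to players $0,1,2$ and the far-chord computation coincide with the paper's argument.
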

            
            Propositions~\ref{pro:fristpathproposition} and~\ref{pro:secondpathproposition} identify when the path graph is a Nash equilibrium for networks with four and five players respectively. These bounds partly overlap with areas in which the path graph is the social optimum. While this partial correspondence between the Nash equilibrium and social optimum appears promising for the coordination of our game, Theorem~\ref{thm:paththeorem} suggests to the contrary.
            
            \begin{restatable}{theorem}{paththeorem}\label{thm:paththeorem}
                For $n \geq 6$, the path graph is never a Nash equilibrium.
            \end{restatable}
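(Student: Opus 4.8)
The plan is to argue by contradiction. Suppose that for some $n\ge 6$ the path $P_n$, with vertices relabelled $0,1,\dots,n-1$ in path order (so the edges are $\{j,j+1\}$), underlies a Nash equilibrium $s$. Recall that in a Nash equilibrium every edge of $G[s]$ is paid for by exactly one of its two endpoints. I would exhibit one edge that, under this labelling, cannot be ``owned'' by either of its endpoints without giving that endpoint a profitable unilateral deviation --- a contradiction. The deviation is a \emph{re-anchoring} move: a player who pays for an internal path edge $\{i,i+1\}$ drops that edge and instead opens a single channel to a vertex near the middle of the sub-path that would otherwise be cut off. The point is that this move (i) keeps the graph connected --- it stays a tree --- so the cost remains finite, (ii) leaves the number of channels the player creates unchanged, and (iii) crucially leaves her Freeman betweenness, hence $\text{betweenness}_u(s)$, unchanged, while strictly decreasing $\text{closeness}_u(s)$.

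Concretely, I would fix $i$ with $2\le i\le n-4$; such an $i$ exists exactly because $n\ge 6$, and this is the only place the hypothesis is used. Let $e=\{i,i+1\}$ and consider its owner. If player $i$ owns $e$, her deviation drops $e$ and opens a channel to a median vertex $j^{*}$ of $R:=\{i+1,\dots,n-1\}$; since $|R|=n-1-i\ge 3$ we have $j^{*}\ge i+2$, so this is a genuine new channel and $|s_i|$ is unchanged. In the resulting graph $i$ is still a cut vertex separating $\{0,\dots,i-1\}$ from $R$, and since both graphs are trees every shortest path between those two classes still runs through $i$; hence her Freeman betweenness equals $2i(n-1-i)$ before and after, so $\Delta\text{betweenness}_i=0$. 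Her distances to $\{0,\dots,i-1\}$ are unchanged, while her contribution from $R$ drops from $\binom{|R|}{2}$ to $\sum_{k\in R}|j^{*}-k|\le |R|^2/4<\binom{|R|}{2}$ (the last inequality holds for $|R|\ge 3$). Thus $\Delta\text{cost}_i=c\cdot\Delta\text{closeness}_i<0$ since $c>0$, a profitable deviation. If instead player $i+1$ owns $e$, the mirror-image deviation drops $e$ and opens a channel from $i+1$ to a median vertex of $L:=\{0,\dots,i\}$; here $|L|=i+1\ge 3$ because $i\ge 2$, and the same computation --- now with $i+1$ remaining the cut vertex separating $L$ from $\{i+2,\dots,n-1\}$ --- gives $\Delta\text{betweenness}_{i+1}=0$, $|s_{i+1}|$ unchanged, and $\Delta\text{closeness}_{i+1}\le |L|^2/4-\binom{|L|}{2}<0$, hence $\Delta\text{cost}_{i+1}<0$. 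Either way some player strictly improves, contradicting that $s$ is a Nash equilibrium.

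I expect the distance bookkeeping (the closeness terms above) to be routine, and the one point that needs genuine care to be the invariance of betweenness under re-anchoring: one must verify that the deviating player lies on \emph{all} shortest paths between the same pair of vertex classes as before, and on no other shortest paths. This is clean once one observes that both $P_n$ and the post-deviation graph are trees --- so every shortest path is unique --- and that the deviating player is a cut vertex in both, separating exactly the same two vertex sets; paths internal to either side avoid her because her only neighbour on each side is a dead end toward the other side. This is also precisely what pins the restriction $2\le i\le n-4$: only in that range do \emph{both} candidate owners of $\{i,i+1\}$ have a sub-path with at least three vertices to re-anchor into, and this requirement is exactly what fails for $n\le 5$, consistent with Propositions~\ref{pro:fristpathproposition} and~\ref{pro:secondpathproposition}.
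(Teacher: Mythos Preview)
Your proof is correct and follows essentially the same idea as the paper's: locate an interior path edge whose owner can profitably \emph{re-anchor} it to a vertex further along the sub-path it cuts off, observing that in both trees the deviating player remains a cut vertex separating the same two vertex sets, so her betweenness is unchanged while her closeness strictly drops. The only differences are cosmetic --- the paper redirects just one step further (from $v$ to $v$'s neighbour $w$) rather than to the median, and argues the existence of a suitably owned interior edge by an implicit pigeonhole, whereas you fix $\{i,i+1\}$ with $2\le i\le n-4$ and treat both ownership cases explicitly; your treatment of the betweenness invariance is also more carefully spelled out.
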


            Hence, the path graph is not expected to be a Nash equilibrium for payment networks that typically consist of many nodes.
        
        \subsubsection*{Circle Graph.}
            The results we find for the circle graph are similar to those for the path graph. For small values of $n$, the  circle graph can be a Nash equilibrium depending on the weights $b$ and $c$. The circle graph and the complete graph are the same for $n = 3$. Thus, for $n=3$ the circle graph is a Nash equilibrium if and only if $c \geq 1$.
            
            \begin{restatable}[]{proposition}{stara}\label{pro:fristproposition}
                For $n=4$, the circle graph is a Nash equilibrium if and only if $c \leq 1 \leq b +2 \cdot c$. 
            \end{restatable}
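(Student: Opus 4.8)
\emph{Proof plan.} Since $C_4$ has four edges and four vertices, in any Nash equilibrium realizing it no edge is paid by both endpoints (as noted right after the definition of $G[s]$), so the four edges are partitioned among their endpoints, each player owning at most two (necessarily incident) edges. The natural candidate profile is the \emph{cyclic} one, where player $i$ owns the edge $\{i,\,i+1 \bmod 4\}$; the plan is to verify this profile for the ``if'' direction and to reduce the ``only if'' direction to a short case distinction over the admissible ownership partitions. First I would record the baseline quantities in $C_4$: every vertex is at distance $1$ from two vertices and at distance $2$ from the third, so $\text{closeness}_u = 1$, and the only ordered pairs contributing to $u$'s standard betweenness are the two orderings of the antipodal pair, each contributing $\tfrac12$, so with $(n-1)(n-2)=6$ we get $\text{betweenness}_u = 5$. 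Hence in the cyclic profile $\text{cost}_u = 1 + 5b + c$ for every $u$.

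For the ``if'' direction I would, by symmetry, enumerate player $0$'s eight strategies. Changing $s_0$ only adds or removes edges among $\{0,1\},\{0,2\},\{0,3\}$, so the reachable graphs are: $C_4$ itself (with $|s_0|\in\{1,2,3\}$), the path $P_4$ obtained by dropping $\{0,1\}$ (player $0$ becomes an endpoint, so $\text{closeness}_0 = 3$, $\text{betweenness}_0 = 6$), the ``triangle-plus-pendant'' obtained by also adding $\{0,2\}$ ($\text{closeness}_0 = 1$, $\text{betweenness}_0 = 6$), and $C_4$ with the chord $\{0,2\}$ ($\text{closeness}_0 = 0$, $\text{betweenness}_0 = 5$). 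Computing $\text{cost}_0 = |s_0| + b\cdot\text{betweenness}_0 + c\cdot\text{closeness}_0$ in each case and comparing to the baseline $1 + 5b + c$, every option keeping the same topology, and every option that merely inflates $|s_0|$ (e.g.\ by double-paying an existing edge), is dominated because $b\ge 0$ and $c\le 1$; the only two options that can undercut the baseline are dropping $\{0,1\}$ (cost $6b+3c$, profitable iff $b+2c<1$) and adding the chord $\{0,2\}$ (cost $2+5b$, profitable iff $c>1$). Thus under $c\le 1\le b+2c$ no deviation helps, and the cyclic profile is a Nash equilibrium.

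For the ``only if'' direction I would argue contrapositively. If $c>1$, then by Theorem~\ref{thm:thridtheorem} the unique Nash equilibrium is $K_4\neq C_4$. If instead $b+2c<1$, take any profile $s$ with $G[s]=C_4$ and assume it is a Nash equilibrium, so the four edges are partitioned among owners. If two players own two edges each, they must be non-adjacent (otherwise the edge between them is double-paid), so w.l.o.g.\ player $0$ owns $\{0,1\},\{0,3\}$; then dropping $\{0,1\}$ reaches $P_4$ with cost $1+6b+3c < 2+5b+c$. In every other partition some player owns exactly one edge, and dropping it reaches $P_4$ (still connected, with that player an endpoint) with cost $6b+3c < 1+5b+c$. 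Either way there is a strictly profitable deviation, contradicting the Nash property; hence $C_4$ is not a Nash equilibrium when $b+2c<1$. Combining both directions yields the statement.

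The main obstacle is bookkeeping rather than any single hard step: making the enumeration of player $0$'s deviations in the ``if'' direction exhaustive and confirming that each non-binding option is genuinely dominated (in particular the strategies that double-pay an existing edge while deviating, whose only effect is to raise $|s_0|$), and, in the ``only if'' direction, checking that every admissible partition of the four edges among owners indeed leaves some player able to retreat to $P_4$ at a strict profit once $b+2c<1$.
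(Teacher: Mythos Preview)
Your proposal is correct and follows essentially the same approach as the paper: identify the two binding deviations (dropping one cycle edge, which yields $\Delta=-1+b+2c$, and adding the antipodal chord, which yields $\Delta=1-c$) and verify that all other deviations are weakly dominated. Your treatment is in fact a bit more careful than the paper's in two respects: you enumerate all eight strategies of the deviating player rather than grouping them by type, and in the ``only if'' direction you explicitly run through the possible ownership partitions of the four edges, whereas the paper simply observes that profiles in which some player owns two edges admit the extra deviation ``drop both, add the chord'' with $\Delta=-1+b+c$, hence impose a strictly tighter constraint and can be ignored. Both routes arrive at the same two inequalities.
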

            
            \vspace{-11pt}
            \begin{restatable}[]{proposition}{starb}\label{pro:secondproposition}
                For $n=5$, the circle graph is a Nash equilibrium if and only if $b + c \leq 1 \leq 2\cdot b +4\cdot c  $.
            \end{restatable}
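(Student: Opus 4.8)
The plan is to read the statement as: some profile $s$ with $G[s]$ equal to the $5$-cycle $C_5$ admits no cost-decreasing unilateral deviation. Everything rests on one elementary computation in $C_5$: each player $u$ has $\mathrm{closeness}_u(s)=0+1+1+0=2$ and Freeman betweenness $2$ (only the unique length-two path joining $u$'s two non-neighbours passes through $u$), so with $(n-1)(n-2)=12$ we get $\mathrm{betweenness}_u(s)=10$ and $\mathrm{cost}_u(s)=\lvert s_u\rvert+10b+2c$.

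For necessity of $b+c\le 1\le 2b+4c$ I would use two deviations together with a parity remark. As noted in the model, in a Nash equilibrium no edge is paid by both endpoints, so the five edges of $C_5$ are distributed among the five degree-two players, each owning $0$, $1$, or $2$ of her incident edges; since the total is odd, some player $v$ owns exactly one edge $e$. If $v$ drops $e$, the graph becomes the path on the five vertices with $v$ at an end, so $v$'s cost becomes $0+12b+6c$; non-profitability forces $1+10b+2c\le 12b+6c$, i.e.\ $1\le 2b+4c$. Separately, \emph{any} player may keep her edges and add the single chord to one of her two non-neighbours; the graph becomes $C_5$ plus that chord, in which that player's distance to the new neighbour falls by one and her Freeman betweenness rises by one, so her cost changes by $1-b-c$, and non-profitability forces $b+c\le 1$. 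These two moves already pin down both boundary inequalities, independently of the edge ownership.

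For sufficiency I would take the canonical profile $s_i=\{\,i+1\bmod 5\,\}$, which realises $C_5$ and is invariant under the rotation automorphism $i\mapsto i+1$; hence it suffices to treat player $0$, whose environment consists of the fixed edges $\{1,2\},\{2,3\},\{3,4\},\{4,0\}$. Player $0$'s strategy space is $2^{\{1,2,3,4\}}$, but any strategy containing $4$ is strictly dominated by the same strategy without $4$ (the edge $\{0,4\}$ is already present), leaving only the eight subsets of $\{1,2,3\}$. For each of these I would read off the resulting graph, player $0$'s closeness, and her Freeman betweenness, hence her cost: the empty strategy gives $12b+6c$; $\{1,2\}$ and $\{1,3\}$ give $C_5$ plus one chord and cost $2+9b+c$; $\{1,2,3\}$ makes $0$ adjacent to all and gives $3+8b$; $\{2\}$ gives a $4$-cycle with a pendant and cost $1+10b+2c$; $\{3\}$ gives a triangle with a two-edge tail and cost $1+12b+3c$; and $\{2,3\}$ gives cost $2+10b+c$. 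Comparing each with the baseline $1+10b+2c$, the strategy $\{1\}$ is optimal for player $0$ exactly when $1\le 2b+4c$ (forced by the empty strategy) and $b+c\le 1$ (forced by the chord-type strategies, and implying $c\le 1$, which disposes of $\{2,3\}$); the remaining cases are never strictly profitable under these two constraints.

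The routine but error-prone part is this eight-case check in the sufficiency direction: one must correctly form the graph obtained by attaching vertex $0$ to each subset of $\{1,2,3\}$ on top of the fixed path joining $1,2,3,4,0$ in order, and for each correctly count the shortest paths through $0$ to obtain its Freeman betweenness. The necessity direction is short once the parity remark and the two distinguished deviations are in place, and since those two deviations turn out to be exactly the binding ones, the sufficiency direction amounts to verifying that no other deviation is tighter.
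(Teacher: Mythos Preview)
Your proposal is correct and identifies the same two binding deviations as the paper's proof (adding a single chord yields $b+c\le 1$; dropping an edge yields $1\le 2b+4c$). Your treatment is in fact tighter than the paper's: the parity remark guarantees the edge-removal deviation is available in \emph{any} $C_5$-profile, and your exhaustive eight-case enumeration for the canonical profile covers the swap-type deviations that the paper dismisses only informally.
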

            
            Propositions~\ref{pro:fristproposition} and~\ref{pro:secondproposition} show that for small $n$, the circle graph can be a Nash equilibrium depending on the weights $b$ and $c$. However, for large $n$ the circle graph is never a Nash equilibrium, as stated in Theorem~\ref{thm:fifththeorem}. 
                
            \begin{restatable}{theorem}{circle}\label{thm:fifththeorem}
                There exists a $N > 0$, such that for all $n \geq N$ the circle graph is never a Nash equilibrium. 
            \end{restatable}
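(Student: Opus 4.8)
My plan is to show that once $n$ is large, the circle graph $C_n$ has, for \emph{every} vertex, a profitable unilateral deviation, namely opening a single extra channel to the (essentially) antipodal vertex. Fix a player $u$ and let $w$ be a vertex with $d_{C_n}(u,w)=\lfloor n/2\rfloor$. Consider the deviation in which $u$ adds $w$ to $s_u$, producing the graph $C_n$ together with the chord $\{u,w\}$. This deviation only enlarges $s_u$, so it is available regardless of how the cycle edges are paid for, and it suffices to analyze this single modified graph. The change in $u$'s cost is $\Delta\mathrm{cost}_u = 1 + b\cdot\Delta_{\mathrm{btw}} + c\cdot\Delta_{\mathrm{cl}}$, where the leading $1$ is the price of the new channel and $\Delta_{\mathrm{btw}},\Delta_{\mathrm{cl}}$ are the changes in $u$'s betweenness and closeness terms; I will show $\Delta_{\mathrm{btw}}\le 0$ and $\Delta_{\mathrm{cl}}=-\Theta(n^2)$, so that $\Delta\mathrm{cost}_u<0$ for all $n\ge N$ for a suitable $N=N(c)$.

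The closeness term is a routine distance computation. In $C_n$ the distances from $u$ are $1,1,2,2,\ldots$ up to $\lfloor n/2\rfloor$, giving $\mathrm{closeness}_u(C_n)=\tfrac{n^2}{4}-\Theta(n)$. After adding the chord, the distance from $u$ to the vertex at cycle-distance $k$ from $u$ along the arc towards $w$ becomes $\min\{k,\,1+\lfloor n/2\rfloor-k\}$, and summing over both arcs yields $\mathrm{closeness}_u(C_n+\{u,w\})=\tfrac{n^2}{8}-\Theta(n)$. Hence $\Delta_{\mathrm{cl}}=-\tfrac{n^2}{8}+\Theta(n)$, which is at most $-\tfrac{n^2}{16}$ once $n$ is large (the odd-$n$ case is analogous).

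For the betweenness term the key observation is that the new edge $\{u,w\}$ is \emph{incident to $u$}, so every path that uses it passes through $u$, and $u$ is an internal vertex of such a path whenever its endpoints differ from $u$. Fix an ordered pair $(s,r)$ with $s\neq r$ and $s,r\neq u$ and compare $m_u(s,r)/m(s,r)$ before and after. If the chord strictly shortens the $s$--$r$ distance, then every new shortest $s$--$r$ path uses the chord (a non-chord path of the new, shorter length would already have existed), so all new shortest paths pass through $u$ and the ratio becomes $1$. If the distance is unchanged, the new shortest paths are the old ones together with some number $p\ge 0$ of chord-using ones; writing the old ratio as $a/m$ with $a\le m$, the new ratio is $(a+p)/(m+p)\ge a/m$. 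In every case the ratio weakly increases, so $u$'s Freeman betweenness weakly increases, hence $\mathrm{betweenness}_u$ (its complement to $(n-1)(n-2)$) weakly decreases, and since $b\ge 0$ we get $b\cdot\Delta_{\mathrm{btw}}\le 0$.

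Putting the pieces together, $\Delta\mathrm{cost}_u = 1 + b\cdot\Delta_{\mathrm{btw}} + c\cdot\Delta_{\mathrm{cl}} \le 1 - \tfrac{c}{16}n^2 < 0$ for all $n\ge N:=\lceil\sqrt{16/c}\,\rceil+1$, so $C_n$ is not a Nash equilibrium for $n\ge N$. The main obstacle is controlling $\Delta_{\mathrm{btw}}$: a priori the betweenness term ranges over an interval of size $\Theta(n^2)$, so it could conceivably cancel the closeness gain; the incidence observation is exactly what rules this out, collapsing the betweenness analysis into the one-line monotonicity argument above. Everything else is a bounded, elementary distance calculation.
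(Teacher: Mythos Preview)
Your proof is correct and in fact cleaner than the paper's. The paper considers a \emph{swap} deviation: a player with a single outgoing cycle edge removes it and instead links to the antipodal vertex $\lfloor n/2\rfloor$; the authors then compute both the betweenness and the closeness of that player explicitly before and after, arriving at $\Delta\mathrm{cost}=-\bigl(\tfrac{1}{16}n^{2}+o(n^{2})\bigr)(b+c)$. Your \emph{add-only} deviation replaces the betweenness computation by the one-line monotonicity observation that adding an edge incident to $u$ can never lower $u$'s Freeman betweenness; this is a genuine simplification, and it also works for \emph{every} player regardless of how many outgoing cycle edges she owns, sidestepping the paper's tacit restriction to a player with exactly one outgoing edge (a restriction that fails, e.g., when $n$ is even and alternating players pay for both incident edges). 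The trade-off is that your threshold $N\approx\sqrt{16/c}$ depends on $c$ and diverges as $c\to 0$, whereas the paper's swap keeps $\lvert s_u\rvert$ fixed and therefore factors out $(b+c)$, giving an $N$ that is uniform in the parameters---hence closer to the $N=6$ suggested by their simulations. Both arguments are valid for the theorem as stated, since $b$ and $c$ are fixed game parameters.
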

            We note that simulations suggest that for $n \geq 6$ the circle graph is never a Nash equilibrium. Parameter sweeps indicating that $N= 6$ can be found in Appendix~\ref{app:simulationcircle}.
        
        \subsubsection*{Star Graph.}\label{sec:star}
            The star graph is the social optimum for a significant part of our parameter space. In a star graph the player in the center has minimal closeness and betweenness costs; all other players have maximal betweenness cost. While this does not directly appear to be a stable network, Theorem~\ref{thm:star} suggests that the star graph is a Nash equilibrium for smaller values of $b$ and $c$. These results are depicted in Figure~\ref{fig:stargraph}.

            \begin{theorem}\label{thm:star}
                For $n \geq 4$, the star graph is always a Nash equilibrium if and only if $0 \leq 1 -  \frac{n-3}{2} b -c$.
            \end{theorem}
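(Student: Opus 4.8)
The plan is to exhibit the natural strategy profile realizing the star — the center $v_0$ buys all $n-1$ incident edges while every leaf buys nothing (the realization in which each leaf buys its own edge to $v_0$ is handled identically, only shifting each leaf's base cost by $+1$) — and to show this profile is a Nash equilibrium precisely when $0 \leq 1 - \frac{n-3}{2}b - c$. First I would record the relevant centralities in the star $S_n$: each leaf is at distance $1$ from $v_0$ and at distance $2$ from every other leaf, so $\text{closeness}_{v_0}(s) = 0$ and $\text{closeness}_{v_i}(s) = n-2$ for a leaf $v_i$; since $v_0$ lies on the unique shortest path between every pair of leaves while no leaf lies on any shortest path between two other vertices, the Freeman betweenness equals $(n-1)(n-2)$ for $v_0$ and $0$ for each leaf, hence $\text{betweenness}_{v_0}(s) = 0$ and $\text{betweenness}_{v_i}(s) = (n-1)(n-2)$. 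Thus $\text{cost}_{v_0}(s) = n-1$ and $\text{cost}_{v_i}(s) = b(n-1)(n-2) + c(n-2)$.

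Next I would dispose of the center's deviations. Because every leaf buys nothing, any deviation of $v_0$ leaving some leaf isolated disconnects the graph and incurs infinite cost, so $v_0$ is forced to keep buying all $n-1$ edges; and as its betweenness and closeness already attain the global minimum $0$, no alternative connected strategy can lower its cost. Hence $v_0$ has no profitable deviation, for any $b,c$, and the whole condition must come from leaf deviations.

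The core of the argument, and the step I expect to require the most care, is the analysis of leaf deviations. A leaf $v_i$ cannot delete the edge $v_0 v_i$ (owned by $v_0$), and re-buying it is strictly wasteful, so up to such dominated moves every deviation of $v_i$ consists of buying edges to some set $A$ of $k$ other leaves, $0 \leq k \leq n-2$. I would then recompute $v_i$'s cost in the deviated graph: $v_i$ is now at distance $1$ from $v_0$ and from the $k$ leaves in $A$, and at distance $2$ from the other $n-2-k$ leaves, so its closeness becomes $n-2-k$; and the only shortest paths through $v_i$ are those joining two leaves of $A$, each ordered such pair contributing $1/2$ (there being two shortest paths, one via $v_0$ and one via $v_i$), so the Freeman betweenness of $v_i$ becomes $k(k-1)/2$ and $\text{betweenness}_{v_i} = (n-1)(n-2) - k(k-1)/2$. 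Subtracting, the change in $v_i$'s cost is
\[
\Delta(k) = k - \frac{b}{2}\,k(k-1) - ck = k\Bigl( (1-c) - \frac{b}{2}(k-1) \Bigr).
\]
Since $b \geq 0$, the bracketed term is non-increasing in $k$, so $\Delta(k) \geq 0$ for every $k \in \{1, \dots, n-2\}$ if and only if it holds at $k = n-2$, i.e.\ if and only if $(1-c) - \frac{b}{2}(n-3) \geq 0$, which is exactly $0 \leq 1 - \frac{n-3}{2}b - c$.

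Finally I would assemble the two directions: if the inequality holds, neither $v_0$ nor any leaf has a profitable deviation, so the star is a Nash equilibrium; if it fails, then $\Delta(n-2) < 0$ (using $n-2 > 0$), so the leaf deviation that buys edges to all $n-2$ other leaves — turning $v_i$ into a second hub — strictly decreases $v_i$'s cost, and the star is not a Nash equilibrium. The main obstacles all live in the third paragraph: justifying that ``buy edges to $k$ other leaves'' exhausts the relevant deviations, recomputing the betweenness correctly by counting the newly created length-two shortest paths and the resulting ties, and extracting the monotonicity in $k$ that pins down $k = n-2$ as the worst case.
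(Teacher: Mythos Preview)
Your proof is correct and follows essentially the same approach as the paper: fix the center-buys-all realization, note the center cannot profitably deviate, reduce every leaf deviation to ``buy edges to $k$ other leaves,'' and obtain the same $\Delta(k)=k-\tfrac{b}{2}k(k-1)-ck$, minimized at $k=n-2$. One small caveat: your parenthetical that the leaf-buys realization is ``handled identically, only shifting each leaf's base cost by $+1$'' is not accurate---in that realization a leaf can also \emph{delete} its edge to $v_0$ and reconnect elsewhere, and the paper checks this case and observes it yields strictly more restrictive bounds---but since your main argument is carried out entirely for the center-buys-all profile (which is the one giving the stated threshold), this imprecision does not affect the validity of your proof.
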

            \begin{proof} 
                To show that the star is always a Nash equilibrium for $n \geq 4$ and $0 \leq 1-\frac{n-3}{2} b-c $, we will consider a star graph consisting of $n$ players $V = \{0,1, \dots , n-1\}$. Without loss of generality we assume that player $0$ is the center of the star. 
                
                No player in the star graph has an incentive to remove an edge, as this would lead to infinite cost. Thus, player $0$ has no incentive to change strategy, as she is connected to everyone. 
                
                Next we consider star graphs where all links are initiated by player $0$ and star graphs where at least one link is initiated by another player separately.
            	
            	If all links are initiated by player $0$, players $1$, $2$, \dots, $n-1$ are all in an equivalent position and it is therefore sufficient to solely consider player $1$. Player $1$ would only add links, if this leads to a decrease in her cost. Initiating an edge to player $0$ would only increase her cost. Additionally, for the remaining $n-2$ players, it only matters to how many player $1$ connects. The change in cost when adding $m$, where $1 \leq m \leq n-2$, edges is given by 
                    $\Delta \text{cost}_1(\text{add $m$ links}) =  m  -   \dfrac{m \cdot (m-1)}{2} b - m\cdot c .$
                Thus, player $1$ will change strategy if $\Delta \text{cost}_1(\text{add $m$ links})< 0$. The change in cost is minimized for $m= n-2$. 
                
                In star graphs where at least one player other than $0$ initiates a link, players that have no outgoing links are in the same position as those analyzed previously. Thus, it suffices to  consider player $i$, where $i \neq 0$, that has one outgoing link. In addition to only initiating new links, player $i$ can remove the link to player $0$ and initiates $l$, where $1 \leq l \leq n-2$, new links. The change in cost is then given as
                    $\Delta \text{cost}_i(\text{add $l$ links}) =  (l-1)  -   \dfrac{l \cdot (l-1)}{2} b - (l-1)\cdot c .$
                However, this leads to more restrictive bounds and there is no need for players other than player $0$ to have outgoing links. 
                
                Thus, the star is a Nash equilibrium if and only if $0 \leq 1 - \frac{n-3}{2}b- c.$ \hfill \qed 
            \end{proof}
            
            We note that the areas where the star is both a Nash equilibrium and the social optimum overlap partially.
        
        \subsubsection*{Complete Bipartite Graph.}
            The star graph is a complete bipartite graph where one group has size one. In this section, we analyze more general complete bipartite graphs or bicliques $K_{r,s}$, where $r$ is the size of the smaller subset and $s$ is the size of the larger subset. In a complete bipartite graph, every node from one subset is connected to all nodes from the other subset. 
            \begin{theorem}
               The complete bipartite graph $K_{r,s}$ with $3 \leq r \leq s$ is stable if and only if $ \frac{s-2}{r+1} b + c   \leq 1 \leq \min \left\{ \frac{s}{r} b + \frac{s+r-3}{s-1} c, \min \left\{\alpha , \beta \right\} \cdot  b + c\right\} $, where $\alpha  = \frac{s \cdot (s-1)}{r \cdot (s-2)}$ and $\beta = \frac{1}{s-r+1} \left( \frac{s\cdot (s-1)}{r} - \frac{(r-2)(r-1)}{s+1} \right)$. 
            \end{theorem}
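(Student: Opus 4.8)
The plan is to exhibit a single canonical edge ownership for $K_{r,s}$ — every edge is created by its endpoint in the smaller part $R$, so each vertex of $R$ owns its $s$ incident edges and each vertex of $S$ owns none — generalizing the ``center owns everything'' profile used for the star in Theorem~\ref{thm:star}, and to show this profile is a Nash equilibrium exactly when the three stated inequalities hold; as with the star, one checks that every other ownership of $K_{r,s}$ yields only weaker (more restrictive) parameter regions, so this profile governs stability. Throughout I use that in $K_{r,s}$ two vertices on the same side are at distance $2$ and two vertices on opposite sides at distance $1$, that $\mathrm{closeness}_u=r-1$ for $u\in R$ and $s-1$ for $u\in S$, and (in the ordered normalization used by the cost) that a vertex $u\in R$ lies on $1$ of the $r$ shortest paths of each ordered pair within $S$, giving raw betweenness $\tfrac{s(s-1)}{r}$, and symmetrically $\tfrac{r(r-1)}{s}$ for $u\in S$, all other pairs contributing nothing. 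These are the only quantities that move under a deviation, so for each deviation I would tabulate the induced change in edge count, in raw betweenness, and in the sum of distances.

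First I would rule out profitable \emph{additions}. A vertex of $S$ owns nothing, so its only moves are additions, and since all cross edges already exist it can only add $t$ edges inside $S$ with $1\le t\le s-1$; such a move costs $t$, turns $t$ pairs from distance $2$ to $1$ (saving $ct$), and makes $u$ a common neighbour of the $\binom t2$ new pairs, each then having $r+1$ shortest paths, so raw betweenness rises by $\tfrac{t(t-1)}{r+1}$. The net change $t(1-c)-b\tfrac{t(t-1)}{r+1}$ is most negative at $t=s-1$, which yields exactly $\tfrac{s-2}{r+1}b+c\le 1$. The analogous move by a vertex of $R$ adding inside $R$ gives $\tfrac{r-2}{s+1}b+c\le 1$, implied by the previous one since $r\le s$, so the first inequality is the binding ``no addition'' condition.

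The heart of the proof is ruling out profitable \emph{reconfigurations of a vertex $u\in R$}, which owns all $s$ of its edges and may delete any subset of them and buy edges inside $R$. I would parametrise such a move by $(p,q)$: $u$ keeps $p$ of its $S$-edges and buys $q$ edges inside $R$ (so $p=s,\,q>0$ is a pure addition, already handled; $q=0$ a pure deletion; $p=0$ a ``relocation'' onto the $R$-clique). For each $(p,q)$ I would recompute $u$'s distances (a vertex it stops being adjacent to moves to distance $2$ or $3$ according to whether $u$ still has a neighbour on the relevant side) and, more delicately, $u$'s raw betweenness ($u$ stays on $1$ of the now $\le r$ shortest paths of each pair among its kept $S$-neighbours, and newly lies on $1$ of the $s+1$ shortest paths of each pair among its bought $R$-neighbours). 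Carrying this out, the three corner cases give precisely the remaining inequalities: $(p,q)=(0,1)$ (pendant on a vertex of $R$) gives $\tfrac sr b+\tfrac{s+r-3}{s-1}c\ge 1$; $(p,q)=(1,1)$ (keep one neighbour on each side) gives $\alpha b+c\ge 1$; and $(p,q)=(0,r-1)$ (join the $R$-clique) gives $\beta b+c\ge 1$. Then I would show every other $(p,q)$ is dominated: keeping $p\ge 2$ $S$-edges only weakens the $\alpha$-constraint; keeping one $S$-edge while buying $q$ edges gives a coefficient monotone in $q$ whose minimum over the range is $\alpha$; and the family with $p=0$ indexed by $q$ is a family of half-planes in $(b,c)$ whose intersection is controlled by its two extremes $(0,1)$ and $(0,r-1)$ via a convexity argument. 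Combining everything, the profile is a Nash equilibrium iff $\tfrac{s-2}{r+1}b+c\le 1$, $1\le \tfrac sr b+\tfrac{s+r-3}{s-1}c$, and $1\le\min\{\alpha,\beta\}b+c$. For necessity I would, for each violated inequality, point to the very deviation that produced it as a strictly improving move, noting that the additions are available under any ownership and that for the three $R$-reconfigurations the chosen ownership is the pertinent one.

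I expect the main obstacle to be the betweenness bookkeeping in the perturbed graphs — precisely determining, after $u$ deletes and buys edges, which ordered pairs $u$ now lies between and how many shortest paths each has (in particular the ``$+1$'' from $u$ becoming an extra common neighbour of a same-side pair, and the total loss of $u$ from the $S$-pair paths as soon as $u$ keeps fewer than two $S$-neighbours) — together with the two-parameter optimisation showing the corners $(0,1)$, $(1,1)$, $(0,r-1)$ dominate. A secondary difficulty is that the necessity direction hinges on the canonical profile being the least restrictive realisation of $K_{r,s}$, so that the stated inequalities characterise stability rather than merely the stability of one profile.
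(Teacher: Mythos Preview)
Your approach is essentially the paper's: the same canonical ownership (each vertex of the smaller part $R$ creates all its $s$ edges), the same addition analysis yielding $\tfrac{s-2}{r+1}b+c\le 1$, and the same three deviations of an $R$-vertex---your corners $(0,1)$, $(1,1)$, $(0,r-1)$ are exactly the paper's strategies $\tilde s_1,\tilde s_2,\tilde s_3$---producing the three lower-bound inequalities. The only difference is that you propose to verify, via the $(p,q)$-parametrisation and dominance/convexity arguments, that these three corners are the binding deviations, whereas the paper simply checks the three and asserts the conclusion; you also make explicit the ``canonical profile is least restrictive'' point that the paper dispatches in one informal sentence.
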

            \begin{proof}
                Additional links can only be created within a subset in a complete bipartite graph. Similarly to adding links in a star graph, the change in cost when adding $m$ links is given by
                    $\Delta \text{cost}_u(\text{add $m$ links}) = m - \dfrac{m \cdot (m-1)}{l+1} b- m \cdot c,$
                where $l\in \{r,s\}$ is the size of the subset not including the player. 
                
                A player changes strategy when $\Delta \text{cost}_u(\text{add $m$ links})< 0$. The change in cost is minimized when $m$ is maximized and $l= r$. $m$ can therefore be $s-1$ at most. Thus, the upper bound for $K_{r,s}$ being a Nash equilibrium is  
                $1 \geq \dfrac{s-2}{r+1} b+c.$
            	
            	Players in the subset of size $r$, benefit more from a link to the other subset, as their betweenness cost is smaller. Thus, players from the larger subset with outgoing links would change strategy sooner. In the case where the subsets are of equal size, the link direction does not matter. Hence, to find a lower bound for $b$ and $c$ we only consider complete bipartite graphs, in which all links are established from the smaller subset, as seen in Figure~\ref{fig:bip1}. If players from the larger subset woul   Without loss of generality we will only consider player $u$ in the following analysis. It is not reasonable for player $u$ to remove all her links without adding any new links, as her cost would become infinite. Depending on the other parameters, it might be more optimal to remove all her previous links and only connect to one player in her subset (Figure~\ref{fig:bip2}), connect to one player in her subset and one player from the other subset (Figure~\ref{fig:bip3}), or to remove all her previous links and instead connect to all other players in her subset (Figure~\ref{fig:bip4}).
            	\vspace{-11pt}
            	\begin{figure}[hbt!]
                    \centering
                    \begin{subfigure}[t]{0.48\textwidth}
                        \centering
                        \begin{tikzpicture}[scale = 0.8]
                			\node[draw, circle, minimum size = 0.6cm,thick] at (0,0) (A0) {};
                			\node[draw, circle, minimum size = 0.6cm,thick] at (2,0) (A1) {};
                			\node[draw, circle, minimum size = 0.6cm,thick] at (4,0) (A2) {};
                			\node[draw, circle, minimum size = 0.6cm,thick] at (6,0) (A3) {};

                			\node[draw, circle, minimum size = 0.6cm,thick] at (1,2) (B0) {$u$};
                			\node[draw, circle, minimum size = 0.6cm,thick] at (3,2) (B1) {};
                			\node[draw, circle, minimum size = 0.6cm,thick] at (5,2) (B2) {};
                			
                			\draw[ ->,>=latex,thick] (B0) -- (A0);
                			\draw[ ->,>=latex,thick] (B0) -- (A1);
                			\draw[ ->,>=latex,thick] (B0) -- (A2);
                			\draw[ ->,>=latex,thick] (B0) -- (A3);
                			\draw[ ->,>=latex,thick] (B1) -- (A0);
                			\draw[ ->,>=latex,thick] (B1) -- (A1);
                			\draw[ ->,>=latex,thick] (B1) -- (A2);
                			\draw[ ->,>=latex,thick] (B1) -- (A3);
                			\draw[ ->,>=latex,thick] (B2) -- (A0);
                			\draw[ ->,>=latex,thick] (B2) -- (A1);
                			\draw[ ->,>=latex,thick] (B2) -- (A2);
                			\draw[ ->,>=latex,thick] (B2) -- (A3);
                		\end{tikzpicture}
                        \caption{strategy $s$\label{fig:bip1}}
                    \end{subfigure}
                    \hfill
                    \begin{subfigure}[t]{0.48\textwidth}
                        \centering
                    	\begin{tikzpicture}[scale = 0.8]
                			\node[draw, circle, minimum size = 0.6cm,thick] at (0,0) (A0) {};
                			\node[draw, circle, minimum size = 0.6cm,thick] at (2,0) (A1) {};
                			\node[draw, circle, minimum size = 0.6cm,thick] at (4,0) (A2) {};
                			\node[draw, circle, minimum size = 0.6cm,thick] at (6,0) (A3) {};

                			\node[draw, circle, minimum size = 0.6cm,thick] at (1,2) (B0) {$u$};
                			\node[draw, circle, minimum size = 0.6cm,thick] at (3,2) (B1) {};
                			\node[draw, circle, minimum size = 0.6cm,thick] at (5,2) (B2) {};
                			
                			\draw[ ->,>=latex,thick] (B0) -- (B1);
                			\draw[ ->,>=latex,thick] (B1) -- (A0);
                			\draw[ ->,>=latex,thick] (B1) -- (A1);
                			\draw[ ->,>=latex,thick] (B1) -- (A2);
                			\draw[ ->,>=latex,thick] (B1) -- (A3);
                			\draw[ ->,>=latex,thick] (B2) -- (A0);
                			\draw[ ->,>=latex,thick] (B2) -- (A1);
                			\draw[ ->,>=latex,thick] (B2) -- (A2);
                			\draw[ ->,>=latex,thick] (B2) -- (A3);
                		\end{tikzpicture}	
                    	\caption{strategy $\tilde{s}_1$ \label{fig:bip2}}
                    \end{subfigure}
                     \begin{subfigure}[t]{0.48\textwidth}
                        \centering
                    	\begin{tikzpicture}[scale = 0.8]
                			\node[draw, circle, minimum size = 0.6cm,thick] at (0,0) (A0) {};
                			\node[draw, circle, minimum size = 0.6cm,thick] at (2,0) (A1) {};
                			\node[draw, circle, minimum size = 0.6cm,thick] at (4,0) (A2) {};
                			\node[draw, circle, minimum size = 0.6cm,thick] at (6,0) (A3) {};
                			
                			\node[draw, circle, minimum size = 0.6cm,thick] at (1,2) (B0) {$u$};
                			\node[draw, circle, minimum size = 0.6cm,thick] at (3,2) (B1) {};
                			\node[draw, circle, minimum size = 0.6cm,thick] at (5,2) (B2) {};
                			
                			\draw[ ->,>=latex,thick] (B0) -- (A0);
                			\draw[ ->,>=latex,thick] (B0) -- (B1);
                			\draw[ ->,>=latex,thick] (B1) -- (A0);
                			\draw[ ->,>=latex,thick] (B1) -- (A1);
                			\draw[ ->,>=latex,thick] (B1) -- (A2);
                			\draw[ ->,>=latex,thick] (B1) -- (A3);
                			\draw[ ->,>=latex,thick] (B2) -- (A0);
                			\draw[ ->,>=latex,thick] (B2) -- (A1);
                			\draw[ ->,>=latex,thick] (B2) -- (A2);
                			\draw[ ->,>=latex,thick] (B2) -- (A3);
                		\end{tikzpicture}	
                    	\caption{strategy $\tilde{s}_2$ \label{fig:bip3}}
                    \end{subfigure}
                    \hfill
                     \begin{subfigure}[t]{0.48\textwidth}
                        \centering
                        \begin{tikzpicture}[scale = 0.8 ,->,>=stealth',auto]
                			\node[draw, circle, minimum size = 0.6cm,thick] at (0,0) (A0) {};
                			\node[draw, circle, minimum size = 0.6cm,thick] at (2,0) (A1) {};
                			\node[draw, circle, minimum size = 0.6cm,thick] at (4,0) (A2) {};
                			\node[draw, circle, minimum size = 0.6cm,thick] at (6,0) (A3) {};
                			
                			\node[draw, circle, minimum size = 0.6cm,thick] at (1,2) (B0) {$u$};
                			\node[draw, circle, minimum size = 0.6cm,thick] at (3,2) (B1) {};
                			\node[draw, circle, minimum size = 0.6cm,thick] at (5,2) (B2) {};
                			
                			\path[thick] (B0) edge [bend left] node {} (B2);
                			\draw[ ->,>=latex,thick] (B0) -- (B1);
                			\draw[ ->,>=latex,thick] (B1) -- (A0);
                			\draw[ ->,>=latex,thick] (B1) -- (A1);
                			\draw[ ->,>=latex,thick] (B1) -- (A2);
                			\draw[ ->,>=latex,thick] (B1) -- (A3);
                			\draw[ ->,>=latex,thick] (B2) -- (A0);
                			\draw[ ->,>=latex,thick] (B2) -- (A1);
                			\draw[ ->,>=latex,thick] (B2) -- (A2);
                			\draw[ ->,>=latex,thick] (B2) -- (A3);
                		\end{tikzpicture}
                    	\caption{strategy $\tilde{s}_3$ \label{fig:bip4}}
                    \end{subfigure}
                    \caption{\small Strategy deviations of player $u$.}
                \end{figure}
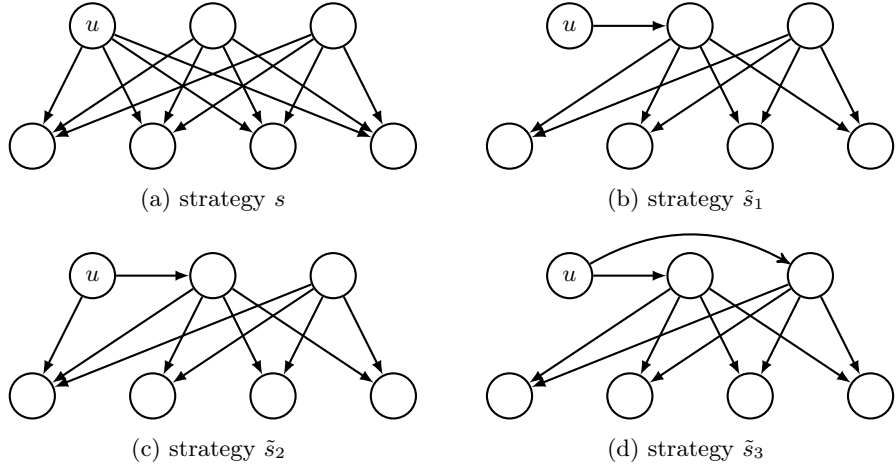
            	When player $u$ changes to strategy $\tilde{s}_1$, seen in Figure~\ref{fig:bip2} the change in cost is as follows:  
                \begin{align*}
                    \Delta \text{cost}_u(s \text{ to } \tilde{s}_1) =& -(s-1) + \dfrac{s  \cdot (s-1)}{r} b + (s+r-3) \cdot c  
                \end{align*}
                as player $u$ initiates $s-1$ less links than before - losing all her previous betweenness. Additionally, she is one edge further away from all other players except for the one she connects to directly. 
                Thus, the above strategy is less preferable than the complete bipartite graph for player $u$, if
                $$1 \leq \dfrac{s}{r}  b + \dfrac{s+r -3}{s-1} c.$$
                Player $u$'s change to strategy $\tilde{s}_2$ (Figure~\ref{fig:bip3}) leads to $s-2$ less links initiated by her. The player is further away from $s-1$ players from the other subset and closer to one in her own. All transaction-routing potential is lost. Therefore, the change in cost is given by
                \begin{align*}
                    \Delta \text{cost}_u(s \text{ to } \tilde{s}_2) =& 2-s +  \left ( \dfrac{s \cdot (s-1)}{r} \right) b + (s-2)\cdot c. 
                \end{align*}
                Hence, for this strategy to be less preferable than the complete bipartite graph,
                $$1 \leq \left( \dfrac{s\cdot (s-1)}{r \cdot (s-2)} \right) b +  c =  \alpha \cdot b + c.$$
                When severing all previous links and connecting to all players in her subset instead, strategy $\tilde{s}_3$ (Figure~\ref{fig:bip4}), player $u$ builds $s-r+1$ less links than before. Furthermore, she is closer to players previously in her own subset and further away from the rest. While player $u$ can now transmit transactions of players previously in her own subset, she is no longer a preferable intermediary for players previously in the other subset.  Therefore, the change in cost is given by
                \begin{align*}
                    \Delta \text{cost}_u(s \text{ to } \tilde{s}_3) =& r- s+1 +  \left (\dfrac{s\cdot (s-1)}{r} -\dfrac{(r-1) (r-2)}{s+1}  \right) b + (s-r+1)\cdot c. 
                \end{align*}
                Hence, for this strategy to be less preferable than the complete bipartite graph for player $u$,
                $$1 \leq \dfrac{1}{(s-r+1)} \left( \dfrac{s\cdot (s-1)}{r} -\dfrac{(r-1) (r-2)}{s+1} \right) b +  c = \beta \cdot b +c.$$
                
                To summarize, the complete bipartite graph $K_{r,s}$ is a Nash equilibrium for 
                $$\frac{s-2}{r+1} b +c\leq 1 \leq \min \left\{ \frac{s}{r}  b + \dfrac{s+r-3}{s-1}  c,   \min \left\{\alpha,\beta \right\}\cdot b + c \right\}.$$ 
                \hfill \qed
            \end{proof}
            
            The parameter map for the complete bipartite graph is drawn in Figure~\ref{fig:biclique}. There $(\gamma, \delta )$ is the intersection between  $1 =  \frac{s}{r} b + \frac{s+r-3}{s-1} c$ and $1 = \min \left\{\alpha , \beta \right\} \cdot  b + c $.
            
        \subsubsection*{Simulation.}
            To better understand the behaviour of a player in our payment network creation game, we implement a simulation of the game~\cite{simulation}. Our simulation enumerates all Nash equilibria for a given number of players $n$, as well as the weights for the betweenness and closeness costs. However, this is only feasible for small $n$. Parameter sweeps for the weights $b$ and $c$ can also be performed to see when a given topology is a Nash equilibrium. Some parameter sweeps for topologies previously analyzed can be found in Appendix~\ref{app:simulation}. Finally, starting from an initial graph the progression of the game can be simulated.
        
    \subsection{Price of Anarchy}
        The ratio between the social optimum and the worst Nash equilibrium is the price of anarchy ($\text{PoA}$), formally,
        $$\text{PoA} = \dfrac{\max _{s\in N} \text{cost}(s)}{\min _{s \in S} \text{cost}(s)},$$
        here $S$ is the set of all strategies and $N$ is the set of strategies that are Nash equilibria.     
        
        The price of anarchy provides an insight to the effects of lack of coordination, i.e.\ measures the performance degradation of the system when players act selfishly in comparison to central coordination.
        When the price of anarchy is low, selfish actors do not heavily degrade network efficiency. In contrast, a high price of anarchy indicates that network formation by a central authority would significantly increase efficiency. 
        
        For $c>1$, we can determine the price of anarchy exactly, as we established both the social optimum and the (unique) Nash equilibria for $c>1$.
        \begin{restatable}[]{corollary}{poaone}\label{cor:poa1}
            For $ c > 1$ and $c > \frac{1}{2} +b$, the price of anarchy is $\text{PoA} =1$.
        \end{restatable}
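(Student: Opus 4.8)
The plan is to read off both the numerator and the denominator of the price-of-anarchy ratio directly from Theorem~\ref{thm:thridtheorem} and Theorem~\ref{thm:firsttheorem}, which together pin down the Nash equilibria and the social optimum in the overlap region $\{c>1\}\cap\{c>\tfrac12+b\}$.

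First I would treat the numerator $\max_{s\in N}\text{cost}(s)$. By Theorem~\ref{thm:thridtheorem}, for $c>1$ the set $N$ is nonempty and every Nash equilibrium $s$ has $G[s]=K_n$. Since a Nash equilibrium has no channel paid by both endpoints, $\sum_{u}\lvert s_u\rvert=\lvert E(K_n)\rvert=\binom{n}{2}$, and by Lemma~\ref{lem:secondlemma} the social cost depends on $s$ only through $\lvert E(G[s])\rvert$ and $\sum_{u}\text{closeness}_u(s)$ — both fixed by the graph $K_n$ (in which every closeness term is $0$). Hence $\max_{s\in N}\text{cost}(s)=\text{cost}(K_n)$, independently of which endpoint nominally creates each edge.

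Next, for the denominator $\min_{s\in S}\text{cost}(s)$ (the social optimum), I would invoke Theorem~\ref{thm:firsttheorem}: for $c>\tfrac12+b$ the social optimum is the complete graph, so $\min_{s\in S}\text{cost}(s)=\text{cost}(K_n)$. Here one only needs the side remark that a strategy in which some channel is paid by both endpoints merely increases $\sum_u\lvert s_u\rvert$, hence the cost, so the minimiser lies in the range where Lemma~\ref{lem:secondlemma} applies and the comparison with Theorem~\ref{thm:firsttheorem} is legitimate. Combining the two computations gives $\text{PoA}=\text{cost}(K_n)/\text{cost}(K_n)=1$, and the ratio is well-defined because the modified betweenness term keeps $\text{cost}(K_n)>0$.

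I do not expect a genuine obstacle: the corollary is essentially the observation that in the region $\{c>1\}\cap\{c>\tfrac12+b\}$ the unique Nash equilibrium and the social optimum are the \emph{same} graph. The only care required is the bookkeeping noted above — that "the complete graph'' determines a single social-cost value regardless of edge orientations, and that restricting from $S$ to strategies without doubly-paid channels does not change the minimum.
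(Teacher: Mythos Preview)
Your proposal is correct and follows essentially the same approach as the paper: invoke Theorem~\ref{thm:thridtheorem} to identify the unique Nash equilibrium as $K_n$ for $c>1$, invoke Theorem~\ref{thm:firsttheorem} to identify the social optimum as $K_n$ for $c>\tfrac12+b$, and conclude $\text{PoA}=1$. The paper's proof is terser and omits the bookkeeping about edge orientations and doubly-paid channels, but the logical content is identical.
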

        
        \begin{restatable}[]{corollary}{poatwo}\label{cor:poa2}
            For $c > 1$ and $b \leq c \leq \frac{1}{2} +b$, the price of anarchy is $$\text{PoA} = \frac{\left(\frac{1}{2}+ (n-2) \cdot b \right)\cdot n}{1+ (c + b\cdot (n-1)) (n-2)}.$$
        \end{restatable}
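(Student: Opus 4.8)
The plan is to notice that in the stated regime both endpoints of the ratio are already determined by earlier results, so the price of anarchy is just a quotient of two explicitly computable social costs.

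First I would invoke Theorem~\ref{thm:thridtheorem}: since $c>1$, the complete graph $K_n$ is the \emph{only} Nash equilibrium, hence $\max_{s\in N}\text{cost}(s)=\text{cost}(K_n)$. Next, since $b\le c\le \tfrac12+b$, Theorem~\ref{thm:firsttheorem} says the social optimum is the star $K_{1,n-1}$, so $\min_{s\in S}\text{cost}(s)=\text{cost}(\text{star})$. (On the boundary $c=\tfrac12+b$ the star and the complete graph have the same social cost, so taking the star's cost is still correct.)

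Then I would evaluate both costs via Lemma~\ref{lem:secondlemma}, i.e.\ $\text{cost}(s)=|E(G)|+b\,n(n-1)(n-2)+(c-b)\sum_u\text{closeness}_u(s)$. For $K_n$: $|E|=\binom n2$ and every pairwise distance is $1$, so $\sum_u\text{closeness}_u=0$, giving $\text{cost}(K_n)=\frac{n(n-1)}{2}+b\,n(n-1)(n-2)=n(n-1)\!\left(\tfrac12+(n-2)b\right)$. For the star: $|E|=n-1$; the center contributes $0$ to the closeness sum, while each of the $n-1$ leaves contributes $n-2$ (distance $2$ to the other $n-2$ leaves), so $\sum_u\text{closeness}_u=(n-1)(n-2)$, giving $\text{cost}(\text{star})=(n-1)+b\,n(n-1)(n-2)+(c-b)(n-1)(n-2)=(n-1)\bigl(1+(n-2)(c+(n-1)b)\bigr)$.

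Finally I would divide: the common factor $(n-1)$ cancels, leaving $\text{PoA}=\dfrac{n\left(\tfrac12+(n-2)b\right)}{1+(c+(n-1)b)(n-2)}$, as claimed. There is essentially no conceptual obstacle; the only points requiring care are computing the closeness sum of the star correctly (the leaves, not the center, carry all of it) and keeping the $b\,n(n-1)(n-2)$ term through the simplification, since that term is exactly what makes the factor $(n-1)$ appear in both numerator and denominator and cancel.
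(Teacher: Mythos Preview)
Your proposal is correct and follows essentially the same route as the paper: invoke Theorem~\ref{thm:thridtheorem} and Theorem~\ref{thm:firsttheorem} to identify the unique Nash equilibrium and the social optimum, compute the two social costs, and cancel the common factor $(n-1)$. Your use of Lemma~\ref{lem:secondlemma} to evaluate the costs is exactly how the paper's intermediate expression $(1-2(c-b)+(c-b)n+b(n-2)n)(n-1)$ arises, and your arithmetic matches.
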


        \begin{restatable}[]{corollary}{poathree}\label{cor:poa3}
            For $1< c <b$ , the price of anarchy is $$\text{PoA}= \frac{\left(\frac{1}{2}+ (n-2) \cdot b \right)\cdot n}{1  + \left(\frac{2}{3}b + \frac{1}{3}c\right) \cdot n \cdot (n-2)} .$$
        \end{restatable}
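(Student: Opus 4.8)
The plan is to note that the hypothesis $1<c<b$ determines both ends of the ratio via results already in hand. Since $c>1$, Theorem~\ref{thm:thridtheorem} says the complete graph $K_n$ is the \emph{unique} Nash equilibrium, so $\max_{s\in N}\text{cost}(s)=\text{cost}(K_n)$. Since $c<b$ (hence also $c<\tfrac12+b$ and $c<b$ rules out the star region $b\le c$), Theorem~\ref{thm:firsttheorem} says the social optimum is the path graph $P_n$, so $\min_{s\in S}\text{cost}(s)=\text{cost}(P_n)$. It remains to evaluate these two quantities with Lemma~\ref{lem:secondlemma} and take the quotient.

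For the numerator: in $K_n$ every pair of distinct vertices is at distance $1$, so $\text{closeness}_u=0$ for each $u$ and $\sum_{u}\text{closeness}_u=0$; with $\lvert E(K_n)\rvert=\binom{n}{2}$, Lemma~\ref{lem:secondlemma} gives $\text{cost}(K_n)=\tfrac{n(n-1)}{2}+b\,n(n-1)(n-2)=n(n-1)\bigl(\tfrac12+(n-2)b\bigr)$.

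For the denominator I again use Lemma~\ref{lem:secondlemma}, so I need $\sum_u\text{closeness}_u$ for $P_n$. Ordering the vertices along the path, there are $2(n-k)$ ordered pairs at distance exactly $k$, so $\sum_u\sum_{r\ne u}d_{P_n}(u,r)=2d(P_n)=2\sum_{k=1}^{n-1}k(n-k)=\tfrac{n(n-1)(n+1)}{3}$ (equivalently $d(P_n)=\binom{n+1}{3}$, the value at which Lemma~\ref{lem:thirdlemma} attains its upper bound). Subtracting $n-1$ from each of the $n$ closeness terms gives $\sum_u\text{closeness}_u=\tfrac{n(n-1)(n+1)}{3}-n(n-1)=\tfrac{n(n-1)(n-2)}{3}$. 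Feeding $\lvert E(P_n)\rvert=n-1$ and this sum into Lemma~\ref{lem:secondlemma} and collecting the $b$- and $c$-terms yields $\text{cost}(P_n)=(n-1)\bigl(1+\bigl(\tfrac23 b+\tfrac13 c\bigr)n(n-2)\bigr)$.

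Finally I form $\text{PoA}=\text{cost}(K_n)/\text{cost}(P_n)$; the common factor $(n-1)$ cancels, leaving exactly the claimed expression. The only mildly delicate step is the closeness sum for $P_n$ (and keeping the distance bookkeeping straight); everything else is immediate from Theorems~\ref{thm:thridtheorem} and~\ref{thm:firsttheorem} plus Lemma~\ref{lem:secondlemma}, and one should remark in passing that $1<c<b$ lies strictly inside both the unique-complete-graph region and the path-optimum region, so neither invoked result is applied at a boundary.
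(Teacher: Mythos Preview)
Your proof is correct and follows essentially the same approach as the paper: invoke Theorem~\ref{thm:thridtheorem} for the unique Nash equilibrium, Theorem~\ref{thm:firsttheorem} for the social optimum, compute both costs via Lemma~\ref{lem:secondlemma}, and cancel the common $(n-1)$. Your explicit derivation of $\sum_u \text{closeness}_u(P_n)=\tfrac{n(n-1)(n-2)}{3}$ is more detailed than what the paper shows here, but the argument is otherwise identical.
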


        Combining the results of Corollary~\ref{cor:poa1},~\ref{cor:poa2} and~\ref{cor:poa3} allows us to upper bound the price of anarchy to a constant for $c>1$, as stated in in Corollary~\ref{cor:poa4}. This upper bound is asymptotically tight, as the price of anarchy is always greater or equal to one (hence at least constant) by definition.   
            
        \begin{restatable}{corollary}{poafour}\label{cor:poa4}
           For $c > 1$, the price of anarchy is $\text{PoA} = \mathcal{O}(1)$.
        \end{restatable}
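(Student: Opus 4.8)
The plan is to observe that the three preceding corollaries already partition the relevant parameter range, and then to check that each of the resulting closed-form expressions is bounded above by a constant that does not depend on $n$. Concretely, fix $b \ge 0$ and $c > 1$. Since $b \ge 0$, exactly one of the following holds: $c > \tfrac12 + b$, or $b \le c \le \tfrac12 + b$, or $c < b$ (either $c < b$ or $c \ge b$, and in the latter case either $c \le \tfrac12 + b$ or $c > \tfrac12 + b$). Hence Corollaries~\ref{cor:poa1},~\ref{cor:poa2} and~\ref{cor:poa3} jointly cover the whole region $\{(b,c) : c > 1\}$, and I would treat them one at a time.

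In the first case, Corollary~\ref{cor:poa1} gives $\text{PoA} = 1$, which is trivially $\mathcal{O}(1)$. In the second case, $b \le c \le \tfrac12 + b$ together with $c > 1$ forces $b > \tfrac12 > 0$; the numerator of the expression in Corollary~\ref{cor:poa2} is then a quadratic polynomial in $n$ with leading coefficient $b$, and so is its denominator, so the ratio converges to $1$ as $n \to \infty$. Being a function of $n$ on $\{3,4,\dots\}$ with a finite limit and no poles (the denominator is strictly positive there), it is bounded by some constant depending only on $b$ and $c$. The third case, $1 < c < b$, is analogous: the numerator of the expression in Corollary~\ref{cor:poa3} is again quadratic in $n$ with leading coefficient $b$, while the denominator is quadratic with leading coefficient $\tfrac23 b + \tfrac13 c > 0$, so the ratio tends to $\tfrac{3b}{2b+c} < \tfrac32$ and is likewise bounded uniformly in $n$.

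Taking the maximum of the three bounds yields a constant $C = C(b,c)$ with $\text{PoA} \le C$ for all $n$, i.e.\ $\text{PoA} = \mathcal{O}(1)$; combined with the trivial lower bound $\text{PoA} \ge 1$ this is in fact $\Theta(1)$, as noted just before the statement. I do not expect a real obstacle here: the only points that need care are verifying that the three subregions are exhaustive and that the two rational expressions have genuinely finite limits in their respective ranges — in particular that the leading coefficients of the denominators are strictly positive, which is what rules out any blow-up and makes the bound independent of $n$.
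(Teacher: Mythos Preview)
Your proposal is correct and follows essentially the same approach as the paper: both arguments partition $\{c>1\}$ into the three subregions covered by Corollaries~\ref{cor:poa1}--\ref{cor:poa3} and then observe that in each case the closed-form ratio is a quotient of quadratics in $n$ with matching positive leading coefficients, hence bounded. Your write-up is in fact slightly more careful in making explicit that $b>\tfrac12$ in the second case (so that the leading coefficient really is positive), which the paper leaves implicit in its $\mathcal{O}(b n^2 / b n^2)$ shorthand.
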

                
        
        For small $b$ and $c$ we can also upper bound the price of anarchy as follows:
        \begin{restatable}{theorem}{poafive}\label{thm:poa}
            For $c + b < \frac{1}{n^2}$, the price of anarchy is $\text{PoA} = \mathcal{O}(1)$. 
        \end{restatable}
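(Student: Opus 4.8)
The plan is to exploit the extreme smallness of $c+b$ to pin down the structure of every Nash equilibrium: I claim that when $c+b<\tfrac1{n^2}$, \emph{every} Nash equilibrium graph is a tree. Given that, both the worst Nash equilibrium and the social optimum have cost $\Theta(n)$, so their ratio is constant.

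\textbf{Step 1 (every Nash equilibrium is a tree).} Suppose a Nash equilibrium graph $G$ contains a cycle and pick an edge $e=\{u,v\}$ on that cycle. In a Nash equilibrium every edge has a unique creator (a player who redundantly co-creates an existing edge could drop it for free), so without loss of generality $u$ created $e$. If $u$ removes $e$ from $s_u$, the resulting graph is $G-e$, still connected because an edge on a cycle is not a bridge; hence $u$'s cost stays finite and changes by $\Delta\text{cost}_u=-1+b\cdot\Delta\text{betweenness}_u+c\cdot\Delta\text{closeness}_u$. In any connected graph both $\text{betweenness}_u$ and $\text{closeness}_u$ lie in $[0,(n-1)(n-2)]$, so $\Delta\text{betweenness}_u\le(n-1)(n-2)\le n^2$ and likewise $\Delta\text{closeness}_u\le n^2$. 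Therefore $\Delta\text{cost}_u\le -1+(b+c)n^2<0$, a strictly improving deviation — contradiction. Hence $G$ is connected and acyclic, i.e.\ a tree with $|E(G)|=n-1$.

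\textbf{Step 2 (numerator and denominator).} By Lemma~\ref{lem:secondlemma}, any Nash equilibrium $s$ (a tree) satisfies $\text{cost}(s)=(n-1)+b\,n(n-1)(n-2)+(c-b)\sum_u\text{closeness}_u(s)$. Since each $\text{closeness}_u(s)\le(n-1)(n-2)$, the sum is at most $n(n-1)(n-2)<n^3$, and both $b$ and $|c-b|$ are below $b+c<\tfrac1{n^2}$; hence the last two terms are each below $n$ in absolute value and $\text{cost}(s)<3n$. For the social optimum, connectivity of any finite-cost graph forces $\sum_u|s_u|\ge n-1$ while the betweenness and closeness terms are nonnegative, so $\min_s\text{cost}(s)\ge n-1$. (The Nash set is nonempty here: by Theorem~\ref{thm:star} the star is a Nash equilibrium since $\tfrac{n-3}{2}b+c\le1$.) Dividing gives $\text{PoA}\le\tfrac{3n}{n-1}=\mathcal{O}(1)$.

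\textbf{Main obstacle.} The crux is Step~1. The delicate points are that the deviating player must be the \emph{unique} creator of the cycle edge (so that dropping it really removes the edge) and that $G-e$ must stay connected (so her cost does not become infinite); both follow from the bridge/cycle dichotomy together with the unique-creator property of equilibria. Once finiteness after the deviation is secured, the crude bounds $\Delta\text{betweenness}_u,\Delta\text{closeness}_u\le(n-1)(n-2)$ combined with $c+b<\tfrac1{n^2}$ immediately make the deviation profitable, and the rest is bookkeeping with Lemma~\ref{lem:secondlemma}.
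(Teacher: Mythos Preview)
Your proof is correct and follows essentially the same route as the paper. The paper argues that when $c+b<\tfrac1{n^2}$ ``no player has an incentive to build an edge'' beyond what is needed for connectivity (the maximal gain $(b+c)n^2$ being below the link cost~$1$), hence every Nash equilibrium is a spanning tree; you phrase the same idea contrapositively by exhibiting a profitable \emph{removal} of any cycle edge by its unique creator. Both then conclude $\text{cost}(s)=\Theta(n)$ for every tree and for the social optimum, giving $\text{PoA}=\mathcal{O}(1)$. Your write-up is in fact more careful than the paper's: you make explicit the unique-creator property, the fact that $G-e$ stays connected, and the non-emptiness of the Nash set via Theorem~\ref{thm:star}.
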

                    
                    

        Finally, for $c+b \geq \frac{1}{n^2}$ and $c<1$, we show an $\mathcal{O}(n)$ upper bound for the price of anarchy. 
            
        \begin{restatable}{theorem}{poasix}
            For $c + b \geq \frac{1}{n^2}$ and $c<1$, the price of anarchy is $\text{PoA} = \mathcal{O}(n)$. 
        \end{restatable}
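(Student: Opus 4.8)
Since $\text{PoA}=\frac{\max_{s\in N}\text{cost}(s)}{\min_{s\in S}\text{cost}(s)}$, the plan is to upper bound the social cost of an \emph{arbitrary} Nash equilibrium and to lower bound the social optimum (the latter supplied by Theorem~\ref{thm:firsttheorem} together with Lemma~\ref{lem:secondlemma}), and then to divide the two bounds. The hypothesis $c<1$ enters through a diameter bound on Nash equilibria; the hypothesis $c+b\geq\frac{1}{n^2}$ only serves to keep the optimum large enough (it is automatic in the complete-graph regime, where $c>\tfrac12$).

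\textbf{Step 1: bounding a Nash equilibrium.} First I would prove a \emph{diameter lemma}: in any Nash equilibrium $s^{*}$ one has $\text{diam}(G[s^{*}])\leq 1+\tfrac1c$. Indeed, if a pair $u,r$ realises a distance $D\geq 2$, then $r\notin s^{*}_{u}$, so $u$ may deviate by adding $r$ to $s^{*}_{u}$. This raises her edge cost by $1$, lowers her closeness term by at least $c(D-1)$ (the $u$--$r$ distance drops from $D$ to $1$ and no distance from $u$ increases), and does not raise her betweenness term: an edge incident to $u$ can only create new shortest $a$--$b$ paths that pass through $u$ (for $a,b\neq u$), so $m_u(a,b)/m(a,b)$ is non-decreasing for every such pair, whence $u$'s Freeman betweenness does not decrease and the complementary quantity $\text{betweenness}_u$ does not increase. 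Hence $\Delta\text{cost}_u\leq 1-c(D-1)$, and the Nash condition forces $1-c(D-1)\geq 0$. Consequently $\text{closeness}_u(s^{*})\leq (n-1)(D-1)\leq\frac{n-1}{c}$ for every $u$, so $c\sum_{u}\text{closeness}_u(s^{*})\leq n(n-1)$. Combining this with the trivial bounds $\lvert E(G[s^{*}])\rvert\leq\binom n2$ and $\sum_u\text{betweenness}_u(s^{*})\leq n(n-1)(n-2)$ yields
$$\text{cost}(s^{*})=\lvert E\rvert+b\sum_u\text{betweenness}_u(s^{*})+c\sum_u\text{closeness}_u(s^{*})\leq n(n-1)\!\left(\tfrac32+b(n-2)\right).$$

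\textbf{Step 2: the optimum and the ratio.} Since $c<1$, Theorem~\ref{thm:firsttheorem} leaves three regimes. Plugging the closeness sums $0$, $(n-1)(n-2)$ and $\tfrac13 n(n-1)(n-2)$ of the complete, star and path graphs into Lemma~\ref{lem:secondlemma}, the optimum equals $n(n-1)\!\left(\tfrac12+b(n-2)\right)$ when $c>\tfrac12+b$, equals $(n-1)\!\left(1+(n-2)(b(n-1)+c)\right)$ when $b\leq c\leq\tfrac12+b$, and equals $(n-1)\!\left(1+(\tfrac23 b+\tfrac13 c)\,n(n-2)\right)$ when $c<b$. Dividing the Step~1 bound by each: in the complete-graph regime the ratio is $\frac{3/2+b(n-2)}{1/2+b(n-2)}\leq 3$; in the star and path regimes the term $b(n-2)$ of the numerator is absorbed, up to a factor at most $2$ (resp. at most $\tfrac32$), by the $b$-term in the denominator, while the remaining contribution $\tfrac32 n$ is divided by a quantity $\geq 1$. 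Hence $\text{PoA}=\mathcal O(n)$ in every case. (When $b$ is large enough that the denominator's $b$-term dominates, the bound collapses to $\mathcal O(1)$; the $\mathcal O(n)$ rate is approached only for small $b$ and $c$ near the boundary $c+b=\frac1{n^2}$, which is where $c+b\geq\frac1{n^2}$ is used to keep the optimum from shrinking below a constant.)

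\textbf{Main obstacle.} The only non-routine point is the diameter lemma, and inside it the claim that adding an edge incident to $u$ never increases $u$'s betweenness cost; this monotonicity is precisely what lets the closeness saving $c(D-1)$ drive the Nash inequality. Once the lemma is in place, the rest is arithmetic with Theorem~\ref{thm:firsttheorem} and Lemma~\ref{lem:secondlemma}.
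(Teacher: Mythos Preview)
Your proof is correct and follows the same high-level plan as the paper (bound the diameter of any Nash equilibrium, convert this into an upper bound on the Nash social cost, and divide by the optimum from Theorem~\ref{thm:firsttheorem}), but the key lemma is established differently. The paper argues that linking $u$ to a node $r$ at distance $D$ makes $u$ closer to \emph{all} nodes on the far half of the $u$--$r$ shortest path and simultaneously gains routing through the shortcut, giving a saving of order $(b+c)D^{2}$ and hence the sharper bound $D=\mathcal O\bigl(1/\sqrt{b+c}\bigr)$. Your diameter lemma instead uses the simplest possible deviation---add the single edge $(u,r)$ and count only the closeness saving $c(D-1)$ to $r$ itself---together with the clean monotonicity of $u$'s betweenness under addition of an edge incident to $u$; this yields the weaker $D\le 1+1/c$. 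The weakness is harmless because you feed it directly into $c\sum_{u}\text{closeness}_{u}\le n(n-1)$, so the hypothesis $c+b\ge 1/n^{2}$ is never actually needed in your argument (as you note), whereas the paper invokes it to control $1/\sqrt{b+c}$. On the bookkeeping side, the paper works with the $(c-b)$-weighted form of Lemma~\ref{lem:secondlemma} and a single lower bound $\Theta(bn^{3}+n)$ on the optimum, while you bound the betweenness and closeness terms separately and split into the three regimes of Theorem~\ref{thm:firsttheorem}; both routes arrive at the same $\mathcal O(n)$.
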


    \subsection{Price of Stability}  
        The price of stability ($\text{PoS}$), a close notion to price of anarchy, is defined as the ratio between the social optimum and the best Nash equilibrium, 
        $$\text{PoS} = \dfrac{\min _{s\in N} \text{cost}(s)}{\min _{s\in S} \text{cost}(s)},$$
        where $S$ is the set of all strategies and $N$ is the set of strategies that are Nash equilibria. The price of stability expresses the loss in network performance in stable systems in comparison to those designed by a central performance. Corollary \ref{cor:pos} gives insight into the price of stability in regions of the parameter space previously discussed in the context of the price of anarchy.
        
        \begin{restatable}{corollary}{pos}\label{cor:pos}
           For $c > 1$ and $b + c< \frac{1}{n^2}$, the price of stability $\text{PoS} = \mathcal{O}(1)$. 
        \end{restatable}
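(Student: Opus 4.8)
The plan is to get this essentially for free from the price-of-anarchy results already proved, via the elementary observation that the price of stability never exceeds the price of anarchy. Both ratios have the same denominator $\min_{s\in S}\text{cost}(s)$, and whenever the set $N$ of Nash equilibria is nonempty we have $\min_{s\in N}\text{cost}(s)\le\max_{s\in N}\text{cost}(s)$, so $\text{PoS}\le\text{PoA}$. Hence it suffices to quote, in each of the two parameter regimes named in the statement, the matching constant upper bound on $\text{PoA}$ and to check that $N$ is nonempty so that $\text{PoS}$ is well defined and the inequality is not vacuous.

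Concretely, I would argue as follows. First, read the hypothesis as covering the two (disjoint) regions $c>1$ and $b+c<\frac1{n^2}$ separately, since they cannot hold simultaneously. In the region $c>1$: Theorem~\ref{thm:thridtheorem} shows the complete graph is the (unique) Nash equilibrium, so $N\neq\emptyset$; Corollary~\ref{cor:poa4} gives $\text{PoA}=\mathcal{O}(1)$; therefore $\text{PoS}\le\text{PoA}=\mathcal{O}(1)$. In the region $b+c<\frac1{n^2}$: observe $\frac{n-3}{2}b+c\le\frac n2(b+c)<\frac1{2n}<1$, so by Theorem~\ref{thm:star} the star graph is a Nash equilibrium and $N\neq\emptyset$; Theorem~\ref{thm:poa} gives $\text{PoA}=\mathcal{O}(1)$; therefore $\text{PoS}\le\text{PoA}=\mathcal{O}(1)$. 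Combining the two cases yields the claim.

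I do not expect a genuine obstacle here. The only point that needs a line of care is the nonemptiness of $N$ in the second regime, which is why I would explicitly exhibit the star graph (using the inequality from Theorem~\ref{thm:star}) rather than simply invoking $\text{PoS}\le\text{PoA}$ as a black box; in the first regime the complete graph serves the same purpose. Everything else is just the monotonicity $\text{PoS}\le\text{PoA}$ together with citations of Corollary~\ref{cor:poa4} and Theorem~\ref{thm:poa}.
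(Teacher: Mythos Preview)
Your proposal is correct and follows essentially the same approach as the paper: both argue via $\text{PoS}\le\text{PoA}$ and then invoke Corollary~\ref{cor:poa4} for the regime $c>1$ and Theorem~\ref{thm:poa} for the regime $b+c<\tfrac{1}{n^2}$. Your version is slightly more careful in that you explicitly verify $N\neq\emptyset$ in each regime (via the complete graph and the star, respectively), a point the paper leaves implicit.
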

    
        However, we expect blockchain payment networks to fall into the remaining area, where $c + b \geq \frac{1}{n^2}$ and $c<1$. 
        In particular, considering the underlying uniform transaction scenario and the fixed blockchain fee equal to one (wlog), a competitive transaction fee would be $\frac{1}{n}$. Thus, an appropriate allocation for the weights is $b = \frac{1}{2n}$ and $c = \frac{1}{n}$, as the betweenness term counts each sender and receiver pair twice. 
        For these weights the star is the social optimum (Theorem \ref{thm:firsttheorem}), as well as a Nash equilibrium (Theorem \ref{thm:star}). Hence, the price of stability for payment networks is one; indicating that an optimal payment network is stable in a game with selfish players. Thus, payment networks can be stable and efficient.

 \section{Conclusion}

 We introduced a game-theoretic model to encapsulate the creation of  payment networks. To this end, we generalized previous work, as our model is more complex and demands a combination of betweenness and closeness centralities that have thus far only been studied independently in network creation games. 
        
First, we identified the social optimum for the entire parameter space of our game. Depending on the weights placed on the betweenness and closeness centralities either the complete graph, the star graph or the path graph is the social optimum. In the area of the parameter space that most accurately reflects payment networks, we found the star graph to be the social optimum.
        
Next, we examined the space of possible Nash equilibria. After establishing that finding the best response of a player is NP-hard, we analyzed prominent graphs and determined if and when they constitute a Nash equilibrium. We showed that the complete graph is the only Nash equilibrium if players place a large weight on their closeness centrality; reflecting payment channels in which players execute many transactions or value privacy highly.
On the other hand, both the path and circle graph are Nash equilibria only for small number of players and thus are not expected to emerge as stable structures in payment networks. On the contrary, the star graph emerges as a Nash equilibrium for the areas of our parameter space most accurately representing payment networks. In addition, we observed that depending on the size of the subsets, the complete bipartite graph is also a Nash equilibrium in similar regions of the parameter space as the star graph.
        
Last, combining our results, we bounded the price of anarchy for a large part of the parameter space. In particular, we proved that when the closeness centrality weight is high, meaning that the players execute transactions frequently or demand privacy, the price of anarchy is constant; indicating little loss in network performance for selfish players. 
On the other hand, for small weight on the closeness centrality, we showed an $\mathcal{O}(n)$ upper bound on the price of anarchy. Nevertheless, the price of stability in payment networks is equal to one, since the star is both the social optimum and a Nash equilibrium for suitable parameters; demonstrating that blockchain payment networks can indeed be both stable and efficient, when forming more centralized network structures.

\newpage
%
%
%
 \bibliographystyle{splncs04}
 \bibliography{references}

\newpage
\appendix
    \section{Parameter Sweeps}\label{app:simulation}
        We show some parameter sweeps for the weights $b$ and $c$ generated by our simulation. These show when some of the prominent graphs analyzed in Chapter~\ref{chap:Nash} are Nash equilibria. 
        \subsection{Complete Graph}
            Figure~\ref{fig:completesimulation} shows the simulation results for the complete graph. Here the underlying assumption was made that lower ID players connected to all higher ID players. However, independent of this assumption the simulation yields the same results. 
            \begin{figure}[h!]
                \centering
                \begin{subfigure}[t]{0.48\textwidth}
                    \centering
                    \includegraphics[scale=0.57]{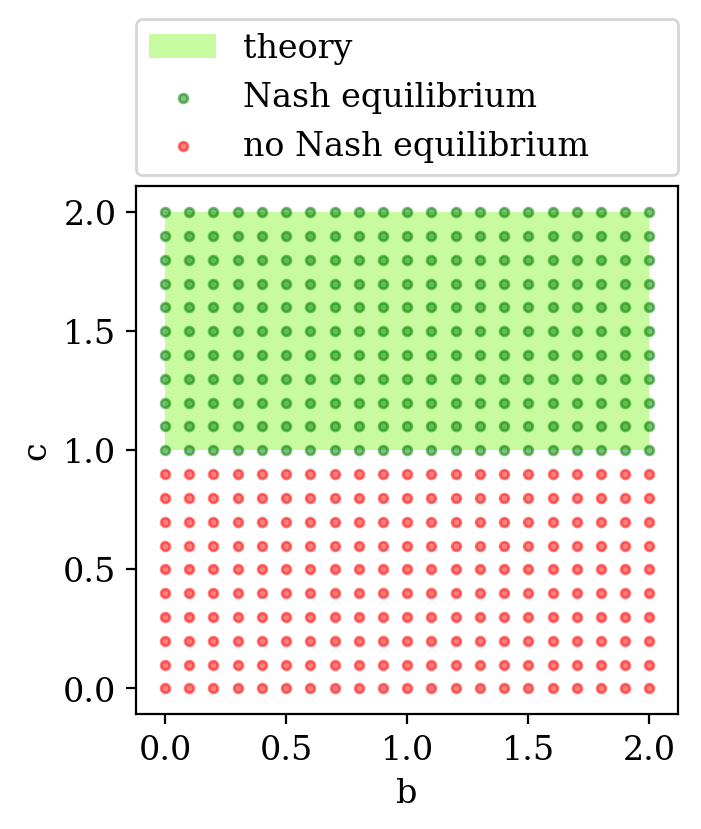}
                    \caption{$n=3$ \label{fig:completen3}}
                \end{subfigure}
                \hfill
                \begin{subfigure}[t]{0.48\textwidth}                  
                    \centering
                    \includegraphics[scale=0.57]{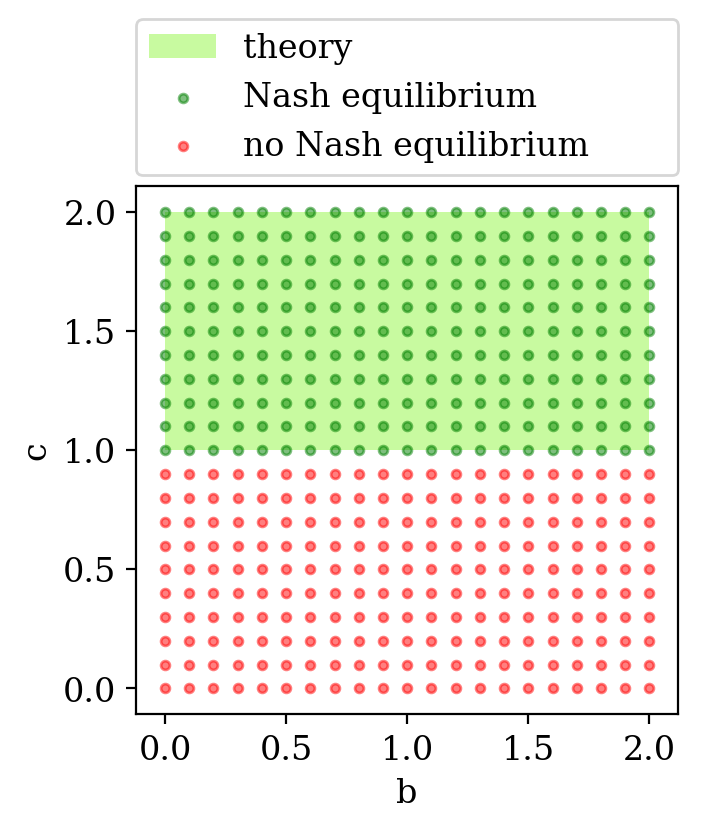}
                    \caption{$n=8$ \label{fig:completen8}}
                \end{subfigure}
                \caption{\small Parameter map for complete graph.}
                \label{fig:completesimulation}
            \end{figure}
        \subsection{Path Graph}
            Simulations done for the path graph are shown in Figure~\ref{fig:pathsimulation}. The assignment of outgoing edges is such that the least restrictive bounds are achieved. 
            \begin{figure}[h!]
                \centering
                \begin{subfigure}[t]{0.48\textwidth}
                    \centering
                    \includegraphics[scale=0.57]{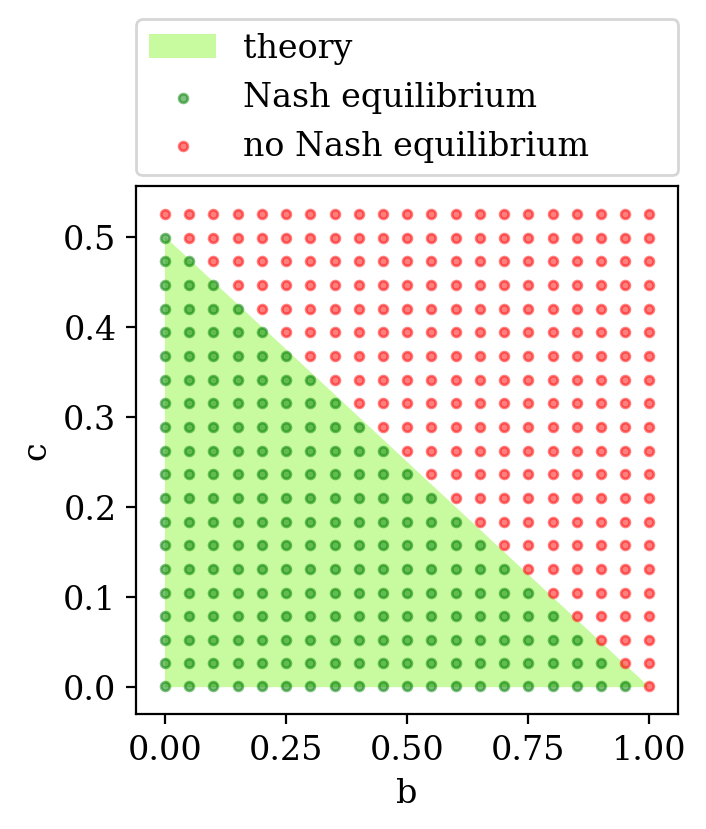}
                    \caption{$n=4$ \label{fig:pathn4}}
                \end{subfigure}
                \hfill
                \begin{subfigure}[t]{0.48\textwidth}                  
                    \centering
                    \includegraphics[scale=0.57]{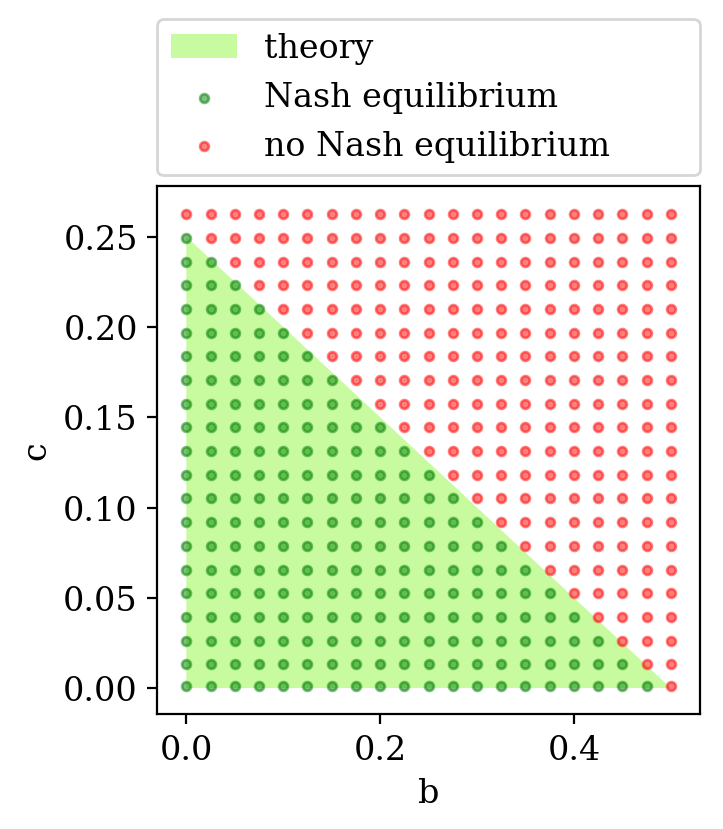}
                    \caption{$n=5$ \label{fig:pathn5}}
                \end{subfigure}\vspace{0.2cm}
                \begin{subfigure}[t]{0.48\textwidth}
                    \centering
                    \includegraphics[scale=0.57]{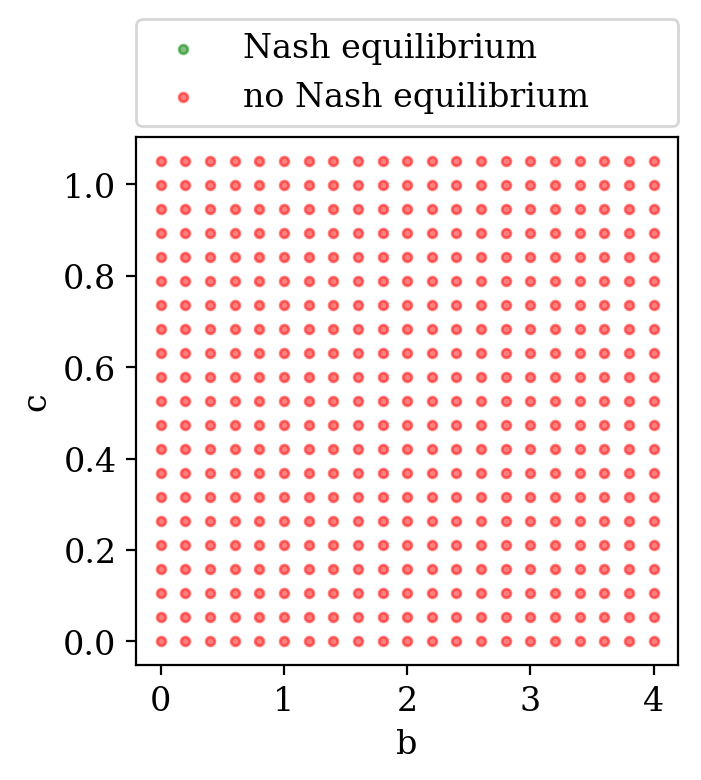}
                    \caption{$n=6$ \label{fig:pathn6}}
                \end{subfigure}
                \hfill
                \begin{subfigure}[t]{0.48\textwidth}                  
                    \centering
                    \includegraphics[scale=0.57]{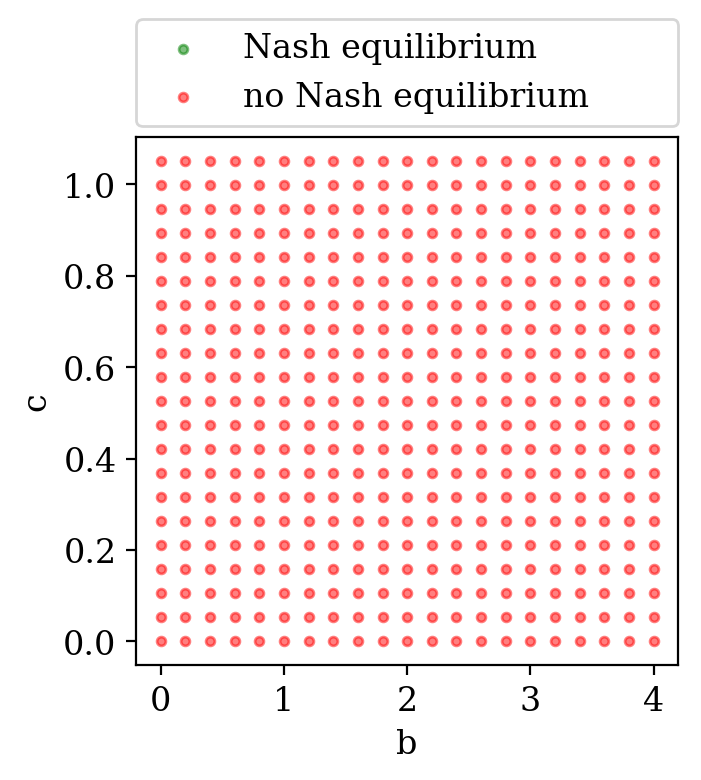}
                    \caption{$n=7$ \label{fig:pathn7}}
                \end{subfigure}
                \caption{\small Parameter map for path graph.}
                \label{fig:pathsimulation}
            \end{figure}
        \subsection{Circle Graph}\label{app:simulationcircle}
            Simulation results for the circle graph are shown in Figure~\ref{fig:circlesimulation}. Here all players have exactly one outgoing link, as the bounds for this case are less restrictive than if any player would have two outgoing links. 
            \begin{figure}[h!]
                \centering
                \begin{subfigure}[t]{0.48\textwidth}
                    \centering
                    \includegraphics[scale=0.57]{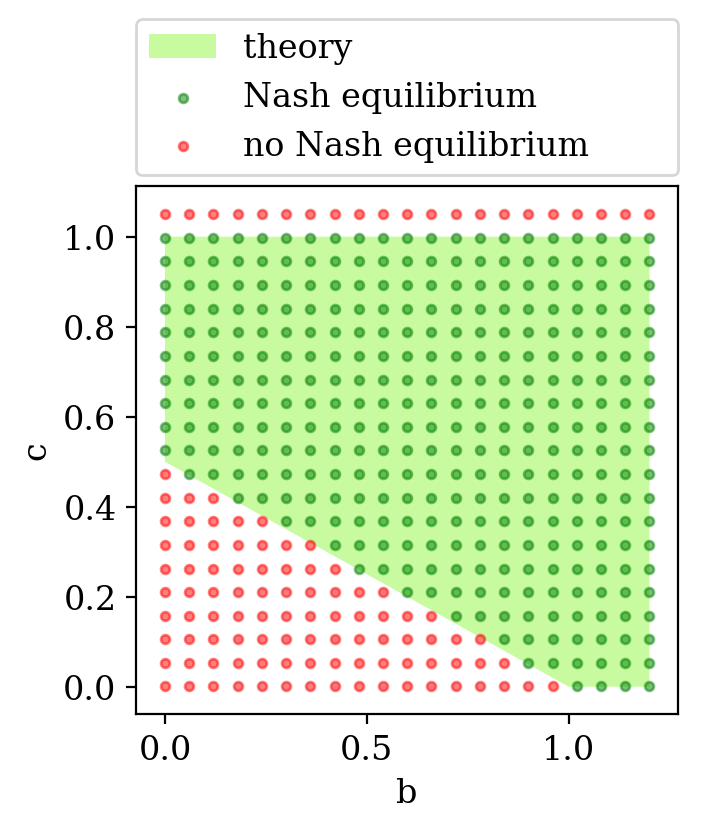}
                    \caption{$n=4$ \label{fig:circlen4}}
                \end{subfigure}
                \hfill
                \begin{subfigure}[t]{0.48\textwidth}                  
                    \centering
                    \includegraphics[scale=0.57]{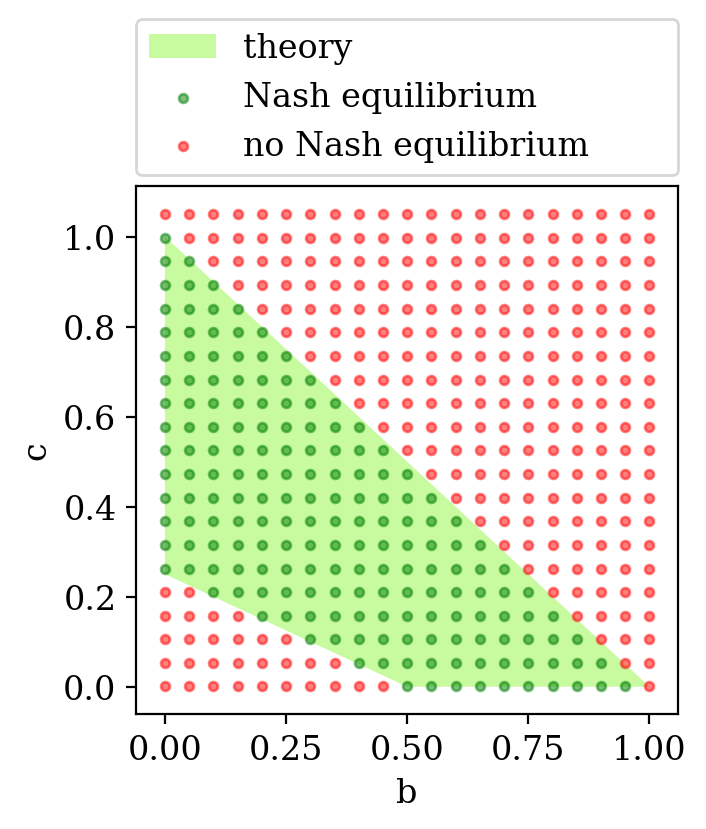}
                    \caption{$n=5$ \label{fig:circlen5}}
                \end{subfigure}\vspace{0.2cm}
                \begin{subfigure}[t]{0.48\textwidth}
                    \centering
                    \includegraphics[scale=0.57]{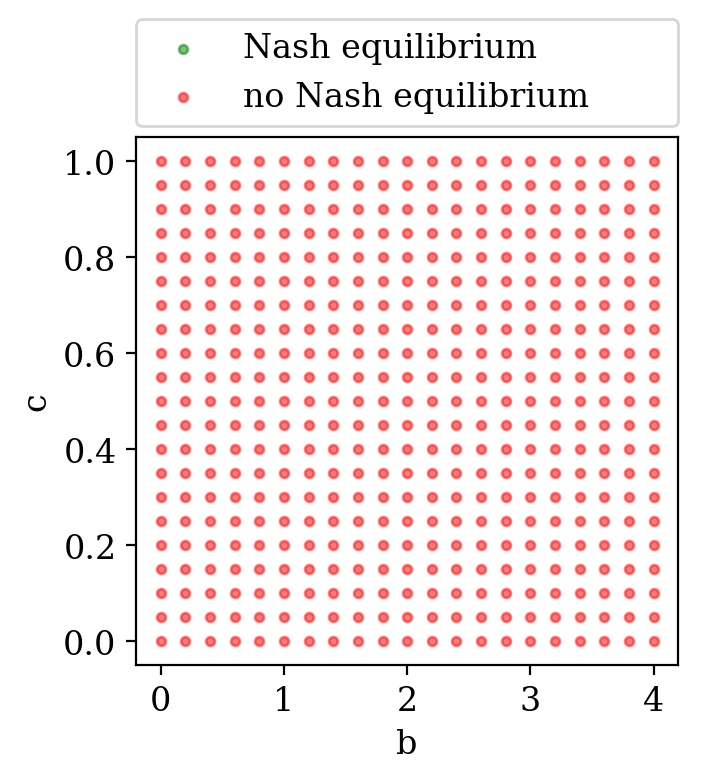}
                    \caption{$n=6$ \label{fig:circlen6}}
                \end{subfigure}
                \hfill
                \begin{subfigure}[t]{0.48\textwidth}                  
                    \centering
                    \includegraphics[scale=0.57]{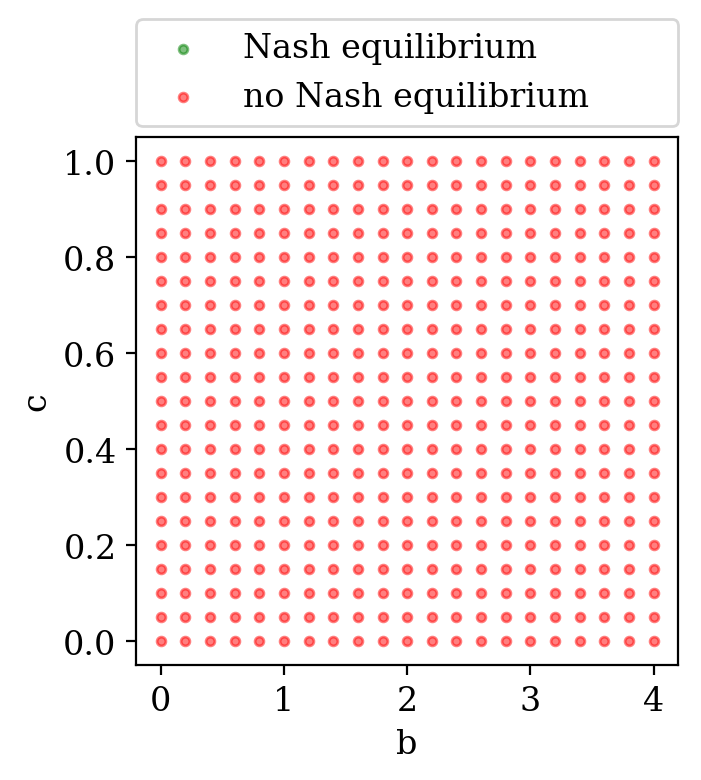}
                    \caption{$n=7$ \label{fig:circlen7}}
                \end{subfigure}
                \caption{\small Parameter map for circle graph.}
                \label{fig:circlesimulation}
            \end{figure}
            
            For $n=4$ and $n=5$, the simulation matches the theory. Additionally, the simulation also suggested that for $n\geq 6$ the circle graph is never a Nash equilibrium as indicated by the results in Figures~\ref{fig:circlen6} and~\ref{fig:circlen7}.  
        
        \subsection{Star Graph}
            In Figure~\ref{fig:starsimulation} we show when the star is a Nash equilibrium. For the simulation one player connects to everyone else. Other possible star graphs have more restrictive bounds on $b$ and $c$. 
            \begin{figure}[hbt!]
                \centering
                \begin{subfigure}[t]{0.48\textwidth}
                    \centering
                    \includegraphics[scale=0.57]{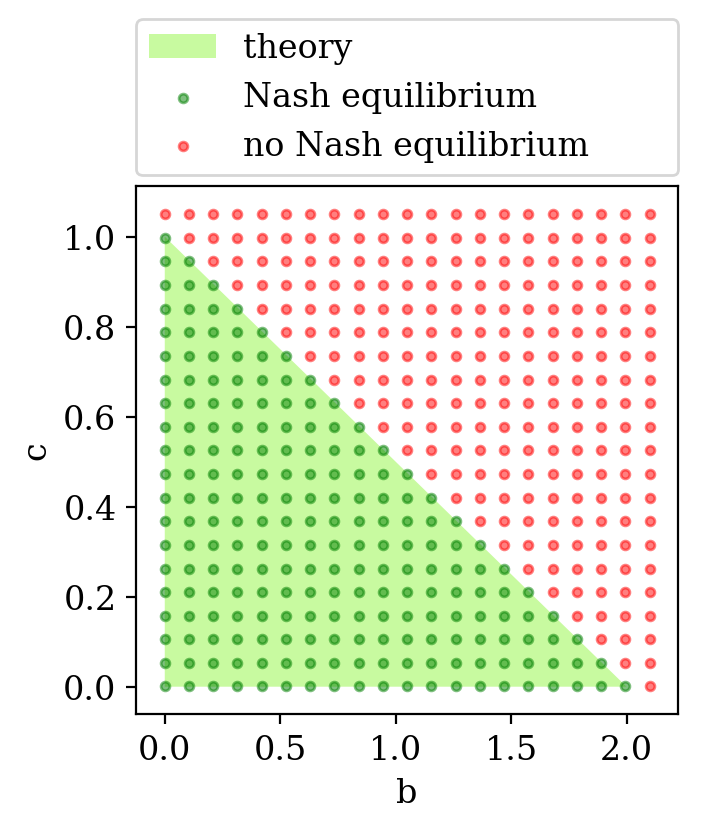}
                    \caption{$n=4$ \label{fig:star4}}
                \end{subfigure}
                \hfill
                \begin{subfigure}[t]{0.48\textwidth}                  
                    \centering
                    \includegraphics[scale=0.57]{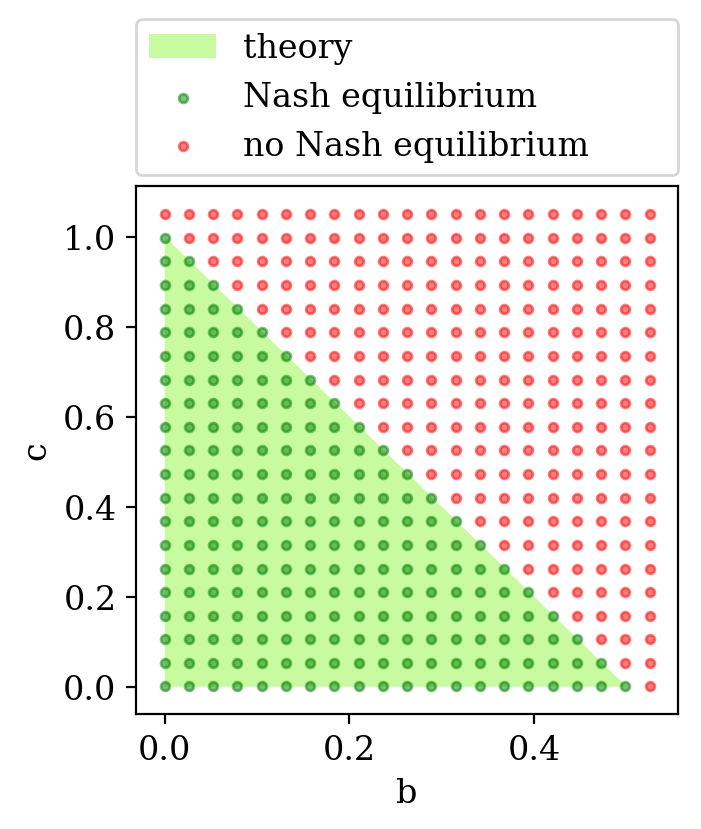}
                    \caption{$n=7$ \label{fig:star7}}
                \end{subfigure}
                \caption{\small Parameter map for star graph.}
                \label{fig:starsimulation}
            \end{figure}
        \subsection{Complete Bipartite Graph}
            Figure~\ref{fig:bicliquesimulation} shows when the complete bipartite graph is Nash equilibrium for the presented parameters. The simulation was done for players in the smaller subset having all the outgoing links, as this case leads to less restrictive bounds for $b$ and $c$. 
            \begin{figure}[hbt!]
                \centering
                \begin{subfigure}[t]{0.48\textwidth}
                    \centering
                    \includegraphics[scale=0.57]{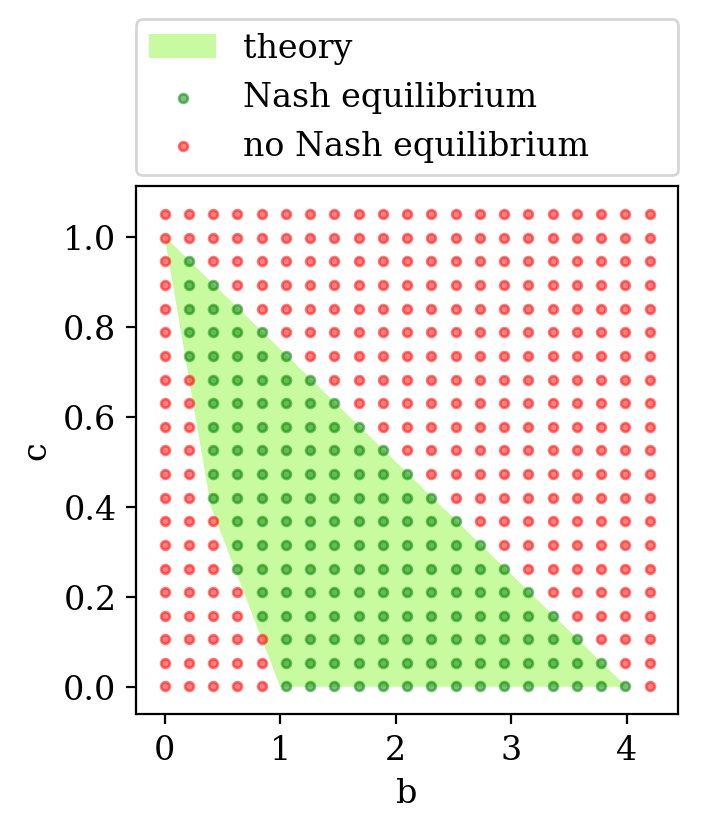}
                    \caption{$n=6$, $r=3$, $s=3$ \label{fig:n6r3s3}}
                \end{subfigure}
                \hfill
                \begin{subfigure}[t]{0.48\textwidth}
                    \centering
                    \includegraphics[scale=0.57]{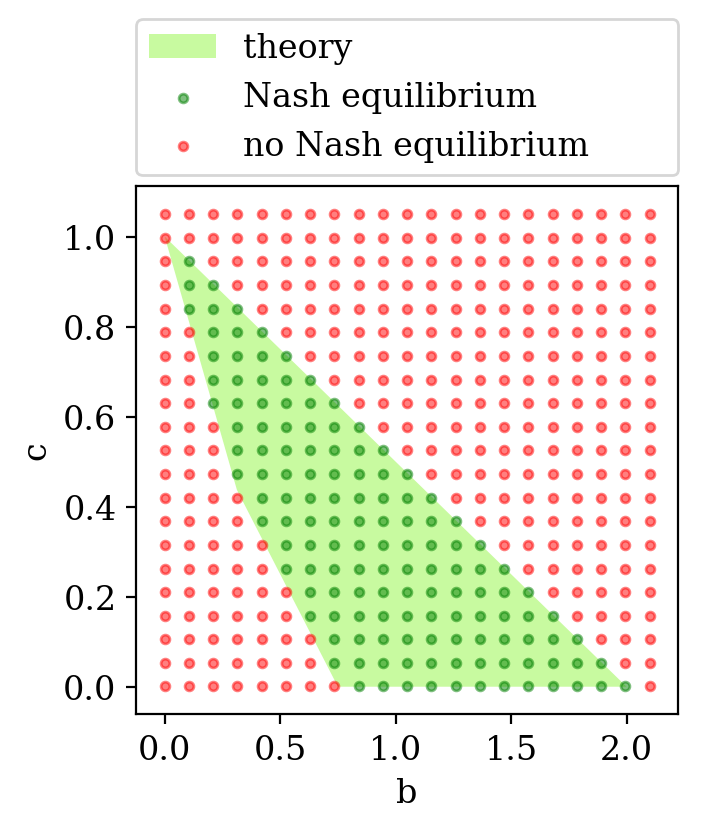}
                    \caption{$n=7$, $r=3$, $s=4$ \label{fig:n7r3s4}}
                \end{subfigure}\vspace{0.2 cm}
                \begin{subfigure}[t]{0.48\textwidth}
                    \centering
                    \includegraphics[scale=0.57]{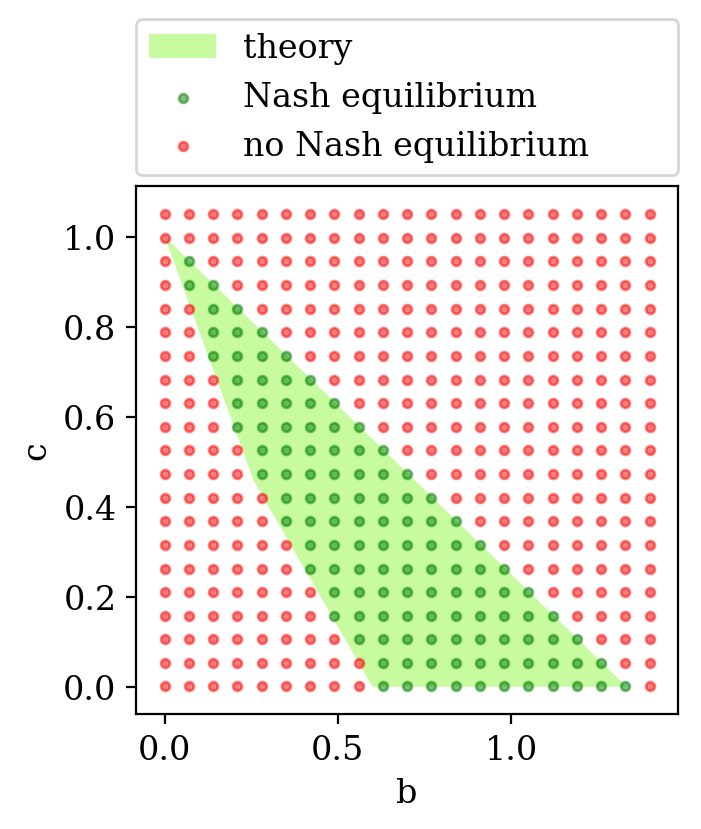}
                    \caption{$n=8$, $r=3$, $s=5$ \label{fig:n8r3s5}}
                \end{subfigure}
                \hfill
                \begin{subfigure}[t]{0.48\textwidth}
                    \centering
                    \includegraphics[scale=0.57]{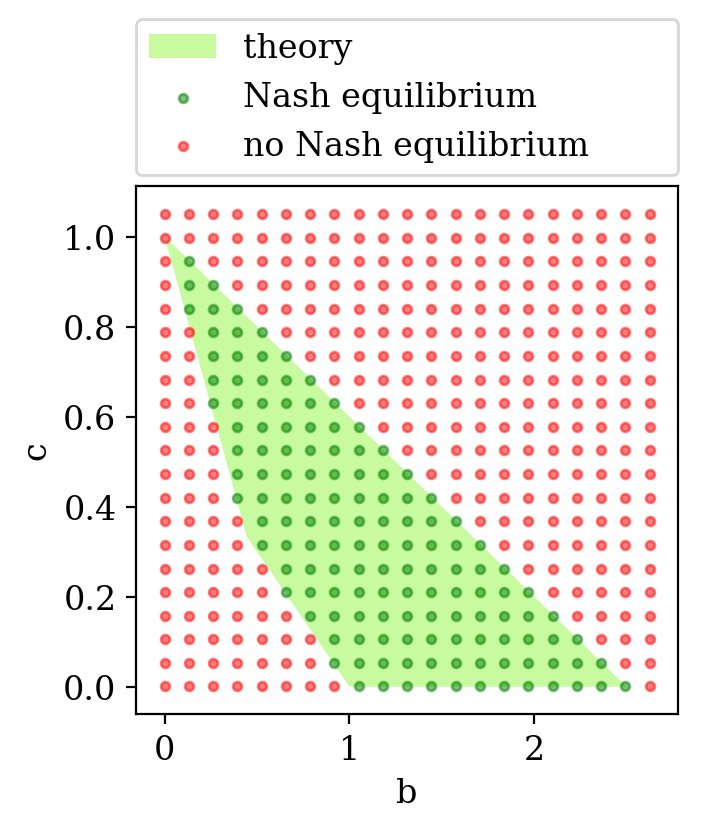}
                    \caption{$n=8$, $r=4$, $s=4$ \label{fig:n8r4s4}}
                \end{subfigure}
                \caption{\small Parameter map for complete bipartite graph.}
                \label{fig:bicliquesimulation}
            \end{figure}
            
    \section{Proofs} \label{app:proofs}
        \subsection{Social Optimum}
            \begin{lemma}[Theorem 1~\cite{gago2007betweenness}]\label{lem:firstlemma}
                The average betweenness $\overline{B}(G)$ in a connected graph $G$ can be expressed as: $\overline{B}(G) = (n-1) ( \overline{l}(G)-1)$, where $\overline{l}(G)$ is the average distance in $G$.
            \end{lemma}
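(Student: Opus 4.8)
The plan is to unfold the definition of average betweenness as $\overline{B}(G) = \frac{1}{n}\sum_{u \in [n]} B(u)$, where $B(u) = \sum_{s \neq r \neq u,\, m(s,r)>0} \frac{m_u(s,r)}{m(s,r)}$ is the betweenness centrality of $u$ in the sense of Freeman et al., and then to swap the order of summation so that the outer sum runs over ordered pairs $(s,r)$ with $s\neq r$ and the inner sum over candidate intermediate vertices $u\notin\{s,r\}$. This yields $\sum_{u} B(u) = \sum_{s \neq r} \frac{1}{m(s,r)}\sum_{u \neq s,r} m_u(s,r)$, where connectedness of $G$ guarantees $m(s,r)>0$ for every pair, so no terms are dropped.

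The key step is to evaluate the inner sum $\sum_{u \neq s,r} m_u(s,r)$ for a fixed pair $(s,r)$ by a double-counting argument over the set of shortest $s$--$r$ paths. Every shortest $s$--$r$ path has length $d_G(s,r)$ and hence exactly $d_G(s,r)-1$ internal vertices. Summing the indicator ``$u$ lies on the shortest path $P$'' first over the $m(s,r)$ shortest paths $P$ and then over vertices $u$ gives $m(s,r)\cdot\big(d_G(s,r)-1\big)$; summing in the opposite order gives $\sum_{u\neq s,r} m_u(s,r)$. Hence $\frac{1}{m(s,r)}\sum_{u\neq s,r} m_u(s,r) = d_G(s,r)-1$, and therefore $\sum_{u} B(u) = \sum_{s\neq r}\big(d_G(s,r)-1\big)$.

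Finally I would pass to the average distance. Since there are $n(n-1)$ ordered pairs $(s,r)$ with $s\neq r$ and $\overline{l}(G) = \frac{1}{n(n-1)}\sum_{s\neq r} d_G(s,r)$, we obtain $\sum_{s\neq r}\big(d_G(s,r)-1\big) = n(n-1)\,\overline{l}(G) - n(n-1)$, and dividing by $n$ gives $\overline{B}(G) = (n-1)\,\overline{l}(G) - (n-1) = (n-1)\big(\overline{l}(G)-1\big)$, as claimed. I do not anticipate a real obstacle; the only care needed is bookkeeping, namely handling the excluded diagonal pairs and excluded endpoints consistently and using connectedness exactly where it is needed so that every ordered pair contributes.
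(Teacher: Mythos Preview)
Your argument is correct: the double-counting identity $\sum_{u\neq s,r} m_u(s,r) = m(s,r)\big(d_G(s,r)-1\big)$ is exactly the right observation, and the rest is routine bookkeeping that you handle cleanly. Note, however, that the paper does not actually prove this lemma at all; it is quoted verbatim as Theorem~1 of~\cite{gago2007betweenness} and invoked as a black box, so there is no in-paper proof to compare against---your write-up supplies what the paper outsources.
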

            Lemma~\ref{lem:firstlemma} is proven in~\cite{gago2007betweenness} and relates the average betweenness and distance in a connected graph. We take advantage of Lemma~\ref{lem:firstlemma} to simplify the social cost expression.   
            \socialcost*
            \begin{proof}
                According to Lemma~\ref{lem:firstlemma} the social cost can be expressed as  follows for all $b \geq 0$ and $c >0$. 
                \begingroup
                    \allowdisplaybreaks
                    \begin{align*}
                        \text{cost}(s) =& \lvert E(G) \rvert  + b  \sum \limits _{u \in [n]} \text{betweenness} (u) + c \sum \limits _{u \in [n]} \text{closeness} (u) \\
                        = &\lvert E(G) \rvert  + b  \sum \limits _{u \in [n]} \left((n-1)(n-2) -\sum \limits _{\substack{s,r \in [n]: \\  s \neq r\neq u, \\ m(s,r) > 0}} \dfrac{m_u (s,r)}{m (s,r)}\right) \\
                        &+ c  \sum \limits _{u \in [n]}  \sum \limits _{r \in [n]-u} \left( d_{G[s]}(u,r) -1 \right)\\
                        =&\lvert E(G)\rvert + b \cdot n\cdot (n-1)(n-2) - b \cdot n \cdot \overline{B}(G) + c \cdot n \cdot (n-1)  (\overline{l}(G) -1) \\
                        =& \lvert E(G) \rvert + b \cdot n\cdot (n-1)(n-2)+ (c- b) \cdot n \cdot (n-1) (\overline{l}(G) -1)\\
                        =& \lvert E(G) \rvert  + b \cdot n\cdot (n-1)(n-2)+ (c- b) \cdot \sum \limits _{u \in [n]}  \sum \limits _{r \in [n]-u} \left( d_{G[s]}(u,r) -1 \right) \hfill \qed
                    \end{align*}
                \endgroup
            \end{proof}
            
            \socialopt*
             \begin{proof}
            Using Lemma~\ref{lem:secondlemma} we can lower bound the social cost for $ c \geq b$ as follows:
            \begin{align*}
                \text{cost}(s) =& \lvert E(G) \rvert+ b \cdot n\cdot (n-1)(n-2)+ \underbrace{(c- b)}_{\geq 0}  \sum \limits _{u \in [n]}  \sum \limits _{r \in [n]-u} \left( d_{G[s]}(u,r) -1 \right) \\
                \geq & \lvert E(G)\rvert + b \cdot n\cdot (n-1)(n-2)+ (c-b) (n\cdot (n-1) -2\lvert E(G) \rvert )\\
                =& (1- 2\cdot (c-b)) \cdot \lvert E(G)\rvert + b \cdot n\cdot (n-1)(n-2)+ (c-b) (n\cdot (n-1))
            \end{align*}
            
            since every pair of nodes that is not connected by an edge is at least distance 2 apart~\cite{fabrikant2003on}. This lower bound is achieved by any graph with diameter at most 2. It follows that for $c > \frac{1}{2} + b$ the social optimum is a complete graph, maximizing $\lvert E \rvert$, and for $b \leq c \leq \frac{1}{2} + b$ the social optimum is a star, minimizing $\lvert E \rvert$.
            
            To find the social optimum for $c <b$, we rewrite the social cost as
            \begin{align*}
                \text{cost}(s) =& \lvert E(G) \rvert+ b \cdot n\cdot (n-1)(n-2) - (b- c) \cdot \sum \limits _{u \in [n]}  \sum \limits _{r \in [n]-u} \left( d_{G[s]}(u,r) -1 \right) \\
                =& \lvert E(G) \rvert- 2\cdot (b- c) \cdot d(G)+ b \cdot n\cdot (n-1)(n-2)  + (b-c)  \cdot n \cdot (n-1)
            \end{align*}
            For a connected graph the social cost is then minimized for a tree, as 
            $ \lvert E(H) \rvert - a \cdot d(H) > \lvert E(G) \rvert - a \cdot d(G) $
            if $G$ is a subgraph of $H$ and $a > 0$. For any tree, the number of edges is $n-1$. 
            Using Lemma~\ref{lem:thirdlemma}, we get that 
            \begin{align*}
                \text{cost}(s) =& \lvert E(G) \rvert+ b \cdot n\cdot (n-1)(n-2) - (b- c)  \sum \limits _{u \in [n]}  \sum \limits _{r \in [n]-u} \left( d_{G[s]}(u,r) -1 \right) \\
                \geq & \left( 1 + \left( \dfrac{2}{3}b +\dfrac{1}{3}c \right)  n \cdot (n-2) \right) (n-1)
            \end{align*}
            is a lower bound for the social cost which is achieved by a path graph.\hfill \qed 
        \end{proof} 
        
        \subsection{Hardness of Finding the Best Response}
            \np*
            
            \begin{proof} 
                The following proof is adapted from Proposition 1 in~\cite{fabrikant2003on}.
                
                Given the configuration of the rest of the graph, player $u$ has to compute her best response; a subset of players to build channels to such that her cost is minimized. For $b =0$ and $0.5 < c < 1$ and no incoming links from the rest of the graph, we know that the diameter of $G$ can be at most 2. Additionally, making more than the minimum number of required links, only improves the distance term by $c$, which is strictly smaller than the cost of establishing a link. Thus, $u$'s strategy is a dominating set for the rest of the graph. 
                    
                The cost of $u$ is minimized when the size of the subset is minimized. The minimum size dominating set corresponds to $u$'s best response. Hence, it is NP-hard to compute a player's best response by reduction from the dominating set. \hfill \qed 
            \end{proof}
            
        \subsection{Complete graph}
            \cliqueone*
            \begin{proof}
                The addition of an edge by a player never increases her betweenness cost. Thus, by the definition of the cost function any Nash equilibrium cannot be missing any edges whose addition would reduce a players closeness by more than 1, the cost of building an edge. As $c>1$, no edge can be missing in the graph and the only Nash equilibrium is the complete graph.\hfill \qed 
            \end{proof}
            
           \cliquetwo*
            \begin{proof}
                In a complete graph the removal of an edge by a player does not change her betweenness cost and her closeness cost is increased by $c$. Thus, the cost of a player would decrease when removing one edge. Therefore, the complete graph is not a Nash equilibrium for $c<1$.\hfill \qed 
            \end{proof}

        \subsection{Path Graph}
            \patha*
            \begin{proof}
                In a game with four players, a path graph with outgoing links from the endpoints is never a Nash equilibrium. Such an endpoint could simply reduce her cost by $c$ through exchanging her current channel with a channel to the player currently two edges away.
                
                In the remaining path graphs, all channels are initiated by the central nodes. A player can never increase her cost by removing or exchanging edges. The minimum change of cost can be achieved by an endpoint initiating a channel to the other endpoint. This change in cost is given by 
                $$\Delta \text{cost}(\text{add $1$ link}) = 1 - b - 2 \cdot c.$$
                We follow that for $n=4$, the path graph is a Nash equilibrium for $1 \leq b + 2 \cdot c$.
                \hfill \qed 
            \end{proof}
            
            \pathb*
            \begin{proof}
                As in the case with four players, a path graph with five nodes in which the endpoints initiate channels is never a Nash equilibrium. For instance, an endpoint could reduce her cost by $2\cdot c$ through connecting to the player currently three edges away instead. 
                
                Additionally, in a path graph with five players, the players neighboring the endpoints cannot initiate channels to the central node in a Nash equilibrium. Replacing such a link with a link to the other neighbor of the central node would lead to a cost reduction of $c$. 
                
                Thus, it only remains to consider a path graph with channels initiated as show in Figure~\ref{fig:pathwith5}.
                
                \begin{figure}
                    \centering
                    \begin{tikzpicture}[scale = 0.9]
                        \tikzstyle{every node}=[font=\scriptsize]
                		\node[draw, circle, minimum size = 0.6cm,thick] at (0,0) (A0) {};
                		\node[draw, circle, minimum size = 0.6cm,thick] at (2,0) (A1) {};
                		\node[draw, circle, minimum size = 0.6cm,thick] at (4,0) (A2) {};
                		\node[draw, circle, minimum size = 0.6cm,thick] at (6,0) (A3) {};
                		\node[draw, circle, minimum size = 0.6cm,thick] at (8,0) (A4) {};
                			
                		\draw[ ->,>=latex,thick] (A1) -- (A0);
                		\draw[ ->,>=latex,thick] (A2) -- (A1);
                		\draw[ ->,>=latex,thick] (A2) -- (A3);
            			\draw[ ->,>=latex,thick] (A3) -- (A4);
            		\end{tikzpicture}
                    \caption{\small Path graph.}
                    \label{fig:pathwith5}
                \end{figure}
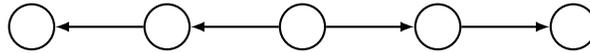
                
                Here, the minimum change in cost is achieved by an endpoint initiating a new channel to the player four edges away. We have
                $$\Delta \text{cost}(\text{add $1$ link}) = 1 - 2 \cdot b - 4 \cdot c.$$
                The path graph with five nodes is a Nash equilibrium for $1 \leq 2 \cdot b + 4 \cdot c$.
                \hfill \qed 
            \end{proof}
            
            \paththeorem*
              \begin{proof}
                To show that the path graph is never a Nash equilibrium for $n \geq 6$, we will show that at least one player in a path graph consisting of more than six players can always reduce her cost by changing strategy. 
                
                In a path graph with at least six players, at least one player $u$ has an outgoing edge to a player $v$ at least two steps from the end of the path on the opposite side of player $u$. This is illustrated in Figure~\ref{fig:path1} and we consider this to be strategy $s$. In this case it is always more beneficial for player $u$ to connect to player $w$ instead of player $v$. Let's refer to this strategy as strategy $\tilde{s}$ (Figure~\ref{fig:path2}). 
                
                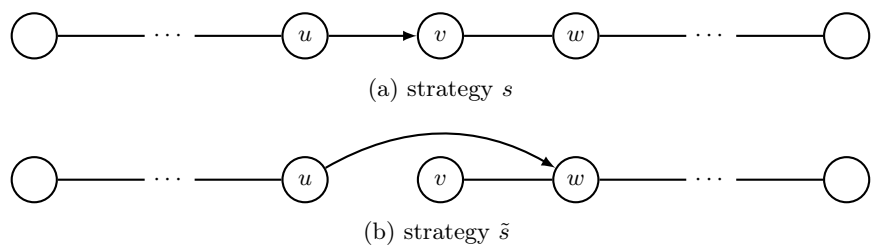
\begin{figure}[hbt!]
                    \centering
                    \begin{subfigure}[t]{\textwidth}
                        \centering
                        \begin{tikzpicture}[scale = 0.9]
                			\node[draw, circle, minimum size = 0.6cm,thick] at (0,0) (A0) {};
                			\node[] at (2,0) (A1) {$\cdots$};
                			\node[draw, circle, minimum size = 0.6cm,thick] at (4,0) (A2) {$u$};
                			\node[draw, circle, minimum size = 0.6cm,thick] at (6,0) (A3) {$v$};
                			\node[draw, circle, minimum size = 0.6cm,thick] at (8,0) (A4) {$w$};
                			\node[] at (10,0) (A5) {$\cdots$};
                			\node[draw, circle, minimum size = 0.6cm,thick] at (12,0) (A6) {};
                			
                			\draw[thick] (A0) -- (A1);
                			\draw[thick] (A1) -- (A2);
                			\draw[ ->,>=latex,thick] (A2) -- (A3);
                			\draw[thick] (A3) -- (A4);
                			\draw[thick] (A4) -- (A5);
                			\draw[thick] (A5) -- (A6);
                		\end{tikzpicture}
                        \caption{strategy $s$\label{fig:path1}}
                    \end{subfigure}\vspace{0.2cm}
                    
                    \begin{subfigure}[t]{\textwidth}
                        \centering
                    	\begin{tikzpicture}[scale = 0.9]
                    	    \node[draw, circle, minimum size = 0.6cm,thick] at (0,0) (A0) {};
                			\node[] at (2,0) (A1) {$\cdots$};
                			\node[draw, circle, minimum size = 0.6cm,thick] at (4,0) (A2) {$u$};
                			\node[draw, circle, minimum size = 0.6cm,thick] at (6,0) (A3) {$v$};
                			\node[draw, circle, minimum size = 0.6cm,thick] at (8,0) (A4) {$w$};
                			\node[] at (10,0) (A5) {$\cdots$};
                			\node[draw, circle, minimum size = 0.6cm,thick] at (12,0) (A6) {};
                			
                			\draw[thick] (A0) -- (A1);
                			\draw[thick] (A1) -- (A2);
                			\draw[thick] (A3) -- (A4);
                			\draw[thick] (A4) -- (A5);
                			\draw[thick] (A5) -- (A6);
                			\path[thick,->,>=latex] (A2) edge [bend left] node {} (A4);
                		\end{tikzpicture}	
                    	\caption{strategy $\tilde{s}$ \label{fig:path2}}
                    \end{subfigure}
                    \caption{\small Strategy deviation of player $1$.}
                \end{figure}
                
                The change in cost for this strategy is given as 
                $$ \Delta \text{cost}_u(s \text{ to } \tilde{s}) =  - c\cdot (m-2),$$
                where $m$ is the number of edges player $v$ is away from the endpoint on the opposite side $u$. Thus, the change in cost is negative and the path graph cannot be a Nash equilibrium for $n \geq 6$. \hfill \qed
            \end{proof}

            
        \subsection{Circle Graph}
            \stara*
            \begin{proof}
                Adding a link to the only player one is not directly connected to, does not decrease a player's betweenness cost. It is not beneficial for a player to initiate an additional link, if the closeness cost reduction is not bigger than the link cost of one. Thus, a player does not add an additional edge for 
                $$c \leq 1.$$
                        
                A player's change in cost when removing a single link is given by 
                \begin{align*}
                    \Delta \text{cost}_u(\text{remove 1 link}) =& -1 + b + 2 \cdot c.
                \end{align*}
                Hence, a player cannot reduce her cost through the removal of a single link if $1 \leq b + 2 \cdot c$.
                    
                Additionally, a player with two outgoing links will never eliminate both without adding another link. Her cost would be infinite otherwise. Exchanging a single link with a new link to the player one was not previously connected to, never yields a negative change in cost and is therefore never a player's best response. 
                    
                Finally, the change in cost when removing two links and adding a new link to the player one was not previously connected to is given by  
                \begin{align*}
                    \Delta \text{cost}_u(\text{remove 2 \& add 1 link}) =& -1 + b + c,
                \end{align*}
                but this bound is more restrictive than the previous one, and there is no need for a player to have more than one outgoing edge.
                    
                Thus, the circle graph with $n=4$ is a Nash equilibrium for $c \leq 1 \leq b +2 \cdot c$.\hfill \qed 
            \end{proof}
                
            \starb*
            \begin{proof}
                The change in cost for the addition of links to the players, a player was not directly connected to previously, is given by 
                \begin{align*}
                    \Delta \text{cost}_u(\text{add $m$ links}) =& m - m \cdot b - m \cdot c,
                \end{align*}
                where $m \in \{1,2\}$. Thus, a player can reduce her cost by adding more links when $1 \leq b + c$.
                    
                If a player in the circle graph removes one outgoing link the change in cost is 
                \begin{align*}                    
                    \Delta \text{cost}_u(\text{remove 1 link}) =& -1 + 2 \cdot b+ 4 \cdot c.
                \end{align*}
                A player with an outgoing link benefits from the removal if $1 \geq 2 \cdot b + 4 \cdot c$. On the other hand, a player with two outgoing edges will never remove both links without adding a new link as the graph would become disconnected otherwise. Additionally, she never benefits more from exchanging links as the change in cost is non-negative. When replacing both her links by a new link, the change in cost is
                \begin{align*}
                    \Delta \text{cost}_u(\text{remove 2 \& add 1 link})=& -1 + 2 \cdot  b + 2 \cdot c. 
                \end{align*}
                However, this leads to a more restrictive bound than just removing one link and no player in a circle graph needs more than one outgoing link. 
                    
                We have shown that for $n=5$ the circle graph is a Nash equilibrium if and only if  
                $$ b + c \leq 1 \leq 2\cdot b +4\cdot c   .$$\hfill \qed 
            \end{proof}
            
            \circle*
              \begin{proof}
                We will show that any player with one outgoing edge in a circle graph with $n \geq N$ players, has an incentive to change strategy. Thus, the circle graph cannot be a Nash equilibrium. 
                
            	\begin{figure}[hbt!]
                    \centering
                    \begin{subfigure}[t]{0.48\textwidth}
                        \centering
                        \begin{tikzpicture}
                            \tikzstyle{every node}=[font=\small]
                			\def \n {6}
                			\def \radius {2.2cm}
                			\def \margin {14.5} 
                            \begin{scope}
            			        \clip (0,0)circle (\radius);
                                \draw[opacity =0.3] (-2.2,0)--(2.2,0);
                                \draw[opacity =0.3] (0,-2.2)--(0,2.2);
                                \node[align=center,anchor = center] at (1,0.35) {$1^{\text{st}}$ quadrant};
                                
                                \node[align=center,anchor = center] at (-1,0.35) {$4^{\text{th}}$ quadrant};
                               
                                \node[align=center,anchor = center] at (-1,-0.35) {$3^{\text{rd}}$ quadrant};
                                
                                \node[align=center,anchor = center] at (1,-0.35) {$2^{\text{nd}}$ quadrant};
                            \end{scope}

                			\node[draw, circle, minimum size = 1.1cm,thick, fill = white] at ({360/\n *0 +90}:\radius) (s0) {$0$};
                			\node[draw, circle, minimum size = 1.1cm,thick, fill = white] at ({360/\n *5 +90}:\radius) (s1) {$1$};
                			\node[draw, circle, minimum size = 1.1cm,thick, fill = white] at ({360/\n *3+ 90}:\radius) (sn2) {$\left \lfloor{\frac{n}{2}}\right \rfloor $};
                		    \node[draw, circle, minimum size = 1.1cm,thick, fill = white] at ({360/\n *1 + 90}:\radius) (sn1) {$n-1$};
                		    
                			\draw[>=latex, dotted,thick] ({360/\n * (5)-\margin +90}:\radius) arc ({360/\n * (5)-\margin +90}:{360/\n * (3)+\margin +90}:\radius);
                			\draw[>=latex, dotted,thick] ({360/\n * (3)-\margin +90}:\radius) arc ({360/\n * (3)-\margin +90}:{360/\n * (1)+\margin +90}:\radius);
                			\draw[>=latex,thick] ({360/\n * (1)-\margin +90}:\radius) arc ({360/\n * (1)-\margin +90}:{360/\n * (0)+\margin +90}:\radius);
                			\draw[->, >=latex,thick] ({360/\n * (0)-\margin +90}:\radius) arc ({360/\n * (0)-\margin +90}:{360/\n * (-1)+\margin +90}:\radius);
                		
                		\end{tikzpicture}
                        \caption{strategy $s$\label{fig:completecircle}}
                    \end{subfigure}
                    \hfill
                    \begin{subfigure}[t]{0.48\textwidth}
                        \centering
                        \begin{tikzpicture}
                        
                		    \tikzstyle{every node}=[font=\small]
                    		\def \n {6}
                    		\def \radius {2.2cm}
                    		\def \margin {14.5} 
                            
                            \begin{scope}
            			        \clip (0,0)circle (\radius);
                                \draw[opacity =0.3] (-2.2,0)--(2.2,0);
                                \node[align=center,anchor = center] at (1,0.35) {$1^{\text{st}}$ quadrant};
                               
                                \node[align=center,anchor = center] at (-1,0.35) {$4^{\text{th}}$ quadrant};
                                
                                \node[align=center,anchor = center] at (-1,-0.35) {$3^{\text{rd}}$ quadrant};
                                                        \node[align=center,anchor = center] at (1,-0.35) {$2^{\text{nd}}$ quadrant};
                            \end{scope}
                        
                    		\node[draw, circle, minimum size = 1.1cm,thick, fill =white] at ({360/\n *0 +90}:\radius) (s0) {$0$};
                    		\node[draw, circle, minimum size = 1.1cm,thick, fill =white] at ({360/\n *5 +90}:\radius) (s1) {$1$};
                    		\node[draw, circle, minimum size = 1.1cm,thick, fill =white] at ({360/\n *3+ 90}:\radius) (sn2) {$\left \lfloor{\frac{n}{2}}\right \rfloor $};
                    		\node[draw, circle, minimum size = 1.1cm,thick, fill =white] at ({360/\n *1 + 90}:\radius) (sn1) {$n-1$};
                    			
                    		\draw[>=latex, dotted,thick] ({360/\n * (5)-\margin +90}:\radius) arc ({360/\n * (5)-\margin +90}:{360/\n * (3)+\margin +90}:\radius);
                    		\draw[>=latex, dotted,thick] ({360/\n * (3)-\margin +90}:\radius) arc ({360/\n * (3)-\margin +90}:{360/\n * (1)+\margin +90}:\radius);
                    		\draw[>=latex,thick] ({360/\n * (1)-\margin +90}:\radius) arc ({360/\n * (1)-\margin +90}:{360/\n * (0)+\margin +90}:\radius);
                    			
                    		\draw[ ->,>=latex,thick] (s0) -- (sn2);
                    		
                    	\end{tikzpicture}
                    	\caption{strategy $\tilde{s}$ \label{fig:brokencircle}}
                    \end{subfigure}
                    \caption{\small Strategy change of player $0$.}
                \end{figure}
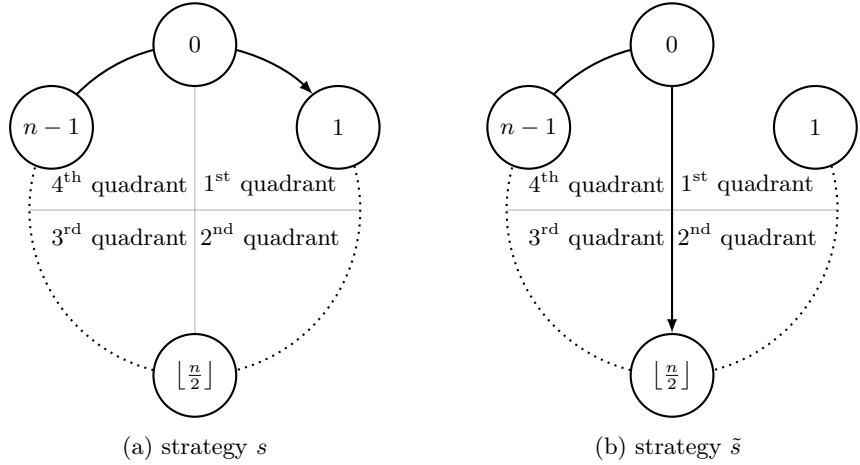
                
                Consider the circle graph in Figure~\ref{fig:completecircle}. Without loss of generality, assume that player $0$ has one outgoing edge to player $1$. As the equations for $0$'s betweenness and closeness differ for $n$ even or odd, we will use asymptotic notation throughout the following analysis. 
                
                In the circle graph (strategy $s$), the betweenness of $0$ is  
            	$$\text{betweenness}_0(s) = \dfrac{3}{4} \cdot n^2 + o\left(n^2\right) $$
            	and the $0$'s closeness is
            	$$\text{closeness}_0(s) = \dfrac{1}{4} \cdot n^2 + o\left(n^2\right).$$
                
            	Now, player $0$ removes the link to player $1$ and initiates a new link to player $\left \lfloor{\frac{n}{2}}\right \rfloor$, seen in Figure~\ref{fig:brokencircle}. We will refer to this strategy as $\tilde{s}$. The first part of $0$'s betweenness cost reduction comes from the shortest paths of players in the $1^{\text{st}}$ and $2^{\text{nd}}$ quadrant to the $4^{\text{th}}$ quadrant, as well as the other way around; the quadrants are as shown in Figure~\ref{fig:brokencircle}. These shortest paths go through the shortcut and subtract
            	$$2 \cdot \left( \frac{n}{2} + o(n)\right) \cdot \left( \frac{n}{4} + o(n) \right) = \frac{n^2}{4} + o\left(n^2\right), $$
            	from $0$'s betweenness cost. The second part stems from nodes in the $1^{\text{st}}$ and $3^{\text{rd}}$ quadrant using node $0$ as a gateway in the cycle. We have a further betweenness cost reduction of 
            	$$\frac{1}{4} \cdot \left( \frac{n}{2} \right)^2 + o\left(n^2\right)  =  \frac{n^2}{16}  + o\left(n^2\right)  .$$
            	Thus, the betweenness of $0$ with strategy $\tilde{s}$ is at most 
            	$$ \text{betweenness}_0(\tilde{s}) = n^2 -\frac{n^2}{4} - \frac{n^2}{16} + o\left(n^2\right)= \frac{11}{16} \cdot n^2 +o\left(n^2\right).$$

            	The closeness of player $0$ to players in the $3^{\text{rd}}$ and $4^{\text{th}}$ quadrant is 
            	$$ \dfrac{1}{4} \cdot \left( \dfrac{n}{2}\right)^2 + o\left(n^2\right) = \dfrac{n^2}{16} + o\left(n^2\right),$$
            	and to players in the $1^{\text{st}}$ and $2^{\text{nd}}$ quadrant $0$'s closeness is 
            	$$ \dfrac{1}{2} \cdot \left( \dfrac{n}{2}\right)^2 + o\left(n^2\right) =\dfrac{n^2}{8} +o\left(n^2\right) .$$
            	Therefore, $0$'s closeness is
            	$$ \text{closeness}_0(\tilde{s}) = \dfrac{n^2}{16} +\dfrac{n^2}{8} +o\left(n^2\right)  = \dfrac{3}{16} \cdot n^2 +o\left(n^2\right).$$
            	
                Player $0$'s change in cost is
            	\begin{align*}
                    \Delta \text{cost}_u(s \text{ to } \tilde{s}) =&  \left(\frac{11}{16}  n^2-\dfrac{3}{4}  n^2 + o\left(n^2\right) \right) \cdot b +\left( \dfrac{3}{16}  n^2 -\dfrac{1}{4}  n^2 + o\left(n^2\right) \right) \cdot c \\
                    =&  -\left(\frac{1}{16} n^2 + o\left(n^2\right) \right) (b+c). 
                \end{align*}
                As player $0$ would choose strategy $\tilde{s}$ over strategy $s$ for $\Delta \text{cost}_u(s \text{ to } \tilde{s}) <0$, there exists a $N>0$, such that for $n \geq N$ player the circle graph is never a Nash equilibrium. \hfill \qed 
            \end{proof}
            
        \subsection{Price of Anarchy}
            \poaone*
            \begin{proof}
                The only Nash equilibrium for $c>1$ is the complete graph as stated by 
                Theorem~\ref{thm:thridtheorem}. As the social optimum for $c > \frac{1}{2} +b$ is also the complete graph (Theorem~\ref{thm:firsttheorem}), the price of anarchy is 
                $$\text{PoA} =1,$$
                for $ c > 1$ and $c > \frac{1}{2} +b$. \hfill \qed
            \end{proof}
            
            \poatwo*
            \begin{proof}
                For $c > 1$ and $b \leq c \leq \frac{1}{2} +b$ the only Nash equilibrium is the complete graph (Theorem~\ref{thm:thridtheorem}) and according to Theorem~\ref{thm:firsttheorem}, the social optimum is the star graph. Thus, the price of anarchy is given by 
            \begin{align*}
                \text{PoA} =& \dfrac{\text{cost}(\text{complete graph})}{\text{cost}(\text{star graph})}\\
                =& \dfrac{\left(\frac{1}{2}+ (n-2) \cdot b \right)(n-1)\cdot n}{(1-2(c-b)+(c-b)\cdot n+b\cdot (n-2)\cdot n)(n-1)}\\
                =& \dfrac{(\left(\frac{1}{2}+ (n-2) \cdot b \right)\cdot n}{1+ (c + b\cdot (n-1))(n-2)}.
            \end{align*} \hfill \qed
            \end{proof}
            
            \poathree*
            \begin{proof}
            For $1<c<b$ the only Nash equilibrium is the complete graph as stated in Theorem~\ref{thm:thridtheorem} and the social optimum is a path graph (Theorem~\ref{thm:firsttheorem}). The price of anarchy is given by 
            \begin{align*}
                \text{PoA} =& \dfrac{\text{cost}(\text{complete graph})}{\text{cost}(\text{path graph})}\\
                =& \dfrac{\left(\frac{1}{2}+ (n-2) \cdot b \right)(n-1)\cdot n}{\left( 1  + \left(\frac{2}{3}b + \frac{1}{3}c\right) \cdot n \cdot (n-2)\right) (n-1)}\\
                =& \dfrac{(\left(\frac{1}{2}+ (n-2) \cdot b \right)\cdot n}{1  + \left(\frac{2}{3}b + \frac{1}{3}c\right) \cdot n \cdot (n-2) }.
            \end{align*}\hfill \qed
        \end{proof}
        
        \poafour*
         \begin{proof}
            For $c>1$ and $c > \frac{1}{2}+b$, the price of anarchy is one and therefore it is also $\mathcal{O}(1)$. 
                
            We have that for $c>1$ and $b \leq c \leq \frac{1}{2} +b$,  
            $$\text{PoA} = \frac{\left(\frac{1}{2}+ (n-2) \cdot b \right)\cdot n}{1+ (c + b\cdot (n-1)) (n-2)} = \mathcal{O} \left(\frac{b\cdot n^2}{ b\cdot n^2} \right)  =  \mathcal{O}(1),$$
            and for $1<c<b$, 
            $$\text{PoA} = \frac{\left(\frac{1}{2}+ (n-2) \cdot b \right)\cdot n}{1  + \left(\frac{2}{3}b + \frac{1}{3}c\right) \cdot n \cdot (n-2)} =\mathcal{O} \left(\frac{b\cdot n^2}{ b\cdot n^2} \right)  =  \mathcal{O}(1).$$
            Thus, for $c>1$ we have $\text{PoA} =\mathcal{O}(1)$. \hfill \qed
        \end{proof}  
        
        \poafive*
        \begin{proof}
            For $c+b <\frac{1}{n^2}$, all Nash equilibria are trees. Unless the distance to a player is infinite, no player in the network has an incentive to build an edge. 
                    
            As both the maximum possible change in  $\text{betweenness}_u(s)$ and $\text{closeness}_u(s)$ for a node $u$ in a connected graph is less than $n^2$ and all Nash equilibria are connected, $\Delta \text{cost}_u (s) > - n ^2 \cdot c -  n ^2 \cdot b  +1$.
            We require $\Delta \text{cost}_u (s) \geq 0$ such that $u$ does not benefit from initiating an additional channel. Thus, for $c + b \leq \frac{1}{n^2}$ all Nash equilibria are spanning trees. 
                    
            For $c + b \leq \frac{1}{n^2}$  the social optimum is also a spanning tree, as it is either the star or path graph. It easily follows that for $c + b \leq \frac{1}{n^2}$ and all spanning trees $\text{cost}(s) = \Theta(n)$ and therefore the price of anarchy is $\mathcal{O}(1)$.
        \end{proof}
        
        \poasix*
        \begin{proof}
            The price of anarchy is 
            $$\text{PoA} = \mathcal{O} \left( \dfrac{\lvert E(G) \rvert + n^3 \cdot b + (c- b) \cdot  \sum  _{u \in [n]}  \sum  _{r \in [n]-u} \left( d_{G}(u,r) -1 \right) }{ n^3 \cdot b + n} \right).$$
            We can say that $d_{G}(u,r) < \Theta \left (\frac{2}{\sqrt{c+b}} \right)$, as player $u$ would connect to player $r$ otherwise. Player $u$ would become closer to half the nodes on the path otherwise and reduce her betweenness cost through the routing potential gained by the link addition. Therefore we have, 
            $$\text{PoA} = \mathcal{O} \left( \dfrac{\lvert E(G) \rvert + n^3 \cdot b + n^2 \frac{c-b}{\sqrt{b+c}} }{b \cdot n^3 + n} \right).$$
            It follows that
            \begin{align*}
                \mathcal{O} \left( \dfrac{n^3 \cdot b }{n^3 \cdot b   + n} \right) =& \mathcal{O}(1),
                &\text{and}& 
                &\mathcal{O} \left( \dfrac{n^2 \frac{c-b}{\sqrt{b+c}}}{n^3 \cdot b   + n} \right) =& \mathcal{O} \left( \dfrac{c-b}{n^2 \cdot b   + 1} \right) = \mathcal{O} \left( 1 \right),
            \end{align*}
            as  $c + b \geq \frac{1}{n^2}$ and $c<1$. Thus, it only remains to consider $\mathcal{O} \left( \frac{\lvert E(G) \rvert }{b \cdot n^3 + n} \right)$. 
                
            As $\lvert E(G) \rvert = \mathcal{O}(n^2)$ for any Nash equilibrium, we have $\text{PoA} = \mathcal{O}(n).$ \hfill \qed
        \end{proof}
        
    \subsection{Price of Stability}
    \pos*
            \begin{proof}
            As the price of stability is smaller than or equal to the price of anarchy, we can follow from Corollary \ref{cor:poa4}, that the price of stability is $\mathcal{O}(1)$ for $c>1$. Additionally, Theorem \ref{thm:poa} indicates that $\text{PoS} = \mathcal{O}(1)$ for $b + c< \frac{1}{n^2}$.
            \hfill \qed
        \end{proof}

\end{document}